\theoremstyle{plain}
\newtheorem{theorem}{\textbf{Theorem}}[section]
\newtheorem{lemma}{\textbf{Lemma}}[section]
\newtheorem{proposition}{\textbf{Proposition}}[section]
\newtheorem{corollary}{\textbf{Corollary}}[section]
\newtheorem{remark}{\textbf{Remark}}[section]
\newtheorem{definition}{\textbf{Definition}}[section]
\def\R3{\mathbb{R}^3} 
\def\R{\mathbb{R}}
\def\F2o{\overline{F_2}}
\def \eps{\varepsilon}
\newcommand{\beq}[0]{\begin{equation}}
\newcommand{\eeq}[0]{\end{equation}}
\newcommand{\Zn}[0]{Zn\textsubscript{45}Au\textsubscript{30}Cu\textsubscript{25}}
\newcommand{\vc}[1]{\mathbf{#1}}
\newcommand{\mt}[1]{\mathsf{#1}}
\DeclareMathOperator{\diag}{diag}
\DeclareMathOperator{\tr}{tr}
\DeclareMathOperator{\cof}{\mt{cof}}
\DeclareMathOperator{\rank}{rank}
\def\blfootnote{\gdef\@thefnmark{}\@footnotetext}
\begin{document}

\title{On the cofactor conditions and further conditions of supercompatibility between phases}

\author{
{\sc Francesco Della Porta}\footnote{Mathematical Institute, University of Oxford, Oxford OX2 6GG, UK \textit{dellaporta@maths.ox.ac.uk}}
}
%
%
\date{\vspace{-5ex}}
\maketitle

\blfootnote{\noindent\rule{6.8cm}{0.4pt}\\
\textbf{Acknowledgements:} This work was supported by the Engineering and Physical Sciences Research Council [EP/L015811/1]. The author would like to thank Xian Chen for the useful discussions and the invitation to spend a period in her group at the Hong Kong University of Science and Technology where this work began. The author would also like to thank John Ball for his valuable suggestions and discussions. Thanks also to Xian Chen, Yintao Song and Richard James for allowing me to include Figure \ref{microstrutture star}. The author would
like to acknowledge the two anonymous reviewers for improving this paper with their comments.\\
\textbf{Declarations of interest:} none.
}

\abstract{In this paper we improve the understanding of the cofactor conditions, which are particular conditions of geometric compatibility between austenite and martensite, that are believed to influence reversibility of martensitic transformations. We also introduce a physically motivated metric to measure how closely a material satisfies the cofactor conditions, as the two currently used in the literature can give contradictory results. We introduce a new condition of super-compatibility between martensitic laminates, which potentially reduces hysteresis and enhances reversibility. Finally, we show that this new condition of super-compatibility is very closely satisfied by
\Zn, the first of a class of recently discovered materials, fabricated to closely satisfy the cofactor conditions, and undergoing ultra-reversible martensitic transformation.}

\medskip\noindent
\textbf{Keywords:} Martensitic phase transformation, Compatibility, Cofactor conditions, Microstructures, Reversibility.
\section{Introduction}
The cofactor conditions are particular conditions of supercompatibility between phases in martensitic transformations. These include, among other conditions, that the middle eigenvalue $\lambda_2$ of the martensitic transformation matrices is equal to one, which has formerly been shown to influence the hysteresis of martensitic transformations (see e.g., \cite{JamesMuller}). The cofactor conditions allow finely twinned martensitic variants to be compatible with austenite, independently of the volume fraction, across a plane. Due to this special compatibility, the cofactor conditions have been conjectured to influence reversibility of the phase transitions, first in \cite{JZ} and later in \cite{JamesHyst}. The fabrication of Zn\textsubscript{45}Au\textsubscript{30}Cu\textsubscript{25}, the first material closely satisfying the cofactor conditions, partially confirms this conjecture (see \cite{JamesNew}). Indeed, both the latent heat of the transformation and the critical temperature in Zn\textsubscript{45}Au\textsubscript{30}Cu\textsubscript{25} do not change significantly over $16000$ thermal cycles (see \cite{JamesNew}). Furthermore, the hysteresis loop in this new material seems to be only very slightly affected after $10^5$ cycles of uniaxial compressive loading (see \cite{ChenLoading}). After Zn\textsubscript{45}Au\textsubscript{30}Cu\textsubscript{25}, other alloys closely satisfying the cofactor conditions have been fabricated (see \cite{Chluba} and \cite{ChlubaJames}), whose hysteresis curve does not significantly change after $10^7$ cycles of uniaxial tension. We refer the reader also to \cite{JamesSur} and \cite{JamesSur2} for two reviews on the topic.

From a theoretical point of view, the cofactor conditions were first introduced in \cite{BallJames1}, as conditions of degeneracy for the equations of the crystallographic theory of martensite (see Theorem \ref{thm cof cond} below). Much later, the cofactor conditions were further investigated from a theoretical point of view in \cite{JamesHyst}, where the authors prove that if a martensitic type I/II twin satisfies the cofactor conditions, then it can form exact phase interfaces, that is with a stress-free transition layer, between austenite and a martensitic laminate, independently of the volume fractions within the laminate (see Theorem \ref{TypeI} and Theorem \ref{TypeII} below). Therefore, from a theoretical point of view, these materials can convert extremely easily a laminate of martensite into austenite and vice versa, without any elastic stress, and hence without the need to incur plastic deformation. The condition $\lambda_2=1$ guarantees already the possibility of having exact stress-free austenite-martensite phase interfaces. However, during martensitic transformations, nucleations often occur at different points of the sample \cite{Inamura}, and, without further compatibility, the single variants of martensite cannot grow further and stay stress-free after they have met, unless plastic deformations take place.

The aim of this paper is to study further the cofactor conditions, with a particular focus on cubic to monoclinic transformations, such as for Zn\textsubscript{45}Au\textsubscript{30}Cu\textsubscript{25}. The case of cubic to orthorhombic transformations, which is relevant for the new materials in \cite{Chluba,ChlubaJames}, is a special case of cubic to monoclinic, and hence all our results apply. {Here and below, by cubic to monoclinic transformation we mean cubic to monoclinc II (see e.g., \cite[Table 4.4]{Batt}), where the axis of monoclinic symmetry corresponds to a $\langle1 0 0\rangle_{cubic}$ direction. In the same way, for cubic to orthorhombic transformations we implicitly assume that the axis of orthorhombic symmetry corresponds to a $\langle1 0 0\rangle_{cubic}$ direction (see e.g., \cite[Table 4.2]{Batt}).}

{In Section \ref{cofac cond sec} we recall some useful results from twinning theory and the definition of cofactor conditions.}
In Section \ref{cofac cond secTH} we prove that once the cofactor conditions are satisfied by a type I/II twin, they are satisfied by all symmetry related twins. As a consequence, in the cubic to monoclinic case, if a twin satisfies the cofactor conditions, then eleven other different twins enjoy the same property (see Table \ref{Table 1}). This generalises previous work in \cite{JamesHyst} (and Table 2 therein) where only three such twins were noted. The presence of multiple twins satisfying the cofactor conditions gives to a material many possibilities to change phase without inducing any elastic stress. We prove that the same pair of martensitic variants cannot satisfy at the same time the cofactor condition as a type I and a type II twin. This result is puzzling because, as discussed below and in Section \ref{metricclose}, in Zn\textsubscript{45}Au\textsubscript{30}Cu\textsubscript{25} the same pair of martensitic variants seems to satisfy very closely the cofactor conditions with both the type I and the type II twin generated by the same pair of martensitic variants. However, being close is a matter of metric, that is, of how we measure the cofactor conditions. Indeed, as the cofactor conditions can never be satisfied exactly by real materials, in Section \ref{metricclose} we discuss how these are measured in the literature, and introduce a new metric which we believe to be related to reversibility. We find that in Zn\textsubscript{45}Au\textsubscript{30}Cu\textsubscript{25}, the stress required to deform austenite in such a way that is exactly compatible (that is with no interface layer) with a laminate of martensite is very small for type II twins, and almost ten times bigger for type I twins. Therefore, according to our new metric, it seems that Zn\textsubscript{45}Au\textsubscript{30}Cu\textsubscript{25} satisfies the cofactor conditions with type II twins much better than with type I twins.

{In Section \ref{qcsec} we study the possible homogeneous average deformations (also called constant macroscopic  deformation gradients in the literature \cite{BallJames2}) that can be obtained by finely mixing two unstressed martensitic variants, and we study which are compatible with austenite across a plane. Surprisingly, if the cofactor conditions are satisfied just by the type I (or just by the type II) twinning system the set of average deformation gradients which are compatible with austenite across a plane is of the same dimension as in standard shape-memory alloys.}

{In Section \ref{StarSec} we introduce a new condition of super-compatibility between phases which supplements the cofactor conditions. We call the twins satisfying these conditions {\it star twins}. {Let $\mt V_1,\mt V_2\in\R^{3\times3}_{Sym^+}$ be two deformation gradients related to two different martensitic variants. Let $\vc b_{12},\vc m_{12}\in\R^3$, and $\mt R_{12}\in SO(3)$ be a solution to the twinning equation for $\mt V_1,\mt V_2$ (that is such that \eqref{intror3} below is satisfied, but see also Section \ref{cofac cond sec} for further details). Here, $SO(3)$ denotes the group of rotations, while $\vc b_{12},\vc m_{12}$ characterise the twinning elements: the twinning shear is given by $s=|\vc b_{12}||\mt V_1^{-1}\vc m_{12}|$, $\vc b^*_{12}=\frac{\vc b_{12}}{|\vc b_{12}|}$ is the direction of shear, and $\vc m^*_{12}=\frac{\mt V_1^{-1}\vc m_{12}}{|\mt V_1^{-1}\vc m_{12}|}$ gives the normal to the twin plane, so that $\mt R_{12}\mt V_2 = (\mt 1+s\vc b_{12}^*\otimes \vc m_{12}^*)\mt V_1$ (see \cite{Batt}). 

If $(\mt V_1,\vc b_{12},\vc m_{12})$ satisfies the cofactor conditions (see (CC1)--(CC3) and Theorem \ref{thm cof cond} in Section \ref{cofac cond sec} below), then Theorem 7 and Theorem 8 in \cite{JamesHyst} (reported here as Theorem \ref{TypeI} and Theorem \ref{TypeII}) imply the existence of $\vc a_1,\vc a_2,\vc n_1,\vc n_2\in\R^3,$ and $\hat{\mt R}_{12}\in SO(3)$ such that}
\begin{align}
\label{intror1}
\hat{\mt R}_{12}\bigl((1-\mu)\mt V_1	+ \mu \mt R_{12}\mt V_2\bigr) &= \mt 1 + \vc b_{12} \otimes \bigl((1-\mu) \vc n_1 + \mu\vc n_2\bigr),\qquad\text{for type I twins},\\
\label{intror2}
\hat{\mt R}_{12}\bigl((1-\mu)\mt V_1	+ \mu \mt R_{12}\mt V_2\bigr) &= \mt 1 +  \bigl((1-\mu) \vc a_1 + \mu\vc a_2\bigr)\otimes \vc m_{12},\qquad\text{for type II twins},\\
\label{intror3}
\mt R_{12}\mt V_2 - \mt V_1 &=  \vc b_{12}\otimes \vc m_{12},
\end{align}
for every $\mu\in[0,1]$. {In \eqref{intror1} and \eqref{intror2}, for every $\mu\in[0,1]$ the triples $(\mu,\vc b_{12},{(1-\mu) \vc n_1 + \mu\vc n_2})$ and $(\mu,(1-\mu) \vc a_1 + \mu\vc a_2,\vc m_{12})$ are solutions to the equation of the phenomenological theory of martensite crystallography for $\mt V_1,\mt V_2$ (see \eqref{auste marte} below or \cite{Batt}).} For any fixed $\mu_0\in[0,1]$ the average deformation gradient
\beq
\label{laminateNoInter}
\mt F_0 = \hat{\mt R}_{12}\bigl((1-\mu_0)\mt V_1	+ \mu_0 \mt R_{12}\mt V_2\bigr),
\eeq
is compatible with austenite without an interface layer (cf. Figure \ref{CC Standard}), and can hence easily propagate in austenite. We call such a constructed $\mt F_0$, an \textit{exactly compatible} laminate generated by $\mt V_1,\mt V_2$, of volume fraction $\mu_0$. 
\begin{figure}
\centering
\begin{tikzpicture}[scale = 0.8]
\draw[thin] (-1,4) -- (-4,1);
\draw[thin] (-1,4) -- (-1,1);
\draw[thin] (-1,4) -- (-5,3);
\draw[thin] (-7,1) -- (-5,3);
\draw[thin] (-5,3) -- (-5,6);
\draw[thin] (-8,5.25) -- (-5,6);
\draw[thin] (-8,3) -- (-5,6);

\draw [->,thick] (-6.5,6-0.375) -- (-6.5-0.4472/2,6-0.375+0.4472*4/2); 
\filldraw [red] (-6.5-0.4472/2,6-0.375+0.4472*4/2) circle (0pt) node[anchor=west,black] {$\vc o_a$};

\draw [->,thick] (-5,4.5) -- (-5+0.922,4.5); 
\filldraw [red] (-5+0.922,4.5) circle (0pt) node[anchor=west,black] {$\vc o_b$};

\draw [->,thick] (-6,2) -- (-6-0.922/1.4142,2 +0.922/1.4142 ); 
\filldraw [red] (-6-0.922/1.4142,2 +0.922/1.4142) circle (0pt) node[anchor=west,black] {$\vc m$};

\filldraw [red] (-7.5,4.75) circle (0pt) node[anchor=north,black] {$\tilde{\mt R}_a\mt U$};
\filldraw [red] (-6.,3.85) circle (0pt) node[anchor=north,black] {$\tilde{\mt R}_b\mt V$};
\filldraw [red] (-4.,2.65) circle (0pt) node[anchor=north,black] {$\tilde{\mt R}_a\mt U$};
\filldraw [red] (-2.5,1.75) circle (0pt) node[anchor=north,black] {$\tilde{\mt R}_b\mt V$};

\draw[thin] (2,5) -- (8,7);
\draw[thin] (2+1.4/3,5-1.4) -- (8+1.4/3,7-1.4);
\draw[thin] (2+2.5/3,5-2.5) -- (8+2.5/3,7-2.5);
\draw[thin] (2+3.9/3,5-3.9) -- (8+3.9/3,7-3.9);

\filldraw [red] (5-0.5/3,6-0.5) circle (0pt) node[anchor=north,black] {$\tilde{\mt R}_a\mt U$};
\filldraw [red] (5+0.7/3,6-1.7) circle (0pt) node[anchor=north,black] {$\tilde{\mt R}_b\mt V$};
\filldraw [red] (5+2/3,6-3) circle (0pt) node[anchor=north,black] {$\tilde{\mt R}_a\mt U$};
\filldraw [red] (-2.5,1.75) circle (0pt) node[anchor=north,black] {$\tilde{\mt R}_b\mt V$};

\draw [->,thick] (6,5+4/3) -- (6-0.922/3.1623,5+4/3+3*0.922/3.1623); 
\filldraw [red] (6-0.922/3.1623,5+4/3+3*0.922/3.1623) circle (0pt) node[anchor=west,black] {$\vc m$};

\filldraw [red] (-3,5.2) circle (0pt) node[anchor=south,black] {$\mt 1$};
\filldraw [red] (4,6.2) circle (0pt) node[anchor=south,black] {$\mt 1$};

\filldraw [red] (-4,7.) circle (0pt) node[anchor=south east,black] {Austenite};
\filldraw [red] (5.5,8.) circle (0pt) node[anchor=south east,black] {Austenite};
\filldraw [red] (-9,2.) circle (0pt) node[anchor=south east,black,rotate=270] {Martensite};
\filldraw [red] (9.0,3.5) circle (0pt) node[anchor=south east,black,rotate=270] {Martensite};

\end{tikzpicture}
\caption{\label{CC Standard} Example of exactly compatible laminates respectively when the cofactor conditions are satisfied by type I and by type II twins (cf. Theorem \ref{TypeI} and Theorem \ref{TypeII}). {In this picture, $\mt 1$ represents the identity matrix, the deformation gradient for the undistorted austenite phase, and denotes the austenite region. The regions denoted by $\tilde{\mt R}_a\mt U,\tilde{\mt R}_b\mt V$ are regions occupied by martensite.}}
\end{figure}
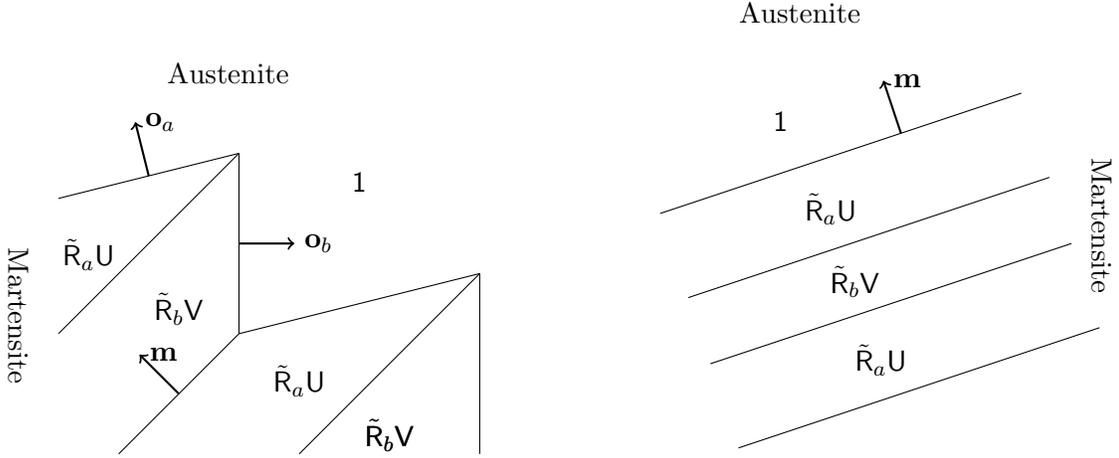
In cubic to monoclinic phase transitions, in general, given $\mt F_0$ as in \eqref{laminateNoInter}, there exists no exactly compatible laminate $\mt F_1$ generated by the martensitic variants $\mt V_3,\mt V_4\in\R^{3\times3}_{Sym^+}$, of volume fraction $\mu_1\in[0,1]$, and such that
\beq
\label{comptralam}
\rank \bigl(\mt F_0 -\mt F_1) =  1,
\eeq
that is, such that $\mt F_0$ and $\mt F_1$ are different but compatible across a plane. Such compatibility is possible only for specific values of $\mu_0,\mu_1$. However, in general, given $\mt F_0$ as in \eqref{laminateNoInter} there exists at most one exactly compatible laminate $\mt F_1$, such that \eqref{comptralam} is satisfied. Exceptions to this fact occur when $(\mt V_1,\vc b_{12},\vc m_{12})$ is a \textit{star-twin} (or a \textit{half-star twin}), {introduced in Definition \ref{Def star I} and Definition \ref{Def star} below}. In this case, there exist values of $\mu_0$ such that, given $\mt F_0$ as in \eqref{laminateNoInter}, there exist three exactly compatible laminates $\mt F_1,\mt F_2,\mt F_3\in\R^{3\times3}$ (resp. two exactly compatible laminates in the case of half-star twins) with 
\beq
\label{startwinsDefIntro}
\rank \bigl(\mt F_i -\mt F_j)= 1, \qquad\text{for every $i,j=0,1,2,3$.}
\eeq
Furthermore, any three of $\mt F_0,\mt F_1,\mt F_2,\mt F_3$ are linearly independent (see Figure \ref{starlaminatesII}).
For general cubic to monoclinic transformations there are four deformation parameters determining the transformation strain (cf. $a,b,c,d$ in \eqref{cubictomono}). The cofactor conditions impose two relations between these four parameters, so that there is a two-parameter family that satisfies them. For the existence of star twins a further relation has to be satisfied, reducing the set of possible deformation parameters to a one-dimensional family.
The compatibility between different laminates which form an exact interface with austenite is only on average. Nevertheless, this allows three different laminates, nucleated in different regions of the sample, to grow further after they meet, and not to stop due to incompatibility. Also, as emphasised in Remark \ref{3dInterf}, in the presence of type II star twins macroscopically curved interfaces whose normal does not lie in a plane are possible between austenite and martensite without an interface layer (cf. also Figure \ref{starCrosslaminatesII} and Figure \ref{starCrosslaminatesIIbis}).\\
\begin{figure}
\centering
\begin{tikzpicture}[scale=0.80]
\draw[thin] (-0.6,6) -- (-6,5.1);
\draw[thin] (-0.6,5.1) -- (-6,4.2);
\draw[thin] (-0.6,4) -- (-6,3.1);
\draw[thin] (-0.6,3.1) -- (-6,2.2);
\draw[thin] (-0.6,2) -- (-6,1.1);
\draw[thin] (-0.6,1.1) -- (-6,0.2);
\draw[thin] (-0.6,0) -- (-6,-1);
\draw[thin] (0.6,6) -- (6,5.1);
\draw[thin] (0.6,5.1) -- (6,4.2);
\draw[thin] (0.6,4) -- (6,3.1);
\draw[thin] (0.6,3.1) -- (6,2.2);
\draw[thin] (0.6,2) -- (6,1.1);
\draw[thin] (0.6,1.1) -- (6,0.2);
\draw[thin] (0.6,0) -- (6,-1);
\filldraw [red] (0,1.5) circle (0pt) node[anchor=west,black, rotate=90] {Interface layer};

\draw[thin] (12.6,{6 + 3}) -- (7.2,5.1);
\draw[thin] (12.6,{5.1 + 3}) -- (7.2,4.2);
\draw[thin] (12.6,{4 + 3}) -- (7.2,3.1);
\draw[thin] (12.6,{3.1 + 3}) -- (7.2,2.2);
\draw[thin] (12.6,{2 + 3}) -- (7.2,1.1);
\draw[thin] (12.6,{1.1 + 3}) -- (7.2,0.2);
\draw[thin] (12.6,{0 + 3}) -- (7.2,-1);
\filldraw [red] (6.6,1.5) circle (0pt) node[anchor=west,black, rotate=90] {Interface layer};

\draw [->,thick] ({-2.4-0.2},5.6) -- ({-2.55-0.2},6.5); 
\filldraw [red] (-2.5,6.3) circle (0pt) node[anchor=west,black] {$\vc n_\alpha$};
\draw [->,thick] ({2.4+0.2},5.6) -- ({2.55+0.2},6.5); 
\filldraw [red] (2.3,6.3) circle (0pt) node[anchor=east,black] {$\vc n_\beta$};
\draw [->,thick] (9.9,7.05) -- ({9.9-0.53},{7.05 + 0.74}); 
\filldraw [red] ({9.9-0.53},{7.05 + 0.74}) circle (0pt) node[anchor=east,black] {$\vc n_\gamma$};

\filldraw [red] (-5,4.4) circle (0pt) node[anchor=south,black] {$\mt R_1\mt V_1$};
\filldraw [red] (-5,3.5) circle (0pt) node[anchor=south,black] {$\mt R_2\mt V_2$};
\filldraw [red] (-5,2.4) circle (0pt) node[anchor=south,black] {$\mt R_1\mt V_1$};
\filldraw [red] (-5,1.5) circle (0pt) 
node[anchor=south,black] {$\mt R_2\mt V_2$};
\filldraw [red] (-5,0.4) circle (0pt) node[anchor=south,black] {$\mt R_1\mt V_1$};
\filldraw [red] (-5,-0.5) circle (0pt) 
node[anchor=south,black] {$\mt R_2\mt V_2$};

\filldraw [red] (5,4.4) circle (0pt) node[anchor=south,black] {$\mt R_3\mt V_3$};
\filldraw [red] (5,3.5) circle (0pt) node[anchor=south,black] {$\mt R_4\mt V_4$};
\filldraw [red] (5,2.4) circle (0pt) node[anchor=south,black] {$\mt R_3\mt V_3$};
\filldraw [red] (5,1.5) circle (0pt) 
node[anchor=south,black] {$\mt R_4\mt V_4$};
\filldraw [red] (5,0.4) circle (0pt) node[anchor=south,black] {$\mt R_3\mt V_3$};
\filldraw [red] (5,-0.5) circle (0pt) 
node[anchor=south,black] {$\mt R_4\mt V_4$};

\filldraw [red] (8.5,5.1) circle (0pt) node[anchor=south,black,rotate=35] {$\mt R_5\mt V_5$};
\filldraw [red] (8.5,{5.1-0.9}) circle (0pt) node[anchor=south,black,rotate=35] {$\mt R_6\mt V_6$};
\filldraw [red] (8.5,{5.1-2}) circle (0pt) node[anchor=south,black,rotate=35] {$\mt R_5\mt V_5$};
\filldraw [red] (8.5,{5.1-2.9}) circle (0pt) 
node[anchor=south,black,rotate=35] {$\mt R_6\mt V_6$};
\filldraw [red] (8.5,{5.1-4}) circle (0pt) node[anchor=south,black,rotate=35] {$\mt R_5\mt V_5$};
\filldraw [red] (8.5,{5.1-4.9}) circle (0pt) 
node[anchor=south,black,rotate=35] {$\mt R_6\mt V_6$};

\filldraw [red] (-4,6.2) circle (0pt) node[anchor=south,black] {$\mt 1$};
\filldraw [red] (4,6.2) circle (0pt) node[anchor=south,black] {$\mt 1$};
\filldraw [red] ({9.9-0.53+1},{7.05 + 0.74+0.5}) circle (0pt) node[anchor=south,black] {$\mt 1$};

\draw [<->,thin] (-3.89,5.29) -- (-3.78,4.63); 
\filldraw [red] (-3.78,4.63) circle (0pt) node[anchor=south west,black] {$\eps\mu$};
\draw [<->,thin] (-3.76,4.51) -- (-3.6,3.55); 
\filldraw [red] (-3.6,3.55) circle (0pt) node[anchor=south west,black] {$\eps(1-\mu)$};
\draw [<->,thin] (3.89,5.29) -- (3.78,4.63); 
\filldraw [red] (3.78,4.63) circle (0pt) node[anchor=south east,black] {$\eps\mu$};
\draw [<->,thin] (3.76,4.51) -- (3.6,3.55); 
\filldraw [red] (3.6,3.55) circle (0pt) node[anchor=south east,black] {$\eps(1-\mu)$};

\draw [<->,thin] (10.5,7.48) -- ({10.5+0.53*0.7},{7.48-0.74*0.7}); 
\filldraw [red] ({10.5+0.53*0.75},{7.48-0.74*0.75}) circle (0pt) node[anchor=south east,black,rotate=35] {$\eps\mu$};
\draw [<->,thin] ({10.5+0.53*1.7},{7.48-0.74*1.7}) -- ({10.5+0.53*0.8},{7.48-0.74*0.8}); 
\filldraw [red] ({10.5+0.53*1.75},{7.48-0.74*1.75}) circle (0pt) node[anchor=south east,black,rotate=35] {$\eps(1-\mu)$};

\filldraw [red] (5.5,8.) circle (0pt) node[anchor=south east,black] {Austenite};
\filldraw [red] (-7.2,1.) circle (0pt) node[anchor=south east,black,rotate=270] {Martensite};
\end{tikzpicture}
\caption{\label{starlaminatesII} Example of average compatibility between three different exactly compatible laminates occurring in type II star twins. {In this picture, $\mt 1$ represents the identity matrix, the deformation gradient for the undistorted austenite phase, and denotes the austenite region. Here, $\mt R_1,\dots,\mt R_6\in SO(3)$ are rotation matrices, and $\mt V_1,\dots, \mt V_6\in \R^{3\times 3}_{Sym^+}$ represent six deformation gradients corresponding to six martensitic variants (possibly $\mt V_i=\mt V_j$ for $i\neq j$).} We have $\mt R_i\mt V_i = \mt 1 + \vc a_i\otimes \vc n_\alpha$ for $i=1,2$, $\mt R_i\mt V_i = \mt 1 + \vc a_i\otimes \vc n_\beta$ for $i=3,4$, and $\mt R_i\mt V_i = \mt 1 + \vc a_i\otimes \vc n_\gamma$ for $i=5,6$, for some $\vc a_1,\dots,\vc a_6\in\R^3,\vc n_\alpha,\vc n_\beta,\vc n_\gamma\in\mathbb{S}^2$. Furthermore, there exist $\mu\in(0,1),\vc a^*\in\R^3$ such that $\mu \mt R_1\mt V_1 +(1-\mu)\mt R_2\mt V_2 = \mt 1 + \vc a^*\otimes \vc n_\alpha$, $\mu \mt R_3\mt V_3 +(1-\mu)\mt R_4\mt V_4 = \mt 1 + \vc a^*\otimes \vc n_\beta$ and $\mu \mt R_5\mt V_5 +(1-\mu)\mt R_6\mt V_6 = \mt 1 + \vc a^*\otimes \vc n_\gamma$, where $\rank(\mt R_{2i}\mt V_{2i}-\mt R_{2i-1}\mt V_{2i-1})=1$ for $i=1,2,3$. In the centre we have a transition layer which makes the two laminates exactly compatible in the limit $\eps\to 0.$ This image is a projection on a plane, but we remark that $\vc n_\alpha,\vc n_\beta,\vc n_\gamma$ are linearly independent. {The important novelty of star-twins is given by the fact that, unlike well-studied wedge-microstructures (see e.g., \cite[Sec. 7.3]{Batt}), the laminates in Figure \ref{starlaminatesII} above are compatible with austenite without an interface layer. Furthermore, in the case of star twins, the number of different laminates both compatible with austenite and with each other is four and not just two. {Due to this very special condition of compatibility, which occurs just for special values of the deformation parameters (cf. $a,b,c,d$ in \eqref{cubictomono}), and which adds on the cofactor conditions, these twins are called star-twins.}}
}
\end{figure}
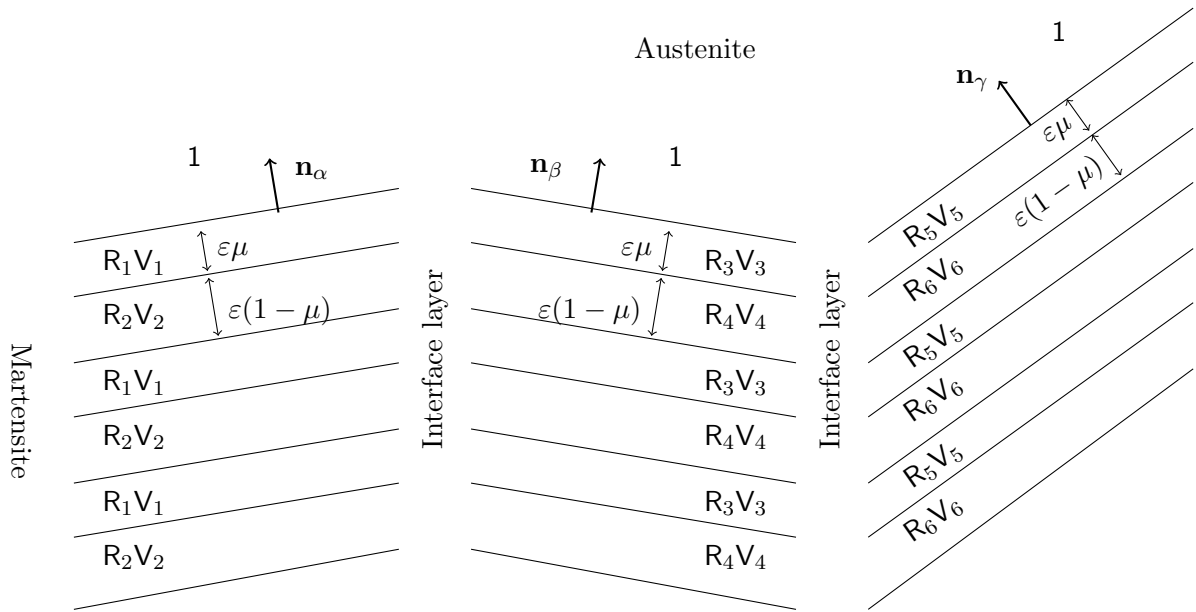
}
In Section \ref{starsecMono} we show under which conditions on the eigenvalues of the transformation matrices for cubic to monoclinic phase transitions a twin satisfying the cofactor conditions is actually a star twin. It is striking to notice that in Zn\textsubscript{45}Au\textsubscript{30}Cu\textsubscript{25}, the smallest eigenvalue is approximately $0.9363$ and the largest is $1.0600$. In order to have a type II star twin, the largest should be $1.0609$, so that the error is about $9\cdot10^{-4}$, very similar to the approximate error of $6\cdot10^{-4}$ which separates $\lambda_2$ from $1$ in this material. Denoting by $\mt U_1^r$ the measured deformation gradient in Zn\textsubscript{45}Au\textsubscript{30}Cu\textsubscript{25} related to the first martensitic variant (see \eqref{cubictomono}), {and by $\mt U_1^e$ the same matrix, but allowing a type II star twin, we have 
$$
\mt U_1^r = \left[\begin{array}{ ccc } 1.0015& 0.0073 & 0\\     0.0073&    1.0591 &  0\\
   0  & 0  &  0.9363 \end{array}\right],\qquad
\mt U_1^e = \left[\begin{array}{ ccc } 1.0010& 0.0078 & 0\\     0.0078&    1.0594 &  0\\
   0 &  0  &  0.9368 \end{array}\right],
$$
which are very close as $\|\mt U_1^r-\mt U_1^e\|\approx 1.1\cdot 10^{-3}$. As a comparison, the closest matrix to $\mt U_1^r$, say $\mt U_1^c$, satisfying the cofactor conditions and describing a cubic to monoclinic phase transformation is such that $\|\mt U_1^r-\mt U_1^e\|\approx 0.9\cdot 10^{-3}$. }
{Experimental images of martensitic microstructures for \Zn\, show the presence of inexact junctions between three different laminates (see Figure \ref{microstrutture star}). The influence of star twins on these microstructures needs however to be confirmed with further experimental investigations. 
\begin{figure}[h]
    \includegraphics[width=1.0\textwidth]{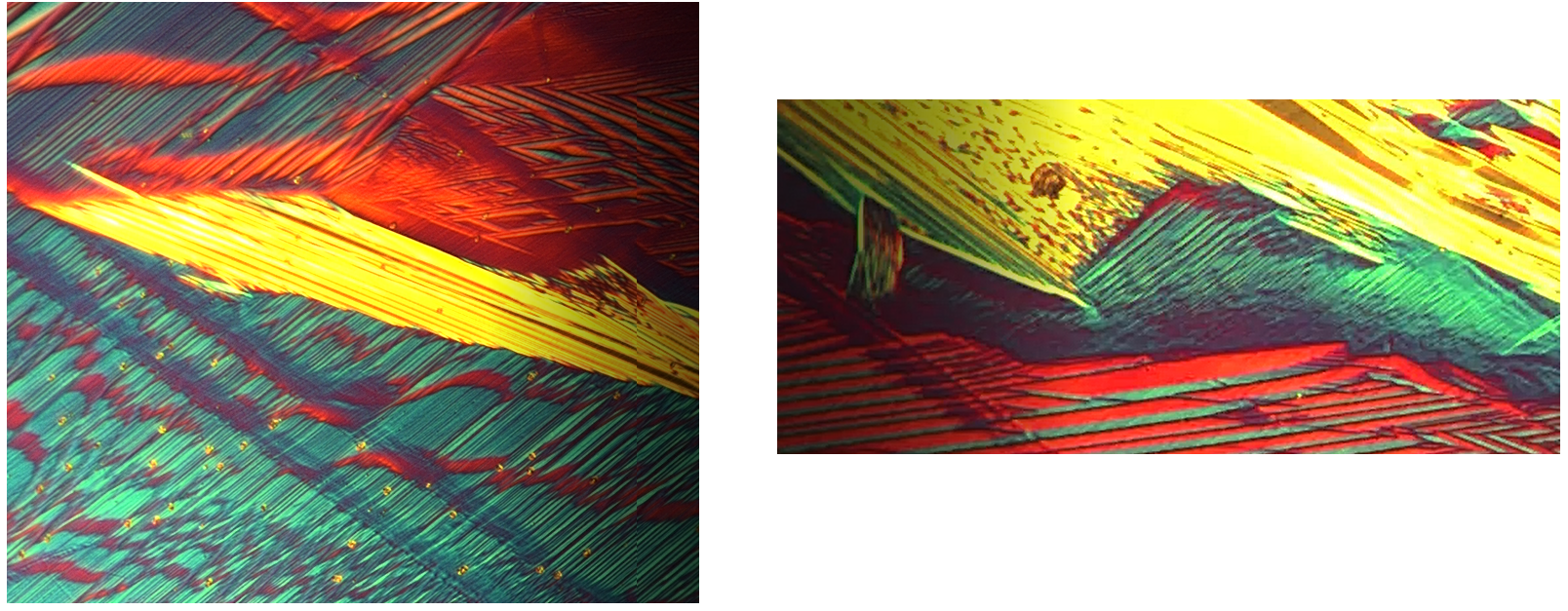}
    \caption{\label{microstrutture star} Examples of junctions between three different laminates observed in \Zn\,(cf. Figure \ref{starlaminatesII}). Courtesy of Xian Chen, Yintao Song and Richard James.}
\end{figure}
We remark that in Zn\textsubscript{45}Au\textsubscript{30}Cu\textsubscript{25} the type I twins are not close to being star twins. }
As proved in \cite{FDP2}, in a first approximation the average deformation gradients in the martensite phase of Zn\textsubscript{45}Au\textsubscript{30}Cu\textsubscript{25} are of the form
\beq
\label{DP result}
\mt 1 + \vc a(\vc x)\otimes\vc n(\vc x),\qquad \text{a.e. $\vc x$}.
\eeq
The presence of star twins makes it extremely easy to construct very complex average deformation gradients of the form \eqref{DP result}, and this might explain the presence of such colourful microstructures, as well as the  ability of this material to \textit{``perform a much wider and more efficient collection of adjustments of microstructure to environmental changes''} (ref. \cite{JamesSur2}).\\

\section{Twinning theory and the cofactor conditions}
\label{cofac cond sec}
In this section we recall the basic results from twinning theory, and we introduce the cofactor conditions following closely \cite{JamesHyst} and \cite{FDP2}.\\ 

Let $\R^{3\times3}_{Sym^+}$ be the set of symmetric positive definite $3\times3$ matrices, and let $\mt U,\mt V\in \R^{3\times3}_{Sym^+}$ be two matrices describing the change of lattice from austenite to two variants of martensite. As the martensitic variants are symmetry related, there exists a rotation $\mt R\in SO(3)$ satisfying $\mt V=\mt {RU}\mt R^T$. The first useful result is the following:
\begin{proposition}[{\cite[Prop. 12]{JamesHyst}}]
\label{EsisteEseR1}
Let $\mt U,\mt V\in\R^{3\times3}_{Sym^+}$, $\mt V=\mt {RU}\mt R^T$ for some $\mt R\in SO(3)$. 
Suppose further that they are compatible in the sense that there is a matrix $\hat {\mt R}\in SO(3)$ such that
\beq
\label{compatib condit}
\hat{\mt R}\mt U-\mt V= \vc b\otimes \vc m,
\eeq
$\vc b$, $\vc m\in\R^3$. Then there is a unit vector $\hat{\vc e}\in\R^3$ such that
\beq
\label{e hat}
\mt V = (-\mt {1}+2\hat{\vc e}\otimes \hat{\vc e})\mt U(-\mt {1}+2\hat{\vc e}\otimes \hat{\vc e}).
\eeq
Conversely, if \eqref{e hat} is satisfied, then there exist {exactly two solutions $(\hat{\mt R},\vc b\otimes\vc m)$, with $\hat{\mt R}\in SO(3)$, $\vc b,\vc m\in\R^3$ such that \eqref{compatib condit} holds.}
\end{proposition}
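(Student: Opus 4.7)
The plan is to prove the two implications separately.

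For the forward direction, I would first square the compatibility equation. Since $\hat{\mt R}\in SO(3)$,
$$
\mt U^2=(\hat{\mt R}\mt U)^T(\hat{\mt R}\mt U)=\mt V^2+\mt V\vc b\otimes\vc m+\vc m\otimes\mt V\vc b+|\vc b|^2\vc m\otimes\vc m,
$$
so that conjugation by $\mt V^{-1}$ gives
$$
\mt C:=\mt V^{-1}\mt U^2\mt V^{-1}=(\mt 1+\vc b\otimes\mt V^{-1}\vc m)^T(\mt 1+\vc b\otimes\mt V^{-1}\vc m),
$$
a rank-one update of the identity. Hence $\mt C\equiv\mt 1$ on the two-plane $\{\vc b,\mt V^{-1}\vc m\}^\perp$, so $1$ is an eigenvalue of $\mt C$. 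Moreover $\det\mt C=(\det\mt U/\det\mt V)^2=1$ because $\mt U$ and $\mt V$ are orthogonally conjugate, so the remaining two eigenvalues are reciprocal, and in particular the middle eigenvalue of $\mt C$ equals $1$. This places the problem exactly in the setting of the classical Ball--James twinning theorem, which provides two explicit parametrisations of $(\hat{\mt R},\vc b,\vc m)$ in terms of the spectral data of $\mt C$, labelled as the type I and type II solutions.

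The next step is to read off $\hat{\vc e}$ from each parametrisation. For type I, $\vc m$ is a specific linear combination of the eigenvectors of $\mt C$ associated with eigenvalues $\lambda_\pm$; setting $\hat{\vc e}:=\vc m/|\vc m|$, I would verify the algebraic identity
$$
(-\mt 1+2\hat{\vc e}\otimes\hat{\vc e})\mt U(-\mt 1+2\hat{\vc e}\otimes\hat{\vc e})=\mt U-2\vc u\otimes\hat{\vc e}-2\hat{\vc e}\otimes\vc u,\qquad\vc u:=\mt U\hat{\vc e}-(\hat{\vc e}\cdot\mt U\hat{\vc e})\hat{\vc e},
$$
and compare term by term with the explicit rank-two expression for $\mt V-\mt U$ obtained by substituting the Ball--James formulas into $\mt V=\hat{\mt R}\mt U-\vc b\otimes\vc m$. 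The type II family is handled by the analogous choice $\hat{\vc e}:=\mt U^{-1}\vc m/|\mt U^{-1}\vc m|$, or, more conceptually, by exchanging the roles of $\mt U$ and $\mt V$ via the transpose of the compatibility equation, which swaps the two solution types.

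For the converse, given $\mt V=\mt Q\mt U\mt Q$ with $\mt Q=-\mt 1+2\hat{\vc e}\otimes\hat{\vc e}$, the identity above forces $\mt V-\mt U$ to be a symmetric rank-two tensor supported on $\mathrm{span}\{\hat{\vc e},\vc u\}$. Expressing this symmetrised rank-two tensor as a rotation $\hat{\mt R}\in SO(3)$ applied to a rank-one dyad $\vc b\otimes\vc m$ admits exactly two solutions, mirroring the two distinct rank-one $\pm$-decompositions of an indefinite $2\times2$ symmetric matrix; these recover the type I choice $\vc m\parallel\hat{\vc e}$ and the type II choice $\vc b\parallel\hat{\vc e}$, and simultaneously give the uniqueness claim ``exactly two solutions''. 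The main obstacle is the algebraic book-keeping in the forward direction's second step: tracking signs and normalisations when comparing the $180^\circ$ conjugation $\mt Q\mt U\mt Q$ with the explicit Ball--James expression for $\mt V$, so as to conclude actual equality rather than merely orthogonal similarity with the same spectrum.
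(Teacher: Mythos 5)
The paper states this as a citation from \cite{JamesHyst} (Proposition 12 there) and does not reprove it, so there is no in-paper proof against which to compare; what follows is an assessment of your plan on its own terms.

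For the forward implication your opening reduction is fine: squaring $\hat{\mt R}\mt U=\mt V+\vc b\otimes\vc m$ and conjugating by $\mt V^{-1}$ does show that $\mt C=\mt V^{-1}\mt U^2\mt V^{-1}$ has $1$ as an eigenvalue and $\det\mt C=1$, hence middle eigenvalue $1$ (minor slip: $\{\vc b,\mt V^{-1}\vc m\}^\perp$ is generically a \emph{line}, not a two-plane). But everything of substance is in the verification you defer, and, as written, your plan does not use enough of the hypothesis $\mt V=\mt R\mt U\mt R^T$ to make it go through. Up to the point where you invoke Ball--James, the only consequence of $\mt V=\mt R\mt U\mt R^T$ you exploit is $\det\mt U=\det\mt V$, which is strictly weaker than equality of spectra, and with only equal determinants the conclusion is simply false. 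Take $\mt U=\diag(1,2,\tfrac12)$ and $\mt V=\diag(1,\tfrac32,\tfrac23)$: both have determinant $1$, and $\mt V^{-1}\mt U^2\mt V^{-1}=\diag(1,\tfrac{16}{9},\tfrac9{16})$ has middle eigenvalue $1$, so your first two steps run verbatim; yet $\tr\mt U=\tfrac72\neq\tfrac{19}{6}=\tr\mt V$, so there is no unit $\hat{\vc e}$ with $\mt V=(-\mt 1+2\hat{\vc e}\otimes\hat{\vc e})\mt U(-\mt 1+2\hat{\vc e}\otimes\hat{\vc e})$. Consequently the ``compare term by term'' step must fail here, which means it cannot be a purely formal manipulation of the Ball--James parametrisation of $\mt C$; the equal-spectrum hypothesis (e.g.\ via $\tr\mt U^k=\tr\mt V^k$) has to be injected precisely there, and you do not show how. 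You flag the danger yourself (``orthogonal similarity with the same spectrum'') but do not discharge it. There is also a concrete error in the type II branch: from \eqref{11} one has $\vc b_{II}=\mt U\hat{\vc e}$, so the correct reconstruction is $\hat{\vc e}\propto\mt U^{-1}\vc b$, not $\hat{\vc e}\propto\mt U^{-1}\vc m$ as you write.

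For the converse, the description of what has to be decomposed is off: the quantity that must be rank one is $\hat{\mt R}\mt U-\mt V$, which is $(\hat{\mt R}-\mt 1)\mt U-(\mt V-\mt U)$, not ``$\mt V-\mt U$ expressed as a rotation applied to a dyad'', so the analogy with $\pm$-decompositions of an indefinite $2\times 2$ symmetric form does not capture the structure of \eqref{compatib condit}. This direction is Mallard's theorem, and the clean route is constructive: take the two explicit candidates from \eqref{10}--\eqref{11}, verify directly that each yields an $\hat{\mt R}\in SO(3)$ solving \eqref{compatib condit}, and then appeal to the uniqueness part of the Ball--James result (applied to $\mt C$ with middle eigenvalue $1$) to conclude there are no more than two.
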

Equation \eqref{compatib condit} is called the compatibility condition for two variants of martensite, and, if \eqref{e hat} holds, has always two solutions $(\hat{\mt  R}_I, \vc b_I\otimes \vc m_I)$ and $(\hat {\mt R}_{II},\vc b_{II}\otimes \vc m_{II})$, where $\vc b_I,\vc b_{II},\vc m_I,\vc m_{II}$ are given by (see e.g., \cite{Batt})
\begin{align}
\label{10}
&\vc m_I = \hat{\vc e}, \qquad &&\vc b_{I}=2\Big(\frac{\mt U^{-1}\hat{\vc e}}{|\mt U^{-1}\hat{\vc e}|^2}-\mt U\hat{\vc e}\Big),\\
\label{11}
&\vc m_{II} = 2\Big(\hat{\vc e}-\frac{\mt U^{2}\hat{\vc e}}{|\mt U\hat{\vc e}|^2}\Big),\qquad &&\vc b_{II}=\mt U\hat{\vc e},
\end{align}
and where $\hat{\vc e}$ is as in \eqref{e hat}. If $\hat{\vc  e}$ satisfying \eqref{e hat} is unique up to change of sign, $(\mt U,\vc b_I,\vc m_I)$ and $(\mt U,\vc b_{II},\vc m_{II})$ are respectively called type I and type II twins generated by $\mt U,\mt V$. {In case there exist two different non-parallel unit vectors satisfying \eqref{e hat}, the resulting pair of solutions $(\hat{\mt  R}_I, \vc b_I\otimes \vc m_I)$ and $(\hat {\mt R}_{II},\vc b_{II}\otimes \vc m_{II})$ are called compound twins. It is possible to prove that, even if $\hat{\vc  e}$ satisfying \eqref{e hat} is not unique, there exist just two solution to \eqref{compatib condit}, each of which can be considered as both a type I and a type II twin. More precisely, given two different unit vectors satisfying \eqref{e hat}, namely $\hat{\vc e}_1$ and $\hat{\vc e}_2$, then (see e.g., \cite[Prop. 1]{JamesHyst})
$$
\vc b_I^1\otimes \vc m_I^1 :=2\Big(\frac{\mt U^{-1}\hat{\vc e}_1}{|\mt U^{-1}\hat{\vc e}_1|^2}-\mt U_1\hat{\vc e}_1\Big)\otimes\hat{\vc e}_1= \mt U\hat{\vc e}_2 \otimes 2\Big(\hat{\vc e}_2-\frac{\mt U^{2}\hat{\vc e}_2}{|\mt U\hat{\vc e}_2|^2}\Big)=:\vc b_{II}^2\otimes \vc m_{II}^2.
$$} 
Below, when we refer to type I and type II twin, we assume that there exist an up to a change of sign unique unit vector satisfying \eqref{e hat}. Furthermore, we sometimes abuse of notation and write that $\mt U_1,\mt U_2$ generate a compound twin if the solutions of the twinning equations \eqref{compatib condit} are compound twins.
\begin{remark}
\rm
\label{RemSuper}
{The definition of type I, type II and compound twins given above is not the one that can be usually found in the literature (see e.g., \cite{Batt,PitteriZanzotto98}), but the one that is given in \cite{JamesHyst}. 
For the benefit of the reader, we recall that in the literature twins are divided into five different categories (see \cite{PitteriZanzotto98}): conventional generic, which is divided into type I, type II and compound, non-conventional generic and non-conventional non-generic. \textit{Conventional twins} are the solutions $(\mt U,\vc b,\vc m)$ to \eqref{compatib condit} when there exist an unit vector $\vc e\in \mathbb{R}^3$ such that \eqref{e hat} is satisfied, and such that $(2\vc e\otimes\vc e-\mt 1)\in\mathcal{P}_{a}$, $\mathcal{P}_{a}$ being the symmetry group of austenite. If no such $\vc e\in\mathbb{S}^2$ exists, the  solutions $(\mt U,\vc b,\vc m)$ to \eqref{compatib condit} are called \textit{non-conventional twins}. Furthermore, we say that a solution $(\mt U,\vc b,\vc m)$ to \eqref{compatib condit} is a \textit{generic twin}, if its existence does not depend on the particular values of the transformation strains $\mt U,\mt V$, but only on the symmetry relating $\mt U,\mt V$; otherwise, we call the twin \textit{non-generic}. 
To emphasize the difference from generic conventional twins, what we defined above as type I, type II and compound twins are called type I, type II and compound domains in \cite{JamesHyst}. However the word domain is misleading for readers with mathematical background, and we therefore prefer to keep the word twins throughout this manuscript. The notion given here of type I, type II and compound twin  coincides with the classic one in case of generic conventional twins
, and allows to generalise all results below to non-conventional (and possibly non generic (cf. Remark \ref{RemGen})) twins, without any further technicality, or without entering this further categorisation.
}
\end{remark}

Let us now consider a simple laminate, i.e., a constant macroscopic gradient $\nabla \vc y$ equal a.e. to $\lambda \hat {\mt R}\mt V+(1-\lambda)\mt U$ for some $\lambda\in(0,1)$, some $\mathsf{R} \in SO(3)$ and some martensitic variants $\mathsf{U}, \mathsf{V} \in \mathbb{R}^{3\times 3}_{Sym^+}$ such that $\rank (\mathsf{R}\mathsf{U} - \mathsf{V}) = 1$. 
%
Following \cite{BallJames1,JamesHyst} we focus on the possibility for such $\nabla \vc y$
to be compatible with austenite. 
The existence of $(\mt R,\lambda,\vc a\otimes \vc n)$ solving
\beq
\label{auste marte}
\mt R\bigl[\lambda \hat{\mt  R}\mt V+(1-\lambda)\mt U\bigr]-\mt{1}=\mt R\bigl[\lambda(\mt U+\vc b\otimes \vc m)+(1-\lambda)\mt U\bigr]-\mt {1} = \vc a\otimes \vc n,
\eeq
that is a twinned laminate compatible with austenite, was first studied in \cite{Wechsler} and later in \cite{BallJames1}. 
Lattice deformations and parameters of materials that are usually considered in the literature lead to twins with exactly four solutions to equation \eqref{auste marte}. Nonetheless, in some cases the number of solutions can be just zero, one or two, and, under some particular condition on the lattice parameters, as in the case of the material discovered in \cite{JamesNew}, \eqref{auste marte} is satisfied for all $\lambda \in [0,1]$. The following result gives necessary and sufficient conditions for this to hold:
\begin{theorem}[{\cite[Thm. 2]{JamesHyst}}]
\label{thm cof cond}
Let $\mt U,\mt V\in \R^{3\times3}_{Sym^+}$ be distinct and such that there exist $\hat {\mt R}\in SO(3)$ and $\vc b,\vc m\in\R^3\setminus\{\vc 0\}$ satisfying
$$
\hat{\mt  R} \mt U =\mt  V + \vc b\otimes \vc m.
$$
Then, \eqref{auste marte} has a solution $\mt R\in SO(3)$, $\vc a,\vc n\in \R^3$ for each $\lambda\in [0,1]$ if and only if the following \textbf{cofactor conditions} hold:
\begin{enumerate}[label={(CC\arabic*)}]
\item \label{CC1} The middle eigenvalue $\lambda_2$ of $\mt U$ satisfies $\lambda_2=1$,
\item \label{CC2} $ \vc b\cdot \mt U\cof (\mt U^2-\mt {1})\vc m=0$,
\item \label{CC3} $\tr \mt U^2-\det \mt U^2-\frac{1}{4}|\vc b|^2|\vc m|^2-2\geq 0.
$
\end{enumerate}
\end{theorem}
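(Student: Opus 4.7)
Setting $\mt F(\la) := \mt U + \la \vc b\otimes \vc m$, so that the bracketed expression in \eqref{auste marte} is exactly $\mt F(\la)$, the equation to be solved reads $\mt R\,\mt F(\la) = \mt 1 + \vc a\otimes \vc n$. By the classical rank-one compatibility lemma (see e.g.\ \cite{BallJames1}), this has a solution $(\mt R,\vc a,\vc n)\in SO(3)\times\R^3\times\R^3$ (for the orientations of interest, $\det\mt F(\la)>0$ on $[0,1]$) if and only if the middle eigenvalue of the symmetric matrix $\mt C(\la) := \mt F(\la)^T\mt F(\la)$ equals $1$. Since $\mt C(\la) - \mt 1$ is symmetric, this in turn is equivalent to the pair of conditions
\beq
g(\la) := \det\bigl(\mt C(\la) - \mt 1\bigr) = 0 \quad\text{and}\quad h(\la) := \tr \cof\bigl(\mt C(\la) - \mt 1\bigr) \leq 0,
\eeq
the second condition distinguishing the middle from the extreme eigenvalues: for a singular symmetric $3\times 3$ matrix, the middle eigenvalue vanishes if and only if the product of the other two, which equals $\tr\cof$, is non-positive.

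The first step is to expand $g$ as a polynomial in $\la$. Writing $\mt C(\la) - \mt 1 = \mt K + \mt M(\la)$ with $\mt K := \mt U^2 - \mt 1$ and $\mt M(\la) := \la(\mt U\vc b\otimes \vc m + \vc m\otimes \mt U\vc b) + \la^2 |\vc b|^2\vc m\otimes \vc m$, and applying the $3\times 3$ identity
\beq
\det(\mt A + \mt B) = \det \mt A + \tr(\cof\mt A\,\mt B) + \tr(\mt A\,\cof \mt B) + \det \mt B,
\eeq
two observations simplify the calculation: $\mt M(\la)$ has rank at most two, so $\det\mt M(\la)\equiv 0$; and a direct computation with the Levi--Civita symbol shows that the $\la^3$ and $\la^4$ contributions to $\cof\mt M(\la)$ cancel, yielding $\cof\mt M(\la) = -\la^2 (\mt U\vc b\times \vc m)\otimes (\mt U\vc b\times \vc m)$. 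Consequently $g$ is a quadratic polynomial in $\la$,
\beq
g(\la) = \det(\mt U^2 - \mt 1) + 2\la\,\vc b\cdot \mt U\cof(\mt U^2 - \mt 1)\vc m + \la^2 q,
\eeq
with $q := |\vc b|^2\,\vc m\cdot\cof(\mt U^2-\mt 1)\vc m - (\mt U\vc b\times \vc m)\cdot(\mt U^2-\mt 1)(\mt U\vc b\times \vc m)$. Since $g\equiv 0$ on the infinite set $[0,1]$ forces all three coefficients to vanish, the constant coefficient combined with the sign condition $h(0) = \tr\cof(\mt U^2-\mt 1)\leq 0$ is equivalent to (CC1), and the linear coefficient is precisely (CC2).

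What remains, and is the main algebraic obstacle, is to show that, under (CC1) and (CC2), the conjunction of $q = 0$ and $h(\la)\leq 0$ for every $\la\in[0,1]$ is equivalent to (CC3). I would pick an orthonormal frame of eigenvectors of $\mt U$, so that by (CC1) one has $\mt U^2 - \mt 1 = \diag(\la_1^2-1,\,0,\,\la_3^2-1)$ with $\la_1\leq 1\leq \la_3$, and exploit the factorisation
\beq
\tr\mt U^2 - \det\mt U^2 - 2 = (\la_3^2-1)(1-\la_1^2),
\eeq
which holds as soon as (CC1) is in force. Expanding $q$ and $h(\la)$ in this frame, and using (CC2) to kill the relevant cross components of $\vc b$ and $\vc m$, both conditions should collapse to the non-negativity of $\tr\mt U^2 - \det\mt U^2 - \tfrac14|\vc b|^2|\vc m|^2 - 2$, which is (CC3); I expect the key identity here to be that the $h(\la)\leq 0$ requirement is tightest exactly at the $\la$ which maximises the quadratic $h$, and to reduce to the same scalar invariant. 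The converse direction then follows by reversing each implication: (CC1)--(CC3) together force every coefficient of $g$ to vanish and give $h(\la)\leq 0$ on $[0,1]$, so the rank-one compatibility lemma supplies a solution to \eqref{auste marte} for every $\la\in[0,1]$.
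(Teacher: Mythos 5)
The paper does not give a proof of this result: Theorem~\ref{thm cof cond} is quoted verbatim from \cite[Thm.~2]{JamesHyst}, so there is no in-paper argument to compare against. That said, your route---passing to $\mt C(\la)=\mt F(\la)^T\mt F(\la)$, characterising solvability by $g(\la):=\det(\mt C(\la)-\mt 1)=0$ together with $h(\la):=\tr\cof(\mt C(\la)-\mt 1)\le 0$, and expanding $g$---is indeed the standard one used in the Ball--James/James--Chen literature, and your determinant expansion and the computation $\cof\mt M(\la)=-\la^2(\mt U\vc b\times\vc m)\otimes(\mt U\vc b\times\vc m)$ are both correct.

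There are, however, two substantive gaps. First, you never invoke the twinning structure, and as a result your accounting of the coefficients of $g$ is off. If $\mt U,\mt V$ were arbitrary symmetric positive definite matrices with a rank-one connection, the three coefficients of the quadratic $g$ really would be three independent conditions, and the vanishing of the $\la^2$-coefficient $q$ would be an extra constraint not appearing in the theorem. The resolution is that for symmetry-related variants $\mt V=\mt Q\mt U\mt Q^T$ one has $g(1)=\det(\mt V^2-\mt 1)=\det(\mt U^2-\mt 1)=g(0)$, so the linear and quadratic coefficients of $g$ sum to zero; hence $q=0$ is \emph{automatic} once (CC2) holds and is not a condition to be matched against (CC3). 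Your claim that ``the conjunction of $q=0$ and $h(\la)\le 0$ \dots\ is equivalent to (CC3)'' therefore mislocates where (CC3) actually comes from. Second---and this is the heart of the theorem---the derivation of (CC3) from $h(\la)\le 0$ for all $\la\in[0,1]$ is not carried out at all; it is described only as an expectation (``I expect the key identity here to be\ldots''). Since $h(\la)$ is a quartic in $\la$ and the reduction of its non-positivity on $[0,1]$ to the single scalar inequality $\tr\mt U^2-\det\mt U^2-\tfrac14|\vc b|^2|\vc m|^2-2\ge 0$ is precisely what gives (CC3) its specific form, this is the part of the proof that cannot be waved at and must actually be done; as written the argument is incomplete. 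A minor further point: equation \eqref{auste marte} uses the convention $\hat{\mt R}\mt V=\mt U+\vc b\otimes\vc m$, whereas the hypothesis of the theorem reads $\hat{\mt R}\mt U=\mt V+\vc b\otimes\vc m$; your $\mt F(\la)=\mt U+\la\vc b\otimes\vc m$ silently adopts the former, and this mismatch should be reconciled before one can speak of (CC1) being a statement about the eigenvalues of $\mt U$.
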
 
The condition (CC2) can be rewritten as (see \cite[Coroll. 5]{JamesHyst})
$$
(\vc v_2\cdot \vc b)(\vc v_2\cdot\vc m)= 0,
$$
where $\vc v_2$ is the eigenvector of $\mt U_1$ related to the eigenvalue $\lambda_2=1$. Another equivalent formulation of (CC2) for type I/II twins is given by the following chains of equivalences (see \cite[Prop. 6]{JamesHyst})
\begin{align}
\label{ccperI}
(\vc v_2\cdot \vc b_I)(\vc v_2\cdot\vc m_I)= 0 \Leftrightarrow (\vc v_2\cdot \vc b_I) = 0,\text{ and }(\vc v_2\cdot \vc m_I) \neq 0
\Leftrightarrow |\mt U_1^{-1}\hat{\vc e}|=1,\quad&\text{for type I twins}\\
\label{ccperII}
(\vc v_2\cdot \vc b_{II})(\vc v_2\cdot\vc m_{II})= 0 \Leftrightarrow (\vc v_2\cdot \vc m_{II}) = 0\text{ and }(\vc v_2\cdot \vc b_{II}) \neq 0
\Leftrightarrow |\mt U_1\hat{\vc e}|=1,\quad&\text{for type II twins}
\end{align}
where $\hat{\vc e}$ is given by \eqref{e hat}. 
\begin{remark}
\label{StrictL}
Let $(\mt U,\vc b,\vc m)$ be as in Theorem \ref{thm cof cond} and satisfy \ref{CC1}--\ref{CC3}. Then, \cite[Coroll. 3]{JamesHyst} states that the smallest and the largest eigenvalue of $\mt U$, namely $\lambda_1$ and $\lambda_3$, satisfy $\lambda_1<1<\lambda_3$.
\end{remark}
We now report two results from \cite{JamesHyst} related to the cofactor conditions in type I/II twins. These results state the possibility to have exact stress free interface between a martensitic laminate and austenite as shown in Figure \ref{CC Standard}. {Here and below we denote by $\mathbb{S}^2$ the set of vectors $\vc v\in\R^3$ such that $|\vc v|=1.$}
\begin{theorem}[{\cite[Thm. 7]{JamesHyst}}]
\label{TypeI}
Let $\mt U,\mt V\in\R^{3\times3}_{Sym^+}$ satisfying \eqref{e hat} be such that $\mt R_{I}\in SO(3), \vc b_{I}\in\R^3,\vc m_{I}\in\mathbb{S}^2$ is a type I solution of the twinning equation \eqref{compatib condit}. Suppose further that $(\mt U,\vc b_{I}, \vc m_{I})$ satisfies the cofactor conditions. Then there exist $\mt R_{\mt U},\mt R_{\mt V}$ and $\vc a\in \R^3,\vc n_{\mt U},\vc n_{\mt V}\in\mathbb{S}^2$ such that
$$
\mt R_{\mt U}\mt U=\mt 1+ \vc a\otimes \vc n_{\mt U},\qquad \mt R_{\mt V}\mt V=\mt 1+ \vc a\otimes \vc n_{\mt V},
$$ 
$\mt R_{\mt U}\vc b_I\times\vc a = \vc 0$, $(\vc n_{\mt U}\times\vc n_{\mt V})\cdot \vc m_I = 0$, and $\mt R_{\mt V}=\mt R_{\mt U}\mt R_{I}$.
\end{theorem}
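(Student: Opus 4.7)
Since $(\mt U,\vc b_I,\vc m_I)$ satisfies the cofactor conditions, Theorem \ref{thm cof cond} gives a solution of \eqref{auste marte} for every $\lambda\in[0,1]$. At $\lambda=0$ this is the austenite--$\mt U$ compatibility $\mt R_{\mt U}\mt U-\mt 1=\vc a\otimes\vc n_{\mt U}$ for some $\mt R_{\mt U}\in SO(3)$, $\vc a\in\R^3$, $\vc n_{\mt U}\in\mathbb{S}^2$. At $\lambda=1$, using the twinning identity $\mt R_I\mt V=\mt U+\vc b_I\otimes\vc m_I$ (in the sign convention of \eqref{auste marte}), the natural choice $\mt R_{\mt V}:=\mt R_{\mt U}\mt R_I$ turns the second desired identity $\mt R_{\mt V}\mt V-\mt 1=\vc a\otimes\vc n_{\mt V}$ into $\mt R_{\mt U}(\mt U+\vc b_I\otimes\vc m_I)-\mt 1=\vc a\otimes\vc n_{\mt V}$ with the same rotation and shear vector as at $\lambda=0$. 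Subtracting the two identities yields the key rank-one equation
\begin{equation*}
\mt R_{\mt U}\vc b_I\otimes\vc m_I \;=\; \vc a\otimes(\vc n_{\mt V}-\vc n_{\mt U}),
\end{equation*}
which, since $\vc b_I,\vc m_I\neq\vc 0$, forces $\vc a\parallel \mt R_{\mt U}\vc b_I$ (i.e., $\mt R_{\mt U}\vc b_I\times\vc a=\vc 0$) and $\vc n_{\mt V}-\vc n_{\mt U}\parallel\vc m_I$ (hence $(\vc n_{\mt U}\times\vc n_{\mt V})\cdot\vc m_I=0$). So the whole proof reduces to the alignment question: among the (at most two) classical solutions of the austenite--$\mt U$ compatibility, can we choose one whose shear $\vc a$ is parallel to the rotated twin shear $\mt R_{\mt U}\vc b_I$?

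The main obstacle is establishing precisely this alignment, which is where the cofactor conditions are used in full. By \ref{CC1}, $\mt U^2-\mt 1$ is symmetric with signature $(+,0,-)$, and the classical two-well theorem places both candidate shear vectors $\vc a^\pm$ in the plane spanned by the extremal eigenvectors $\vc v_1,\vc v_3$ of $\mt U$, perpendicular to $\vc v_2$. By the type~I reformulation \eqref{ccperI} of \ref{CC2}, we have $\vc v_2\cdot\vc b_I=0$ and $|\mt U^{-1}\vc m_I|=1$, so $\vc b_I$ also lies in the $(\vc v_1,\vc v_3)$-plane, and hence $\mt R_{\mt U}\vc b_I$ lies in the rotated plane where $\vc a$ already lives. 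This reduces the parallelism to a two-dimensional check, which I would carry out by inserting the simplified type~I expression $\vc b_I=2(\mt U^{-1}\hat{\vc e}-\mt U\hat{\vc e})$ (obtained from \eqref{10} using $|\mt U^{-1}\hat{\vc e}|=1$) into the closed-form expressions for $\vc a^\pm$ and $\vc n^\pm$ in terms of $\lambda_1,\lambda_3,\vc v_1,\vc v_3$, and verifying that one of the sign choices yields the required collinearity.

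Once the alignment is in hand, the second identity reads $\mt R_{\mt V}\mt V-\mt 1=\vc a\otimes(\vc n_{\mt U}+\gamma\vc m_I)$ with $\gamma$ defined by $\mt R_{\mt U}\vc b_I=\gamma\vc a$. Unit length of $\vc n_{\mt V}=\vc n_{\mt U}+\gamma\vc m_I$ is automatic because $\mt V$ has the same spectrum as $\mt U$, and in particular $\lambda_2(\mt V)=1$, so $\mt R_{\mt V}\mt V-\mt 1$ must coincide with one of the two classical rank-one decompositions of $\mt V$, each with a unit normal. All five conclusions of the statement then follow.
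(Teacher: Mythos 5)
The paper itself does not prove this statement: Theorem \ref{TypeI} is quoted as \cite[Thm.~7]{JamesHyst} and used as an external result. So there is no in-paper proof to compare against; what follows evaluates your attempt on its own terms.

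Your general architecture — obtain $\mt R_{\mt U}\mt U-\mt 1=\vc a\otimes\vc n_{\mt U}$ from the $\lambda=0$ case, set $\mt R_{\mt V}:=\mt R_{\mt U}\mt R_I$, subtract to get $\mt R_{\mt U}\vc b_I\otimes\vc m_I=\vc a\otimes(\vc n_{\mt V}-\vc n_{\mt U})$, and read off the two orthogonality/parallelism conclusions — is the right skeleton, and you correctly identify that the content of the theorem is the alignment claim $\mt R_{\mt U}\vc b_I\parallel\vc a$ for a suitable choice of austenite–$\mt U$ interface. However, that alignment is exactly where the proof lives, and you do not carry it out: you state you \emph{would} insert the type~I form of $\vc b_I$ into the closed-form $\vc a^\pm$ and ``verify that one of the sign choices yields the required collinearity,'' but no such verification appears. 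Everything else in the argument is either definitional or elementary linear algebra; leaving out the collinearity computation means the proof is not actually given.

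There are also two smaller issues worth flagging. First, the two-dimensional reduction is stated loosely: you say $\mt R_{\mt U}\vc b_I$ lies ``in the rotated plane where $\vc a$ already lives.'' As written that conflates two different planes. The vector $\vc a$ lies in $\mathrm{span}(\vc v_1,\vc v_3)$, the unrotated eigenplane of $\mt U$; for $\mt R_{\mt U}\vc b_I$ to lie in the same plane you need $\mt R_{\mt U}\vc v_2=\pm\vc v_2$, which is indeed true (from $\mt R_{\mt U}=(\mt 1+\vc a\otimes\vc n_{\mt U})\mt U^{-1}$, together with $\mt U\vc v_2=\vc v_2$ and $\vc n_{\mt U}\perp\vc v_2$), but you neither state nor prove it, and without it the 2D reduction is not justified. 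Second, the claim that $\vc n_{\mt V}=\vc n_{\mt U}+\gamma\vc m_I$ is automatically of unit length because ``$\mt R_{\mt V}\mt V-\mt 1$ must coincide with one of the two classical rank-one decompositions'' is circular: the decomposition $\vc a\otimes\vc n_{\mt V}$ has a scaling freedom, and the fact that the classical decomposition of $\mt V$ uses the \emph{same} $\vc a$ as that of $\mt U$ is precisely the theorem's nontrivial content. A clean argument is to compare invariants: $\det\mt U=\det\mt V$ gives $\vc a\cdot\vc n_{\mt U}=\vc a\cdot\vc n_{\mt V}$, and $\tr\mt U^2=\tr\mt V^2$ together with the expansion of $(\mt 1+\vc a\otimes\vc n)^T(\mt 1+\vc a\otimes\vc n)$ then forces $|\vc n_{\mt V}|=|\vc n_{\mt U}|$.
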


\begin{theorem}[{\cite[Thm. 8]{JamesHyst}}]
\label{TypeII}
Let $\mt U,\mt V\in\R^{3\times3}_{Sym^+}$ satisfying \eqref{e hat} be such that $\mt R_{II}\in SO(3), \vc b_{II}\in\R^3,\vc m_{II}\in\mathbb{S}^2$ is a type II solution of the twinning equation \eqref{compatib condit}. Suppose further that $(\mt U,\vc b_{II}, \vc m_{II})$ satisfies the cofactor conditions. Then there exist $\mt R_{\mt U},\mt R_{\mt V}$ and $\vc a_{\mt U},\vc a_{\mt V}\in \R^3$ such that
$$
\mt R_{\mt U}\mt U=\mt 1+ \vc a_{\mt U}\otimes \vc m_{II},\qquad \mt R_{\mt V}\mt V=\mt 1+ \vc a_{\mt V}\otimes \vc m_{II},
$$ 
and $\mt R_{\mt V}=\mt R_{\mt U}\mt R_{II}$.
\end{theorem}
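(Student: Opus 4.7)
My plan is to find a single rotation $\mt R_{\mt U}$ such that $\mt R_{\mt U}\mt U - \mt 1 = \vc a_{\mt U}\otimes\vc m_{II}$, that is, to realise the twin plane normal $\vc m_{II}$ itself as an austenite--martensite interface normal for $\mt U$; the corresponding statement for $\mt V$ then follows from the twinning equation.

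Since $(\mt U,\vc b_{II},\vc m_{II})$ satisfies the cofactor conditions, Theorem \ref{thm cof cond} guarantees that $\mt R_\lambda(\mt U + \lambda\vc b_{II}\otimes\vc m_{II}) = \mt 1 + \vc a_\lambda\otimes\vc n_\lambda$ is solvable for every $\lambda\in[0,1]$. I would try the ansatz $\vc n_\lambda\equiv\vc m_{II}$; rearranging, the equation becomes $\mt R_\lambda\mt U = \mt 1 + (\vc a_\lambda-\lambda\mt R_\lambda\vc b_{II})\otimes\vc m_{II}$, which collapses to the single $\lambda=0$ problem. Given a solution $(\mt R_{\mt U},\vc a_{\mt U})$ of $\mt R_{\mt U}\mt U = \mt 1 + \vc a_{\mt U}\otimes\vc m_{II}$, the choice $\mt R_\lambda\equiv\mt R_{\mt U}$ and $\vc a_\lambda := \vc a_{\mt U} + \lambda\mt R_{\mt U}\vc b_{II}$ then produces a one-parameter family of austenite--laminate compatibilities with fixed normal $\vc m_{II}$, matching the picture drawn in Figure \ref{CC Standard}.

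The technical heart is to verify that $\vc m_{II}$ is indeed one of the two unit vectors admissible as an austenite--$\mt U$ interface normal. Working in the orthonormal eigenbasis $\{\vc e_1,\vc v_2,\vc e_3\}$ of $\mt U$ with eigenvalues $\lambda_1<1<\lambda_3$ (strict by Remark \ref{StrictL}), the standard symmetric rank-two analysis of
$$
\mt U^2 - \mt 1 = \vc a\otimes\vc n + \vc n\otimes\vc a + |\vc a|^2\vc n\otimes\vc n
$$
forces $\vc n\in\mathrm{span}(\vc e_1,\vc e_3)$ with component ratio $|\nu_1/\nu_3| = \sqrt{(1-\lambda_1^2)/(\lambda_3^2-1)}$. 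On the other hand, from \eqref{11} together with the type II form of CC2 (namely $|\mt U\hat{\vc e}|=1$; see \eqref{ccperII}), one gets the closed form $\vc m_{II} = 2(\mt 1 - \mt U^2)\hat{\vc e}$. Expanding $\hat{\vc e} = \hat e_1\vc e_1 + \hat e_2\vc v_2 + \hat e_3\vc e_3$, the $\vc v_2$-component of $\vc m_{II}$ vanishes automatically, while the constraint $|\mt U\hat{\vc e}|^2=1$---equivalent to $(1-\lambda_1^2)\hat e_1^2=(\lambda_3^2-1)\hat e_3^2$ in this basis---forces the ratio of the $\vc e_1$ and $\vc e_3$ components of $\vc m_{II}$ to equal exactly $\sqrt{(1-\lambda_1^2)/(\lambda_3^2-1)}$. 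Hence $\vc m_{II}/|\vc m_{II}|$ coincides with one of the two admissible normals up to sign, providing the required $(\mt R_{\mt U},\vc a_{\mt U})$.

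Finally I would set $\mt R_{\mt V}:=\mt R_{\mt U}\mt R_{II}$ and, using the twinning equation to rewrite $\mt R_{II}\mt V$ as $\mt U$ plus a rank-one correction along $\vc m_{II}$ (up to the standard sign/inversion convention between \eqref{compatib condit} and \eqref{auste marte}), substitute to obtain $\mt R_{\mt V}\mt V = \mt 1 + \vc a_{\mt V}\otimes\vc m_{II}$, with $\vc a_{\mt V}$ equal to $\vc a_{\mt U}$ plus an image of $\vc b_{II}$ under $\mt R_{\mt U}$. I expect the main obstacle to be the algebraic matching in the third paragraph: it relies crucially on the specifically type II form of CC2, and a parallel calculation replacing $|\mt U\hat{\vc e}|=1$ by the type I version $|\mt U^{-1}\hat{\vc e}|=1$ would instead pin down a common shear vector (rather than a common normal) between the two variants, recovering the dual statement Theorem \ref{TypeI}.
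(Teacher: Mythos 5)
The paper cites Theorem~\ref{TypeII} directly from \cite{JamesHyst} and does not reprove it, so there is no internal argument to compare against; your direct proof must therefore stand on its own, and it does.

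The crux is correctly identified and correctly executed. Using the type II form of \ref{CC2} via \eqref{ccperII}, the formula \eqref{11} collapses to $\vc m_{II} = 2(\mt 1 - \mt U^2)\hat{\vc e}$, whose $\vc v_2$-component vanishes because $\lambda_2 = 1$; and the constraint $|\mt U\hat{\vc e}|^2 = |\hat{\vc e}|^2$, i.e. $(1-\lambda_1^2)\hat e_1^2 = (\lambda_3^2-1)\hat e_3^2$, forces $|m_{II,1}/m_{II,3}| = \sqrt{(1-\lambda_1^2)/(\lambda_3^2-1)}$, which is precisely the component ratio that the Ball--James rank-one analysis of $\mt U^2-\mt 1 = \vc a'\otimes\vc n + \vc n\otimes\vc a'$ (with $\vc a' = \vc a + \tfrac12|\vc a|^2\vc n$) demands of an admissible habit-plane normal. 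Two points worth tightening: \emph{(i)} matching the ratio is the necessary condition on $\vc n$; you should then explicitly invoke \cite[Prop.~4]{BallJames1} (or reconstruct $\vc a'$, hence $\vc a$, and verify $\mt R_{\mt U}:=(\mt 1+\vc a\otimes\vc n)\mt U^{-1}\in SO(3)$) to conclude solvability rather than only claiming the direction is ``admissible''. \emph{(ii)} In the final step, use the twinning equation in the form $\mt R_{II}\mt V = \mt U + \vc b_{II}\otimes\vc m_{II}$ (as in \eqref{intror3}); then $\mt R_{\mt U}\mt R_{II}\mt V = \mt 1 + (\vc a_{\mt U} + \mt R_{\mt U}\vc b_{II})\otimes\vc m_{II}$ is immediate and gives $\mt R_{\mt V}=\mt R_{\mt U}\mt R_{II}$, $\vc a_{\mt V}=\vc a_{\mt U}+\mt R_{\mt U}\vc b_{II}$. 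The second paragraph of your proposal is purely motivational and may be omitted; the load-bearing content is the third paragraph.
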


\section{Martensitic transformations from cubic to monoclinic lattices}
\label{cofac cond secTH}
We start this section by proving Proposition \ref{tutti quanti} below, stating that the presence of a twin system satisfying the cofactor conditions (CC1)--(CC3) often implies the existence of many other twin systems enjoying the same property
\begin{proposition}
\label{tutti quanti}
Let $\mt U,\mt V\in\R^{3\times3}_{Sym^+}$ satisfying \eqref{e hat} be such that $\mt R\in SO(3), \vc b\in\R^3,\vc m\in\mathbb{S}^2$ is a solution of the twinning equation \eqref{compatib condit}. Suppose further that $(\mt U,\vc b, \vc m)$ satisfies the cofactor conditions. Then, for every $\mt Q\in SO(3)$, $\mt Q\mt U\mt Q^T,\mt Q\mt V\mt Q^T\in\R^{3\times3}_{Sym^+}$ are such that $\mt Q\mt R\mt Q^T\in SO(3), \mt Q\vc b\in\R^3,\mt Q\vc m\in\mathbb{S}^2$ is a solution of the twinning equation \eqref{compatib condit} for $\mt Q\mt U\mt Q^T,\mt Q\mt V\mt Q^T$, and $(\mt Q\mt U\mt Q^T,\mt Q\vc b, \mt Q\vc m)$ satisfies the cofactor conditions. Furthermore, if $(\mt R,\vc b,\vc m)$ is a type I solution (or a type II or a compound solution) to the twining equation \eqref{compatib condit}, then so is $(\mt Q\mt R\mt Q^T,\mt Q\vc b,\mt Q\vc m)$.
\end{proposition}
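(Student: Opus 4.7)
The plan is to check, in turn, the three assertions of Proposition \ref{tutti quanti}: (i) that $\bigl(\mathsf Q\hat{\mathsf R}\mathsf Q^T,\mathsf Q\vc b,\mathsf Q\vc m\bigr)$ solves the twinning equation \eqref{compatib condit} for $\mathsf Q\mathsf U\mathsf Q^T,\mathsf Q\mathsf V\mathsf Q^T$; (ii) that the cofactor conditions (CC1)--(CC3) are preserved under $SO(3)$-conjugation; (iii) that the twin type (I, II, compound) is preserved. Each step reduces to an elementary identity for conjugation by a rotation, so the content of the argument is essentially bookkeeping. The only computation that requires a moment of care is the behaviour of the cofactor, so I flag that as the ``main'' point, even though it is standard.

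First, I would start from $\hat{\mathsf R}\mathsf U=\mathsf V+\vc b\otimes\vc m$ and conjugate both sides by $\mathsf Q$:
\[
\mathsf Q\hat{\mathsf R}\mathsf Q^T\cdot\mathsf Q\mathsf U\mathsf Q^T
=\mathsf Q(\hat{\mathsf R}\mathsf U)\mathsf Q^T
=\mathsf Q\mathsf V\mathsf Q^T+\mathsf Q(\vc b\otimes\vc m)\mathsf Q^T.
\]
Using $\mathsf Q(\vc b\otimes\vc m)\mathsf Q^T=(\mathsf Q\vc b)\otimes(\mathsf Q\vc m)$, and noting that $\mathsf Q\hat{\mathsf R}\mathsf Q^T\in SO(3)$ and $|\mathsf Q\vc m|=|\vc m|=1$, one obtains the desired twinning equation for the conjugated variants. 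Second, I verify (CC1)--(CC3): (CC1) holds because $\mathsf Q\mathsf U\mathsf Q^T$ is similar to $\mathsf U$, so has the same eigenvalues; (CC2) holds because $\operatorname{cof}(\mathsf Q\mathsf A\mathsf Q^T)=\mathsf Q\operatorname{cof}(\mathsf A)\mathsf Q^T$ for $\mathsf Q\in SO(3)$ (since $\cof\mathsf A=(\det\mathsf A)\mathsf A^{-T}$ on invertible matrices, and both $\det$ and transpose commute with orthogonal conjugation with $\det\mathsf Q=1$), whence
\[
\mathsf Q\vc b\cdot\mathsf Q\mathsf U\mathsf Q^T\cof\bigl((\mathsf Q\mathsf U\mathsf Q^T)^2-\mathbf 1\bigr)\mathsf Q\vc m
=\mathsf Q\vc b\cdot\mathsf Q\bigl(\mathsf U\cof(\mathsf U^2-\mathbf 1)\vc m\bigr)
=\vc b\cdot\mathsf U\cof(\mathsf U^2-\mathbf 1)\vc m=0;
\]
(CC3) is immediate as trace, determinant, and the Euclidean norms $|\vc b|,|\vc m|$ are all invariant under $SO(3)$-conjugation.

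Third, to see that the twin type is preserved, I recall from Proposition \ref{EsisteEseR1} that the compatibility is captured by a unit vector $\hat{\vc e}$ satisfying \eqref{e hat}. Conjugating \eqref{e hat} by $\mathsf Q$ gives
\[
\mathsf Q\mathsf V\mathsf Q^T
=\bigl(-\mathbf 1+2(\mathsf Q\hat{\vc e})\otimes(\mathsf Q\hat{\vc e})\bigr)\mathsf Q\mathsf U\mathsf Q^T\bigl(-\mathbf 1+2(\mathsf Q\hat{\vc e})\otimes(\mathsf Q\hat{\vc e})\bigr),
\]
so $\mathsf Q\hat{\vc e}$ plays the role of $\hat{\vc e}$ for the conjugated pair. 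Substituting $\mathsf Q\hat{\vc e}$ and $\mathsf Q\mathsf U\mathsf Q^T$ into the formulas \eqref{10}--\eqref{11}, and using $(\mathsf Q\mathsf U\mathsf Q^T)^{\pm 1}\mathsf Q\hat{\vc e}=\mathsf Q(\mathsf U^{\pm 1}\hat{\vc e})$ together with $|\mathsf Q\mathsf U^{\pm 1}\hat{\vc e}|=|\mathsf U^{\pm 1}\hat{\vc e}|$, yields that the type I (resp.\ type II) solution for the conjugated pair is precisely $(\mathsf Q\vc b_I,\mathsf Q\vc m_I)$ (resp.\ $(\mathsf Q\vc b_{II},\mathsf Q\vc m_{II})$). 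The compound case follows at once, since a compound twin corresponds to the existence of two non-parallel unit vectors satisfying \eqref{e hat}, and $\mathsf Q$ sends such a pair to another such pair for the conjugated variants.

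In summary, no real obstacle is expected: steps (i) and (iii) are direct algebraic consequences of the definition of tensor product and of the formulas \eqref{10}--\eqref{11}, while step (ii) uses only the similarity of $\mathsf U$ and $\mathsf Q\mathsf U\mathsf Q^T$ and the $SO(3)$-covariance of the cofactor. I would present the proof as a short computation covering these three points in the above order.
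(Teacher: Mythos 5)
Your proposal is correct and follows the same overall structure as the paper's proof for steps (i) and (iii), but you take a genuinely different route for (CC2). The paper does not manipulate the cofactor at all: it uses the alternative factored form of (CC2),
\[
(\lambda_1^2-1)(\lambda_3^2-1)(\vc b\cdot\vc v_2)(\vc m\cdot\vc v_2)=0,
\]
(with $\vc v_2$ the eigenvector of $\mt U$ for the eigenvalue $\lambda_2=1$), and then simply observes that $\mt Q\vc v_2$ is the corresponding eigenvector of $\mt Q\mt U\mt Q^T$ and that $\mt Q\vc b\cdot\mt Q\vc v_2=\vc b\cdot\vc v_2$, $\mt Q\vc m\cdot\mt Q\vc v_2=\vc m\cdot\vc v_2$. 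Your argument instead works directly with the cofactor and the covariance identity $\cof(\mt Q\mt A\mt Q^T)=\mt Q\cof(\mt A)\mt Q^T$; this is cleaner in that it does not rely on knowing the equivalent reformulation of (CC2) from the literature, and it applies verbatim to the definition of (CC2) as stated.

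One small point to tighten: you justify the covariance of the cofactor via $\cof\mt A=(\det\mt A)\mt A^{-T}$ \emph{for invertible} $\mt A$, but under (CC1) the matrix $\mt U^2-\mt 1$ is singular, so that formula does not apply directly to the argument in question. The identity $\cof(\mt Q\mt A\mt Q^T)=\mt Q\cof(\mt A)\mt Q^T$ for $\mt Q\in SO(3)$ is nevertheless true for all $\mt A$ (for example by the multiplicativity $\cof(\mt A\mt B)=\cof(\mt A)\cof(\mt B)$ together with $\cof(\mt Q)=\mt Q$, or by continuity from the invertible case), so the conclusion is unaffected; just cite one of those justifications rather than the inverse-transpose formula.
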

\begin{proof}
Multiplying \eqref{compatib condit} by $\mt Q$ on the left, and by $\mt Q^T$ on the right we get
$$
\mt Q\mt R\mt Q^T\mt Q\mt U\mt Q^T-\mt Q\mt V\mt Q^T= (\mt Q\vc b)\otimes \mt Q\vc m.
$$
Furthermore, \eqref{e hat} becomes
$$
\mt Q\mt V\mt Q^T = \bigl(-\mt {1}+2(\mt Q\hat{\vc e})\otimes (\mt Q\hat{\vc e})\bigr)\mt Q\mt U\mt Q^T\bigl(-\mt {1}+2(\mt Q\hat{\vc e})\otimes (\mt Q\hat{\vc e})\bigr).
$$
Therefore, if $(\mt R, \vc b,\vc m)$ was a type I (or a type II) solution of the twinning equation \eqref{compatib condit} for $\mt U,\mt V$, then so is $(\mt Q\mt R\mt Q^T, \mt Q\vc b,\mt Q\vc m)$ for $\mt Q\mt U\mt Q^T,\mt Q\mt V\mt Q^T$. If the solutions of \eqref{compatib condit} for $\mt U,\mt V$ are compound twins, then so are the ones for $\mt Q\mt U\mt Q^T,\mt Q\mt V\mt Q^T$. Also, it is clear that (CC1) and (CC3) are satisfied.\\

It just remains to prove that (CC2) holds. But (CC2) can be rewritten as 
$$
(\lambda_1^2-1)(\lambda_3^2-1)(\vc b\cdot\vc v_2)(\vc m\cdot\vc v_2)=0.
$$
Here $\lambda_1\leq \lambda_2 \leq \lambda_3$ are the eigenvalues of $\mt U$, and $\vc v_2$ is the eigenvector of $\mt U$ related to the eigenvalue $\lambda_2=1.$ But as 
$\mt Q\mt U\mt Q^T \mt Q\vc v_2=\mt Q\vc v_2$, we get that
$$
(\lambda_1^2-1)(\lambda_3^2-1)(\mt Q\vc b\cdot\mt Q\vc v_2)(\mt Q\vc m\cdot\mt Q\vc v_2)=(\lambda_1^2-1)(\lambda_3^2-1)(\vc b\cdot\vc v_2)(\vc m\cdot\vc v_2)=0,
$$
which concludes the proof.
\end{proof} 
In case of cubic to monoclinic transformations, the twelve transformation matrices are given by (see \cite[Table 4.4]{Batt})
{\footnotesize
\beq
\label{cubictomono}
\begin{split}
\mt U_1 = \left[\begin{array}{ ccc } a & b & 0 \\ b & c & 0 \\ 0 & 0 & d \end{array}\right],\quad
\mt U_2 = \left[\begin{array}{ ccc } a & -b & 0 \\ -b & c & 0 \\ 0 & 0 & d \end{array}\right],\quad
\mt U_3 = \left[\begin{array}{ ccc } c & b & 0 \\ b & a & 0 \\ 0 & 0 & d \end{array}\right],\quad
\mt U_4 = \left[\begin{array}{ ccc } c & -b & 0 \\ -b & a & 0 \\ 0 & 0 & d \end{array}\right],\\
\mt U_5 = \left[\begin{array}{ ccc } a & 0 & b \\ 0 & d & 0 \\ b & 0 & c \end{array}\right],\quad
\mt U_6 = \left[\begin{array}{ ccc } a & 0 & -b \\ 0 & d & 0 \\ -b & 0 & c \end{array}\right],\quad
\mt U_7 = \left[\begin{array}{ ccc } c & 0 & b \\ 0 & d & 0 \\ b & 0 & a \end{array}\right],\quad
\mt U_8 = \left[\begin{array}{ ccc } c & 0 & -b \\ 0 & d & 0 \\ -b & 0 & a \end{array}\right],\\
\mt U_9 = \left[\begin{array}{ ccc } d & 0 & 0 \\ 0 & a & b \\ 0 & b & c \end{array}\right],\quad
\mt U_{10} = \left[\begin{array}{ ccc } d & 0 & 0 \\ 0 & a & -b \\ 0 & -b & c \end{array}\right],\quad
\mt U_{11} = \left[\begin{array}{ ccc } d & 0 & 0 \\ 0 & c & b \\ 0 & b & a \end{array}\right],\quad
\mt U_{12} = \left[\begin{array}{ ccc } d & 0 & 0 \\ 0 & c & -b \\ 0 & -b & a \end{array}\right].
\end{split}
\eeq}
We recall that here the axis of monoclinic symmetry corresponds to a $\langle1 0 0\rangle_{cubic}$ direction.
\begin {table}[h]
\begin{center}
\footnotesize
\begin{tabular}{ |l|l|l|l|l| }
\hline
$\mt R (\theta,\vc v)$ & type I/II twins (A) & type I/II twins (B) & compound twins (C)\\ \hline
 $\pi,(1,0,0)$ & & &$(1,2),(5,6)$\\
 $\pi,(1,0,0)$ & & &$(3,4),(7,8)$\\
 $\pi,(0,1,0)$ & & &$(1,2),(11,12)$\\
 $\pi,(0,1,0)$ & & &$(3,4),(9,10)$\\
 $\pi,(0,0,1)$ & & &$(5,6),(9,10)$\\
 $\pi,(0,0,1)$ & & &$(11,12),(7,8)$\\
\cline{4-4}
 $\pi,(1,0,1)$ &$(1,12),(2,11)$ &$(3,10),(4,9)$ & $(5,7),(6,8)$\\
  $\pi,(1,0,-1)$ &$(1,11),(2,12)$ &$(3,9),(4,10)$ & $(5,7),(6,8)$\\
 $\pi,(1,1,0)$ &$(5,10),(6,9)$ &$(8,11),(7,12)$ & $(1,3),(2,4)$\\
 $\pi,(1,-1,0)$ &$(5,9),(6,10)$ &$(7,11),(8,12)$ & $(1,3),(2,4)$\\
 $\pi,(0,1,1)$ &$(3,8),(4,7)$ &$(1,6),(2,5)$ & $(9,11),(10,12)$\\
 $\pi,(0,-1,1)$ &$(3,7),(4,8)$ &$(1,5),(2,6)$ & $(9,11),(10,12)$\\
 $\frac\pi2,(0,1,0)$ &$(1,12),(2,11)$ &$(3,10),(4,9)$ & \cellcolor[gray]{0.9}$(5,8),(6,7)$\\
 $-\frac\pi2,(0,1,0)$ &$(1,11),(2,12)$ &$(3,9),(4,10)$ & \cellcolor[gray]{0.9}$(5,8),(6,7)$\\
 $\frac\pi2,(0,0,1)$ &$(5,9),(6,10)$ &$(7,11),(8,12)$ & \cellcolor[gray]{0.9}$(1,4),(2,3)$\\
 $-\frac\pi2,(0,0,1)$ &$(5,10),(6,9)$ &$(8,11),(7,12)$ & \cellcolor[gray]{0.9}$(1,4),(2,3)$\\
 $\frac\pi2,(1,0,0)$ &$(3,7),(4,8)$ &$(1,5),(2,6)$ & \cellcolor[gray]{0.9}$(10,11),(9,12)$\\
 $-\frac\pi2,(1,0,0)$ &$(3,8),(4,7)$ &$(1,6),(2,5)$ & \cellcolor[gray]{0.9}$(10,11),(9,12)$\\
\hline
\end{tabular}
\end{center}
\caption {Table containing all possible twinning systems for cubic to monocinic transformations. {The twins in the shaded region are generic non-conventional twins following the terminology of \cite{PitteriZanzotto98}.}
} \label{Table 1} 
\end{table}
Following \cite[Table 1]{JamesHyst}, in Table \ref{Table 1} we listed all the possible twinning systems for cubic to monoclinic transformations, denoting by $(i,j)$ the twins generated by $\mt U_i,\mt U_j$. The angle and axis of the rotation $\mt R\in SO(3)$ such that $\mt U_j = \mt R\mt U_i\mt R^T$ are given in the first column of the table. 
{\begin{remark}
\rm 
\label{RemGen}
In Table \ref{Table 1} we are not considering non-conventional non-generic twins (see Remark \ref{RemSuper} and \cite{PitteriZanzotto98}) which can arise in cubic to monoclinic transformations. This is because an easy computation allows to prove that non-conventional non-generic twins arising in cubic to monoclinic transformations cannot satisfy the cofactor conditions.
\end{remark}}

Thanks to Proposition \ref{tutti quanti}, if a pair of variants in column (A) (or column (B)) of Table \ref{Table 1} generates a twin which satisfies the cofactor conditions, then all the pairs in the same column, that is column (A) (resp. column (B)) satisfy the cofactor conditions for the same type of twin solution. The situation of column (C) is more complex: if a pair of variants generates a twin which satisfies the cofactor conditions, then all possible compound-twins solutions generated by pairs of variants in column (C) satisfy (CC1)--(CC2). However, in general, not all possible compound-twin solutions satisfy (CC3). This is the case for example for
$$
{\mt U}_1 = \left[\begin{array}{ ccc } 0.95 & 0.0073 & 0 \\ 0.0073 & 1.2 & 0 \\ 0 & 0 & 1 \end{array}\right].
$$
In this case, (CC3) is satisfied by both compound solutions generated by $\mt U_1,\mt U_2$, but by neither of the two compound solutions generated by $\mt U_1,\mt U_3$. Nevertheless, Proposition \ref{tutti quanti} implies that if a pair of variants in the upper box of column (C) (or in the lower box of column (C)) generates a twin satisfying the cofactor conditions, then all other pairs of variants in the same box generate a twin satisfying the cofactor conditions. The difference from the table in \cite{JamesHyst} is that we passed from thirteen boxes of this type to four, that is, we have shown that if a twin satisfies the cofactor conditions, there are many different twins enjoying the same property. Furthermore, also thanks to Remark \ref{Solod1ecof} below, we have clarified the situation of compound twins, which was not completely investigated in \cite{JamesHyst}.

{In case of cubic to orthorhombic transformations, where again we assume that the axis of orthorhombic symmetry corresponds to a $\langle1 0 0\rangle_{cubic}$ direction,} the deformation gradients related to the six martensitic variants are (see \cite[Table 4.2]{Batt})
{\footnotesize
\beq
\label{cubictoortho}
\begin{split}
\tilde{\mt U}_1 = \left[\begin{array}{ ccc } a & b & 0 \\ b & a & 0 \\ 0 & 0 & d \end{array}\right],\quad
\tilde{\mt U}_2 = \left[\begin{array}{ ccc } a & -b & 0 \\ -b & a & 0 \\ 0 & 0 & d \end{array}\right],\quad
\tilde{\mt U}_3 = \left[\begin{array}{ ccc } a & 0 & b \\ 0 & d & 0 \\ b & 0 & a \end{array}\right],\\
\tilde{\mt U}_4 = \left[\begin{array}{ ccc } a & 0 & -b \\ 0 & d & 0 \\ -b & 0 & a \end{array}\right],\quad
\tilde{\mt U}_5 = \left[\begin{array}{ ccc } d & 0 & 0 \\ 0 & a & b \\ 0 & b & a \end{array}\right],\quad
\tilde{\mt U}_6 = \left[\begin{array}{ ccc } d & 0 & 0 \\ 0 & a &- b \\ 0 & -b & a \end{array}\right].
\end{split}
\eeq
}
Furthermore, Table \ref{Table 1} simplifies to Table \ref{Table 2}.
\begin {table}[h]
\begin{center}
\begin{tabular}{ |l|l|l|l|l| }
\hline
$\mt R (\theta,\vc v)$ & type I/II twins  & compound twins \\ \hline
 $\pi,(1,0,0)$ &  &$(1,2),(3,4)$\\
 $\pi,(0,1,0)$ &  &$(1,2),(5,6)$\\
 $\pi,(0,0,1)$ &  &$(3,4),(5,6)$\\
 $\pi,(1,0,1)$ &$(1,6),(2,5)$ & \\
  $\pi,(1,0,-1)$ &$(1,5),(2,6)$ &\\
 $\pi,(1,1,0)$ &$(3,6),(4,5)$ &\\
 $\pi,(1,-1,0)$ &$(3,5),(4,6)$ &\\
 $\pi,(0,1,1)$ & $(1,4),(2,3)$ &\\
 $\pi,(0,-1,1)$ &$(1,3),(2,4)$ &\\
\hline
\end{tabular}
\end{center}
\caption {Table containing all possible twinning systems for cubic to orthorhombic transformations.} \label{Table 2} 
\end{table}
In the following proposition, we prove that in materials undergoing cubic to monoclinic or cubic to orthorhombic transformations, the two martensitic variants cannot satisfy the cofactor conditions both with the type I and with the type II twinning systems.
\begin{proposition}
\label{nonsipuo}
Let $\mt U_i,\mt U_j\in\R^{3\times3}_{Sym^+}$ be as in \eqref{cubictomono} for some $i,j=1,\dots,12$, and $\mt U_i\neq \mt U_j$. 
Let us suppose also that $\hat{\vc e}\in\mathbb{S}^2$ satisfying
$$
\mt U_j = (-\mt 1+2\hat{\vc e}\otimes\hat{\vc e})\mt U_i(-\mt 1+2\hat{\vc e}\otimes\hat{\vc e})
$$
is unique up to a change of sign. Then the type I and type II solutions of the twinning equation \eqref{compatib condit} for $\mt U_i,\mt U_j$ cannot satisfy (CC1)--(CC2) at the same time.
\end{proposition}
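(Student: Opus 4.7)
The plan is to invoke the equivalent reformulations of (CC2) recorded in \eqref{ccperI}--\eqref{ccperII}: under (CC1), (CC2) for a type I twin is equivalent to $|\mt U_i^{-1}\hat{\vc e}|=1$, while (CC2) for a type II twin is equivalent to $|\mt U_i\hat{\vc e}|=1$, where $\hat{\vc e}\in\mathbb{S}^2$ is the (unique up to sign) vector from \eqref{e hat}. Supposing both hold simultaneously, the strategy is to show that $\hat{\vc e}$ must in fact be an eigenvector of $\mt U_i$, which will then force $\mt U_j=\mt U_i$ and contradict the hypothesis $\mt U_i\neq \mt U_j$.

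To show that $\hat{\vc e}$ is an eigenvector, expand $\hat{\vc e}=\alpha_1\vc v_1+\alpha_2\vc v_2+\alpha_3\vc v_3$ in an orthonormal eigenbasis of $\mt U_i$ with eigenvalues $\lambda_1\leq\lambda_2=1\leq\lambda_3$. Using $|\hat{\vc e}|^2=1$, the two norm conditions become the homogeneous $2\times 2$ system
\begin{align*}
(\lambda_1^2-1)\alpha_1^2+(\lambda_3^2-1)\alpha_3^2 &= 0,\\
(\lambda_1^{-2}-1)\alpha_1^2+(\lambda_3^{-2}-1)\alpha_3^2 &= 0
\end{align*}
in $\alpha_1^2,\alpha_3^2$, whose determinant simplifies (after a short calculation) to a positive multiple of $(\lambda_3^2-\lambda_1^2)(1-\lambda_1^2)(1-\lambda_3^2)$. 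When $\lambda_1<1<\lambda_3$ this determinant is nonzero and forces $\alpha_1=\alpha_3=0$, so $\hat{\vc e}=\pm\vc v_2$. In the degenerate case $\lambda_1=1$ (respectively $\lambda_3=1$, and excluding the trivial case $\mt U_i=\mt 1$), the first equation alone already forces $\alpha_3=0$ (respectively $\alpha_1=0$), placing $\hat{\vc e}$ in the two-dimensional $1$-eigenspace of $\mt U_i$. In every case, $\hat{\vc e}$ is an eigenvector of $\mt U_i$.

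To close the argument, set $\mt P:=-\mt 1+2\hat{\vc e}\otimes\hat{\vc e}$ and observe, using the symmetry of $\mt U_i$ and the eigenvector property $\mt U_i\hat{\vc e}=\mu\hat{\vc e}$, that $\mt P\mt U_i\mt P\,\vc w=\mt U_i\vc w$ for every $\vc w\in\R^3$; indeed $\hat{\vc e}\cdot\mt U_i\vc w=\mu(\hat{\vc e}\cdot\vc w)$, and the cross terms produced by expanding $\mt P\mt U_i\mt P$ cancel. Substituting $\hat{\vc e}$ into \eqref{e hat} then yields $\mt U_j=\mt P\mt U_i\mt P=\mt U_i$, the desired contradiction.

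The main subtlety is the handling of degenerate eigenvalue coincidences, since Remark \ref{StrictL} (which would give $\lambda_1<1<\lambda_3$ outright) cannot be invoked, as it requires (CC3), which is not assumed here. However, the plan is robust to this issue because the key identity $\mt P\mt U_i\mt P=\mt U_i$ holds whenever $\hat{\vc e}$ is \emph{any} eigenvector of $\mt U_i$, regardless of the eigenvalue multiplicities, so the three cases above can all be handled by the same uniform conclusion.
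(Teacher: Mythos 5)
Your proof is correct, and it takes a genuinely different — and in fact more general — route than the paper's. The paper invokes Proposition \ref{tutti quanti} to reduce by symmetry to the two representative pairs $(\mt U_1,\mt U_5)$ and $(\mt U_1,\mt U_{11})$, then works explicitly with the deformation parameters $a,b,c,d$ and the explicit form of $\hat{\vc e}$, deriving a contradiction through direct algebraic manipulation of $a+c=1+\lambda$, $ac-b^2=\lambda$, $a^2+b^2+d^2=2$, and $c^2\lambda^{-2}+b^2\lambda^{-2}+d^{-2}=2$, with $d=1$ treated separately. You instead work coordinate-free in an orthonormal eigenbasis of $\mt U_i$, reduce the two conditions $|\mt U_i\hat{\vc e}|=1$ and $|\mt U_i^{-1}\hat{\vc e}|=1$ to a homogeneous $2\times 2$ linear system in $\alpha_1^2,\alpha_3^2$, and show that the determinant is generically nonzero, so $\hat{\vc e}$ must be an eigenvector; the identity $(-\mt 1+2\hat{\vc e}\otimes\hat{\vc e})\mt U_i(-\mt 1+2\hat{\vc e}\otimes\hat{\vc e})=\mt U_i$ for eigenvectors $\hat{\vc e}$ then contradicts $\mt U_i\neq\mt U_j$. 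Your argument nowhere uses the explicit cubic-to-monoclinic structure of \eqref{cubictomono}, so it establishes a strictly stronger statement (valid for any pair $\mt U,\mt V\in\R^{3\times3}_{Sym^+}$ satisfying \eqref{e hat} with $\mt U\neq\mt V$ and $\hat{\vc e}$ unique up to sign), whereas the paper's computation is tied to that structure. Your attention to the degenerate cases $\lambda_1=1$ or $\lambda_3=1$ — correctly noting that Remark \ref{StrictL} cannot be invoked because it presupposes (CC3) — is well placed; the paper sidesteps this via its explicit form, while you handle it uniformly because the conclusion only requires $\hat{\vc e}$ to lie in some eigenspace.
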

\begin{proof}
By Corollary \ref{tutti quanti} (see also Table \ref{Table 1}) we can restrict ourselves to checking just two cases: $\mt U_i=\mt U_1,\mt U_j=\mt U_5$ and $\mt U_i=\mt U_1,\mt U_j=\mt U_{11}$. We focus on the latter, as the former can be deduced similarly. 

We now assume that both the type I and the type II twins generated by $\mt U_{1},\mt U_{11}$ satisfy (CC1)--(CC2), and we aim at a contradiction. Suppose first $d\neq 1$. By \eqref{ccperII} we have that a type II twin satisfies (CC1)--(CC2) if and only if the middle eigenvalue $\lambda_2$ of $\mt U_1$ satisfies $\lambda_2=1$ and $|\mt U_1\vc e|=1$ with $\vc e = 2^{-\frac12}(1,0,1)^T$. That is, we must satisfy at the same time
\begin{align}
\label{l1}
a+c=1+\lambda,\qquad ac-b^2=\lambda,\\ 
\label{tII}
a^2+b^2+d^2=2.
\end{align}
Here $\lambda$ is the eigenvalue of $\mt U_1$ which is neither $1$ nor $d$. 
At the same time, if the type I twin generated by $\mt U_1,\mt U_{11}$ satisfies (CC2), by \eqref{ccperII} $|\mt U_1^{-1}\vc e|=1$ and hence
\beq
\label{tI}
\frac{c^2}{\lambda^2}+\frac{b^2}{\lambda^2}+\frac{1}{d^2}=2.
\eeq
From \eqref{l1}--\eqref{tII} we deduce
$$
(1+\lambda)^2-c(1+\lambda)=2-d^2+\lambda.
$$
On the other hand, \eqref{l1} together with \eqref{tI} imply
\beq
\label{cisl}
c(1+\lambda)=\lambda^2(2-d^{-2})+\lambda.
\eeq
Collecting the last two identities to get rid of $c$ we obtain
$$
(1+\lambda)^2-2+d^2=\lambda^2(2-d^{-2})+2\lambda,
$$
which can be rewritten as
$$
\frac{\lambda^2}{d^2}(1-d^2)=(1-d^2).
$$
Having assumed $d\neq 1,$ and keeping in account that $\mt U_1$ is positive definite, we must have $d=\lambda$. However, as the middle eigenvalue of $\mt U_1$ has to be equal to one (cf. (CC1)), we deduce that $d=\lambda=1$, which implies  $a=1, c=1, b=0$, that is $\mt U_1=\mt U_{11}$. We thus reached a contradiction.\\
Suppose now $d = 1$, then $|\mt U_1^{-1}\vc e|=1,|\mt U_1\vc e|=1 $ reduce to solve
$$
a^2+b^2=1,\qquad c^2+b^2=(ac-b^2)^2.
$$
But these imply either $a=-c$, thus contradicting the fact that $\mt U_1$ is positive definite, or $a=\pm 1,b=0$, which in turn contradict being positive definite or the fact that $\mt U_1\neq \mt U_{11}$. 
\end{proof}
\begin{remark}
\label{dueconuno}
\rm
{
Following the strategy of Proposition \ref{nonsipuo} we can actually prove that in the pure cubic to monoclinic case (that is when $a\neq c$ in \eqref{cubictomono}) it is not possible to satisfy (CC1)--(CC2) with the type I twins (or with the type II twin) of both columns (A) and (B) of Table \ref{Table 1} at the same time.} In the degenerate case where $a = c$ in \eqref{cubictomono}, that is for cubic to orthorhombic transformations, column (A) and column (B) coincide (see Table \ref{Table 2}). Despite this result, it is easy to construct examples of matrices $\mt U_i$ as in \eqref{cubictomono}, such that the cofactor conditions are satisfied by the type I twins in column (A) of Table \ref{Table 1} (or column (B)) and by the type II twins in column (B) (resp. column (A)) of Table \ref{Table 1}. It is also possible to construct matrices $\mt U_i$ as in \eqref{cubictomono}, such that $d=1$, and the cofactor conditions are satisfied by some compound twins in column (C) and by the type I/II twins in column (A)/(B) of Table \ref{Table 1}. If $d=1$ and 
$\det \mt U_i=1$ it is possible to satisfy the cofactor conditions with type I, type II and compound twins at the same time.
\end{remark}

\section{Cofactor conditions for two wells}
\label{qcsec}
In this section we study the quasiconvex hull of the set $$K_{ij} : = SO(3)\mt U_i\cup SO(3)\mt U_j,$$ where $\mt U_i\neq\mt U_j$ are as in \eqref{cubictomono}. We recall that the set of matrices $K^{qc}$, that is the quasiconvex hull of $K_{ij}$, is the set of average deformation gradients which can be achieved by finely and homogeneously mixing the two variants of martensite $\mt U_i,\mt U_j$ (see e.g., \cite{BallJames2,Muller}). If $\mt U_i,\mt U_j$ generate a compound twin we show that a necessary condition to satisfy the cofactor conditions is to have $d=1$. Furthermore, we show that if $\mt U_i$ has a middle eigenvalue which is equal to one, but $d\neq 1$, then the only constant average deformation gradients in $K_{ij}^{qc}$ which are rank one connected to the identity, and are hence compatible with austenite, are the pure phases, that is $\mt R_1\mt U_i,\mt R_2\mt U_i,\mt R_3\mt U_j,\mt R_4\mt U_j$ for some $\mt R_1,\mt R_2,\mt R_3,\mt R_4$. 
On the other hand, if $\mt U_i,\mt U_j$ generate a type I/II solution of the twinning equation \eqref{compatib condit}, say $(\mt R, \vc b,\vc m)$, satisfying the cofactor conditions, then there exist two smooth functions $\mt R_1,\mt R_2 : [0,1]\to SO(3)$ such that the only matrices in $K_{ij}^{qc}$ rank one connected to the identity are given by $\mt R_k(\mu)(\mt U_i+\mu\vc b\otimes\vc m),$ with $\mu\in [0,1]$ and $k=1,2$.
\subsection{Compound twins}
We start by recalling the following result that characterizes the quasiconvex hull of a two well problem in some simple case
\label{Comp twin sec}
\begin{lemma}[{\cite[Thm 2.5.1]{dolzman}}]
\label{lemma Dolz}
Let $\mt A,\mt B \in \R^{3\times3}_{Sym^+}$ with $\det(\mt A)=\det(\mt B)=\mathfrak{D}>0$. Suppose that there exists $\lambda>0$ and $\vc v\in \mathbb{S}^2$ such that $\mt A\vc v=\mt B\vc v = \lambda \vc v$. Then, defined $K$ as
$$
K:=SO(3)\mt A\cup SO(3)\mt B,
$$
we have
$$
K^{qc} = 
\Bigl\{ \mt F\in \R^{3\times3}\colon \det(\mt F)=\mathfrak{D},\, 
\mt F^T\mt F\vc v = \lambda^2\vc v,\, |\mt F\vc e|\leq \max\{|\mt A\vc e|,|\mt B\vc e|\}, 
\text{ for each $\vc e\in \mathbb{S}^2$
 }	\Bigr\}.
$$
\end{lemma}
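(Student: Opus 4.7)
The plan is to establish both inclusions $K^{qc}\subseteq L$ and $L\subseteq K^{qc}$, where $L$ denotes the set on the right-hand side, by reducing to the classical two-dimensional two-well problem with equal determinants. In an orthonormal basis with $\vc v$ as its third vector, both $\mt A$ and $\mt B$ are block-diagonal, $\mt A = \diag(\hat{\mt A},\lambda)$ and $\mt B = \diag(\hat{\mt B},\lambda)$, with $\hat{\mt A},\hat{\mt B}\in\R^{2\times 2}_{Sym^+}$ and $\det\hat{\mt A} = \det\hat{\mt B} = \mathfrak D/\lambda$. The inclusion $K\subseteq L$ is immediate from direct computation: for $\mt F = \mt Q\mt A$ one has $\det\mt F = \mathfrak D$, $\mt F^T\mt F\vc v = \mt A^2\vc v = \lambda^2\vc v$, and $|\mt F\vc e| = |\mt A\vc e|$, and analogously for $\mt Q\mt B$.

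For $L\subseteq K^{qc}$: given $\mt F\in L$, the condition $\mt F^T\mt F\vc v = \lambda^2\vc v$ together with $\det\mt F = \mathfrak D > 0$ forces $|\mt F\vc v|=\lambda$ and $\mt F\vc v \perp \mt F(\vc v^\perp)$. Picking $\mt R\in SO(3)$ that rotates $\lambda^{-1}\mt F\vc v$ to $\vc v$ and preserves orientation gives $\mt R\mt F = \diag(\hat{\mt F},\lambda)$ with $\det\hat{\mt F} = \mathfrak D/\lambda$, and the norm bounds on $\mt F$ translate into $|\hat{\mt F}\hat{\vc e}|\leq \max\{|\hat{\mt A}\hat{\vc e}|,|\hat{\mt B}\hat{\vc e}|\}$ for every $\hat{\vc e}\in\mathbb S^1$. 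The classical two-dimensional two-well theorem for equal-determinant symmetric positive-definite wells then identifies $\hat{\mt F}$ with the barycentre of an explicit laminate in $SO(2)\hat{\mt A}\cup SO(2)\hat{\mt B}$. Lifting the laminate by $\hat{\mt G}\mapsto \diag(\hat{\mt G},\lambda)$ preserves rank-one connections and sends the 2D wells into $K$, so $\mt R\mt F\in K^{qc}$; using that $K^{qc}$ is left-$SO(3)$-invariant yields $\mt F = \mt R^T(\mt R\mt F)\in K^{qc}$.

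For $K^{qc}\subseteq L$: the essential input is that every rank-one connection between elements of $K$ preserves the $\vc v$-direction. If $\mt Q_1\mt A - \mt Q_2\mt B = \vc a\otimes\vc n$ with $\vc a,\vc n\neq 0$, then left multiplication by $\mt Q_1^T$ reduces the problem to $\mt A - \mt Q\mt B$ of rank one, whose two-dimensional kernel $W$ satisfies $((\mt A^2-\mt B^2)\vc w)\cdot\vc w = 0$ for every $\vc w\in W$. Since $\vc v$ lies in the kernel of the symmetric matrix $\mt A^2-\mt B^2$, the associated quadratic form on $\R^3$ vanishes on $\R\vc v$ and, on $\vc v^\perp$, is given by a symmetric $2\times 2$ form with at most two null lines; hence its full vanishing set in $\R^3$ is contained in a union of at most two $2$-planes, each containing $\vc v$. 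Therefore $\vc v\in W$, which gives $\mt Q\vc v = \vc v$ and $\vc n\in\vc v^\perp$. The same argument applies to rank-one connections between block-diagonal laminates, so by induction on lamination depth every laminate built from $K$ has the form $\mt R\diag(\hat{\mt G},\lambda)$ for a fixed $\mt R\in SO(3)$ and some $\hat{\mt G}$ in the 2D laminate hull, hence lies in $L$. Weak closure preserves the closed set $L$, giving $K^{qc}\subseteq L$.

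The main obstacle is the rigidity statement used in the upper bound: one must rule out rank-one connections in $K$ that rotate $\vc v$, which is precisely what makes the nonlinear equality $\mt F^T\mt F\vc v = \lambda^2\vc v$ compatible with lamination. The argument hinges on $\vc v$ lying in the kernel of $\mt A^2-\mt B^2$; once this rigidity is in place, the 2D reduction turns both inclusions into routine applications of the two-dimensional two-well theorem.
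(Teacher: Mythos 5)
The paper does not prove this lemma: it is quoted verbatim from Dolzmann's monograph (Theorem 2.5.1), so there is no ``paper's own proof'' to compare against, only the known argument from the literature. Your lower bound $L\subseteq K^{qc}$ is fine, as is the rigidity observation that a rank-one connection $\mt Q_1\mt A-\mt Q_2\mt B=\vc a\otimes\vc n$ forces $\vc n\perp\vc v$ (the quadratic form $\vc w\mapsto(\mt A^2-\mt B^2)\vc w\cdot\vc w$ vanishes on the $2$-plane $\vc n^\perp$, and since $\vc v\in\ker(\mt A^2-\mt B^2)$ with $\hat{\mt A}\neq\hat{\mt B}$, the vanishing set is a union of at most two $2$-planes through $\vc v$, so $\vc v\in\vc n^\perp$).

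The genuine gap is the step from ``every finite laminate of $K$ lies in $L$'' to ``$K^{qc}\subseteq L$.'' Tracking rank-one connections and passing to weak limits only gives $K^{lc}\subseteq L$, and in general $K^{lc}\subsetneq K^{rc}\subseteq K^{qc}$; the equality $K^{lc}=K^{qc}$ for this problem is precisely part of what the theorem asserts, so you cannot invoke it. The correct upper bound comes from exhibiting $L$ as a polyconvex set. Observe that for $\mt F=\mt R\mt A$ or $\mt R\mt B$ one has $\cof\mt F\,\vc v=(\mathfrak D/\lambda)\mt R\vc v$ and $\mt F\vc v=\lambda\mt R\vc v$, so $K$ satisfies the quasiaffine (linear-in-minors) equalities $\det\mt F=\mathfrak D$ and $\mathfrak D\,\mt F\vc v=\lambda^2\cof\mt F\,\vc v$; together these imply $\mt F^T\mt F\vc v=(\lambda^2/\mathfrak D)\mt F^T\cof\mt F\,\vc v=\lambda^2\vc v$. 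Thus $L$ is the intersection of two quasiaffine equality constraints with the convex constraint $|\mt F\vc e|\leq\max\{|\mt A\vc e|,|\mt B\vc e|\}$ for all $\vc e$, hence $L$ is polyconvex and $K^{qc}\subseteq K^{pc}\subseteq L$. Once that is in place, your lower-bound reduction to the two-dimensional equal-determinant two-well hull closes the argument.
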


Also, a consequence of the proof of Lemma 6.2 in \cite{FDP2} is the following result  
\begin{lemma}
\label{lemma mio}
Let $\mt A,\mt B\in\R^{3\times3}_{Sym^+}$ be such that $A\vc e_3 = \mt B\vc e_3 = \lambda\vc e_3$ for some $\lambda\neq 1,\lambda>0$. Assume also $\det \mt A=\det\mt B=\mathfrak{D}>0$. 
Then, necessary conditions for the existence of $\vc a\in\R^3,\vc	 n\in\mathbb{S}^2$ satisfying $\mt 1+\vc a\otimes \vc n \in K^{qc}$ are
$$
|\vc a|=\lambda^{-1}|\mathfrak{D}-\lambda^2|,\qquad n_3^2 = \frac{\lambda^2(1-\lambda^2)}{\mathfrak{D}^2-\lambda^4}
$$
and
$$
\vc a = (\mathfrak{D}-\lambda^2)\vc n-\frac{1}{n_3}(1-\lambda^2)\vc e_3.
$$
\end{lemma}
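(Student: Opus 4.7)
The plan is to apply Lemma \ref{lemma Dolz} (with $\vc v=\vc e_3$) to extract the two algebraic constraints that every $\mt F\in K^{qc}$ must satisfy: $\det\mt F=\mathfrak{D}$ and $\mt F^T\mt F\vc e_3=\lambda^2\vc e_3$. Specialising to $\mt F=\mt 1+\vc a\otimes\vc n$, these two identities unfold into polynomial relations in $\vc a$ and $\vc n$ which, combined together, deliver the three claimed formulas.

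First, $\det(\mt 1+\vc a\otimes\vc n)=1+\vc a\cdot\vc n=\mathfrak{D}$ immediately yields $\vc a\cdot\vc n=\mathfrak{D}-1$. Next, from the expansion $\mt F^T\mt F=\mt 1+\vc a\otimes\vc n+\vc n\otimes\vc a+|\vc a|^2\vc n\otimes\vc n$ applied to $\vc e_3$, and writing $a_3=\vc a\cdot\vc e_3$, $n_3=\vc n\cdot\vc e_3$, I obtain the vector equation
$$(1-\lambda^2)\vc e_3+n_3\vc a+(a_3+n_3|\vc a|^2)\vc n=\vc 0.$$
If $n_3=0$ then, since $\vc n\in\mathbb{S}^2$ would be perpendicular to $\vc e_3$, this reduces to $a_3\vc n=(\lambda^2-1)\vc e_3$, impossible as $\lambda\neq 1$. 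Hence $n_3\neq 0$ and the displayed equation shows that $\vc a$ lies in $\operatorname{span}\{\vc n,\vc e_3\}$. Writing $\vc a=\alpha\vc n+\beta\vc e_3$, the coefficient of $\vc e_3$ is read off directly as $\beta=-(1-\lambda^2)/n_3$, and taking the inner product with $\vc n$ together with $\vc a\cdot\vc n=\mathfrak{D}-1$ fixes $\alpha=\mathfrak{D}-\lambda^2$, giving
$$\vc a=(\mathfrak{D}-\lambda^2)\vc n-\frac{1-\lambda^2}{n_3}\vc e_3.$$

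For $n_3^2$ I use the scalar consequence $|\mt F\vc e_3|^2=\vc e_3\cdot\mt F^T\mt F\vc e_3=\lambda^2$, i.e.\ $1+2n_3a_3+n_3^2|\vc a|^2=\lambda^2$. From the explicit formula just obtained, $a_3=n_3(\mathfrak{D}-\lambda^2)-(1-\lambda^2)/n_3$ and $|\vc a|^2=(\mathfrak{D}-\lambda^2)^2+(1-\lambda^2)^2/n_3^2-2(\mathfrak{D}-\lambda^2)(1-\lambda^2)$; substituting and grouping, the constant terms collapse via $1-2(1-\lambda^2)+(1-\lambda^2)^2=\lambda^4$ and, after factoring out $(\mathfrak{D}-\lambda^2)$ in what remains, I arrive at $n_3^2(\mathfrak{D}^2-\lambda^4)=\lambda^2(1-\lambda^2)$, which is the desired expression for $n_3^2$. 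Plugging this value of $n_3^2$ back into the formula for $|\vc a|^2$, the two $n_3^{-2}$ terms combine to leave $|\vc a|^2=\lambda^{-2}(\mathfrak{D}-\lambda^2)^2$, so that $|\vc a|=\lambda^{-1}|\mathfrak{D}-\lambda^2|$. No single step is deep; the only conceptual subtlety is recognising that $\mt F^T\mt F\vc e_3=\lambda^2\vc e_3$ carries \emph{two} independent pieces of information, its direction giving the structural form of $\vc a$ and its length giving $n_3^2$, so that the two extractions do not trivialise each other.
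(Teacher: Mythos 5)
Your proof is correct. The paper states this lemma without proof, attributing it to the proof of Lemma~6.2 in \cite{FDP2}, so there is no in-paper argument to compare against; your self-contained derivation from Lemma~\ref{lemma Dolz} --- extracting $\det\mt F=\mathfrak{D}$ and $\mt F^T\mt F\vc e_3=\lambda^2\vc e_3$, expanding, and solving --- is a natural route and the algebra checks out (including the exclusion of $n_3=0$, the identity $1-2(1-\lambda^2)+(1-\lambda^2)^2=\lambda^4$, and the factorisation giving $n_3^2(\mathfrak{D}^2-\lambda^4)=\lambda^2(1-\lambda^2)$).

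One small remark: the step where you ``read off'' $\beta=-(1-\lambda^2)/n_3$ as the $\vc e_3$-coefficient of the decomposition $\vc a=\alpha\vc n+\beta\vc e_3$ tacitly assumes $\vc n$ and $\vc e_3$ are linearly independent, which fails if $n_3^2=1$. This degenerate case is in fact consistent with the statement (it forces $\mathfrak{D}=\lambda$, under which all three formulas reduce to $\vc a=(\lambda-1)n_3\vc e_3$), but the cleanest way to avoid the issue is to take the inner product of the vector equation $(1-\lambda^2)\vc e_3+n_3\vc a+(a_3+n_3|\vc a|^2)\vc n=\vc 0$ with $\vc n$, which together with $\vc a\cdot\vc n=\mathfrak{D}-1$ gives $a_3+n_3|\vc a|^2=-n_3(\mathfrak{D}-\lambda^2)$ directly; substituting this back yields the formula for $\vc a$ without ever invoking uniqueness of the span decomposition.
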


Now let us consider the cubic to monoclinic transformation, and let $\mt U_i$ with $i=1,\dots,12$ be as in \eqref{cubictomono}. We can prove the following statement
\begin{proposition}
Let $\mt U_1\in\R^{3\times3}_{Sym^+}$ be as in \eqref{cubictomono}, with $a,c,d>0$, $b\geq0$, and $d\neq 1$. Let also $0<\lambda_1\leq\lambda_2=1\leq\lambda_3$ be the eigenvalues of $\mt U_1$. Then, for every $i,j\in\{1,\dots,4\}$ (or, equivalently, $i,j\in\{5,\dots,8\}$, or $i,j\in\{9,\dots,12\}$) with $\mt U_i\neq \mt U_j$, there exist exactly four matrices $\vc a^l\otimes\vc n^l$, $l=1,\dots,4$ satisfying
\beq
\label{Prop eq}
\mt 1 + \vc a^l\otimes\vc n^l \in \Bigl(SO(3)\mt U_i\cup SO(3)\mt U_j\Bigr)^{qc}.
\eeq
Furthermore, there exist four rotation matrices $\mt R^l$, $l=1,\dots,4$ such that 
\begin{align}
\label{R1 1}
\mt R^l\mt U_i = \mt 1 + \vc a^l\otimes\vc n^l, \qquad l=1,2\\
\label{R1 2}
\mt R^l\mt U_j = \mt 1 + \vc a^l\otimes\vc n^l, \qquad l=3,4.
\end{align}
\end{proposition}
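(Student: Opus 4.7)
The strategy is to combine the necessary conditions of Lemma \ref{lemma Dolz} and Lemma \ref{lemma mio} with a $2\times 2$ rigidity argument to force any $\mt F=\mt 1+\vc a\otimes\vc n$ in $K_{ij}^{qc}$ to satisfy $\mt F^T\mt F\in\{\mt U_i^2,\mt U_j^2\}$; the classical count of two solutions per well then produces exactly the four matrices claimed in the statement.

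\textbf{Reduction and eigenvalue computation.} Within each of the three blocks $\{\mt U_1,\dots,\mt U_4\}$, $\{\mt U_5,\dots,\mt U_8\}$, $\{\mt U_9,\dots,\mt U_{12}\}$ of \eqref{cubictomono}, all members share a common eigenvector ($\vc e_3$, $\vc e_2$, $\vc e_1$ respectively) with common eigenvalue $d$, and the three blocks are related by relabelling of the coordinate axes, so it suffices to treat $i,j\in\{1,\dots,4\}$. Then $\mt U_i\vc e_3=\mt U_j\vc e_3=d\vc e_3$ and $\det\mt U_i=\det\mt U_j=\mathfrak D:=d(ac-b^2)$, so Lemma \ref{lemma Dolz} characterises $K_{ij}^{qc}$, and Lemma \ref{lemma mio} (with $\lambda=d$) provides, for any $\mt F=\mt 1+\vc a\otimes\vc n\in K_{ij}^{qc}$, the constraints $\vc a\cdot\vc n=\mathfrak D-1$, $|\vc a|^2=d^{-2}(\mathfrak D-d^2)^2$, and $\mt F^T\mt F\,\vc e_3=d^2\vc e_3$. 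Using $\tr(\mt F^T\mt F)=3+2\vc a\cdot\vc n+|\vc a|^2$ together with $\det(\mt F^T\mt F)=\mathfrak D^2$ shows that the restriction of $\mt F^T\mt F$ to $\vc e_3^\perp$ has trace $1+\mathfrak D^2/d^2$ and determinant $\mathfrak D^2/d^2$, whence eigenvalues $\{1,\mathfrak D^2/d^2\}$. Since $\lambda_1\lambda_3=\mathfrak D$ and $d\in\{\lambda_1,\lambda_3\}$, the full spectrum $\{1,d^2,\mathfrak D^2/d^2\}$ of $\mt F^T\mt F$ coincides with $\{\lambda_1^2,1,\lambda_3^2\}$, i.e.\ with that of both $\mt U_i^2$ and $\mt U_j^2$.

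\textbf{Rigidity and conclusion.} On $\vc e_3^\perp$, write $\tilde{\mt M},\tilde{\mt M}_i,\tilde{\mt M}_j$ for the $2\times 2$ positive-definite blocks of $\mt F^T\mt F,\mt U_i^2,\mt U_j^2$; they all share the spectrum $\{1,\mu\}$ with $\mu:=\mathfrak D^2/d^2$, and the assumption $\mt U_i\neq \mt U_j$ in \eqref{cubictomono} rules out $\tilde{\mt M}_i=\tilde{\mt M}_j$ and hence $\mu=1$. Specialising the pointwise bound $|\mt F\vc e|\le\max(|\mt U_i\vc e|,|\mt U_j\vc e|)$ to $\vc e\in\vc e_3^\perp$ gives
\[
\tilde{\vc e}^T\tilde{\mt M}\tilde{\vc e}\le\max_{k\in\{i,j\}}\tilde{\vc e}^T\tilde{\mt M}_k\tilde{\vc e},\qquad\forall\,\tilde{\vc e}\in\R^2.
\]
Choosing $\tilde{\vc e}$ to be the unit eigenvector of $\tilde{\mt M}$ for $\nu:=\max(1,\mu)$, the left-hand side equals $\nu$, while each right-hand-side quadratic form is bounded above by $\nu$; hence equality must hold for some $k$, which forces $\tilde{\vc e}$ to be the $\nu$-eigenvector of $\tilde{\mt M}_k$. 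Sharing a principal direction and the spectrum then forces $\tilde{\mt M}=\tilde{\mt M}_k$, i.e.\ $\mt F^T\mt F=\mt U_k^2$. Polar decomposition (valid since $\det\mt F=\mathfrak D>0$) delivers $\mt R\in SO(3)$ with $\mt F=\mt R\mt U_k$, and the classical theory of the austenite–martensite equation \eqref{auste marte} applied to $\mt U_k$ (middle eigenvalue $1$, extreme eigenvalues distinct from $1$ by the preceding remark) yields exactly two rank-one decompositions per well, producing the four matrices $\vc a^l\otimes\vc n^l$ and rotations $\mt R^l$ of \eqref{R1 1}--\eqref{R1 2}. Distinctness across the two wells follows because $\mt R\mt U_i=\mt R'\mt U_j$ with $\mt R,\mt R'\in SO(3)$ would force $\mt U_i=\mt U_j$ by uniqueness of the polar decomposition.

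\textbf{Main obstacle.} The critical step is the rigidity argument: extracting from a single pointwise inequality between quadratic forms the rigid conclusion that an isospectral $2\times 2$ symmetric matrix coincides with one of two prescribed options. The eigenvalue calculation preceding it—elementary in itself—is what unlocks this rigidity, by reducing the 3D two-well problem to an isospectral comparison in the plane $\vc e_3^\perp$.
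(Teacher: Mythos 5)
Your proof is correct, and it takes a genuinely different route from the paper's. Both proofs start from the same two lemmas, but where the paper proceeds computationally --- constructing specific ``balance'' directions $\vc v^{\pm}\in\vc e_3^{\perp}$ with $|\mt U_i\vc v^{\pm}|=|\mt U_j\vc v^{\pm}|$, showing the resulting inequalities from Lemma \ref{lemma Dolz} must be equalities after summing, and then solving the two resulting linear systems to pin $(n_1,n_2)$ to at most four points --- you instead extract a spectral rigidity statement. You observe that $\det\mt F=\mathfrak D$, $\mt F^T\mt F\vc e_3=d^2\vc e_3$ (both from Lemma \ref{lemma Dolz}), and the modulus constraint $|\vc a|=d^{-1}|\mathfrak D-d^2|$ from Lemma \ref{lemma mio} already force $\mt F^T\mt F$ to be isospectral with $\mt U_i^2$ and $\mt U_j^2$; then the pointwise inequality $|\mt F\vc e|\le\max_k|\mt U_k\vc e|$, tested at the single eigenvector of $\mt F^T\mt F|_{\vc e_3^{\perp}}$ for its larger eigenvalue, forces $\mt F^T\mt F\in\{\mt U_i^2,\mt U_j^2\}$, whence $\mt F$ is a pure phase by polar decomposition, and the count of four follows from \cite[Prop.~4]{BallJames1}. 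The advantages of your version are that it is coordinate-free past the block reduction (so it treats all pairs $(i,j)$ in a block on the same footing, whereas the paper specialises to $(1,4)$ and waves at the others), that it yields a stronger intermediate conclusion ($\mt F$ is a pure phase, not merely that $\vc a\otimes\vc n$ lies in a four-point set), and that the rigidity mechanism is transparent; the paper's version, by contrast, yields explicit formulas for the admissible $(n_1,n_2)$ which can be useful elsewhere. A few small misattributions in your write-up do not affect correctness: $\mt F^T\mt F\vc e_3=d^2\vc e_3$ and $\vc a\cdot\vc n=\mathfrak D-1$ come directly from Lemma \ref{lemma Dolz} rather than Lemma \ref{lemma mio}, and the two-solution count at the end is the single-well result of \cite[Prop.~4]{BallJames1}, not the laminate equation \eqref{auste marte}.
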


\begin{proof}
Let us deal with the case $i=1,j=4$: the other cases can be treated in a similar way.\\

On the one hand, given the assumptions on $\mt U_1, \mt U_4$, \cite[Prop. 4]{BallJames1} assures the existence of four matrices $\vc a^l\otimes\vc n^l\in\R^{3\times3}$, and four rotations $\mt R^l\in SO(3)$, $l=1,\dots,4$, satisfying \eqref{R1 1}--\eqref{R1 2}.\\

On the other hand, from Lemma \ref{lemma mio} we know that the first two components of $\vc n$, namely $n_1,n_2$, must lie on a circle
\beq
\label{circle2345}
n_1^2+n_2^2 = 1-n_3^2 = \frac{\mathfrak D^2-d^2}{\mathfrak D^2-d^4},
\eeq
where $\mathfrak{D}=\det \mt U_1$, and that 
\beq
\label{LaF}
\mt F : = \mt 1+\vc a\otimes\vc n = \mt 1 + (\mathfrak{D}-d)\vc n\otimes\vc n - \frac{1}{n_3}(1-d^2)\vc e_3\otimes \vc n. 
\eeq
Let us now look for unit vectors $\vc v = (v_1,v_2,0)$ satisfying $|\mt U_1\vc v| = |\mt U_4\vc v|$, that is
$$
(av_1+bv_2)^2+(bv_1+cv_2)^2=(cv_1-bv_2)^2+(-bv_1+av_2)^2.
$$
There are up to a change of sign two such unit vectors $\vc v^{+},\vc v^{-}$ satisfying the above identity. Respectively they are such that
$$
{v_1^{+}}({a-c})={v_2^{+}}{(2-a-c-2b)},\qquad
{v_1^{-}}{(a-c)}={v_2^{-}}{(a+c-2b-2)}, \qquad {v_1^{+}}{v_1^{-}}=-{v_2^{+}}{v_2^{-}}.
$$
{This can be proved by using the fact that, as $\lambda_2=1$, $ac-b^2=\lambda$ and $a+c=1+\lambda$, with $\lambda=\lambda_1$ if $\lambda_3=d$ and $\lambda=\lambda_3$ if $\lambda_1=d$. We remark also that, ${v_1^{+}}{v_1^{-}}=-{v_2^{+}}{v_2^{-}}$ together with $|\vc v^+|=|\vc v^-|=1$ yield $v_1^{-}=-v_2^{+}$ and $v_2^{-}=v_1^{+}$.} 
Now, Lemma \ref{lemma Dolz} implies that a necessary condition for $\mt F$ to be in $K_{ij}^{qc}$ is 
$$
|\mt F\vc v^{+}|\leq |\mt U_1\vc v^{+}|,\qquad |\mt F\vc v^{-}|\leq |\mt U_1\vc v^{-}|.
$$
By \eqref{LaF}, after a few computations these two inequalities become
\begin{align}
\label{ineq1qcC}
v_1^+(v_1^+\alpha+v_2^+\beta)+v_2^+(v_1^+\beta+v_2^+\gamma) \leq (av_1^++bv_2^+)^2+(bv_1^++cv_2^+)^2,\\
\label{ineq2qcC}
v_1^-(v_1^-\alpha+v_2^-\beta)+v_2^-(v_1^-\beta+v_2^-\gamma) \leq (av_1^-+bv_2^-)^2+(bv_1^-+cv_2^-)^2,
\end{align}
where 
$$
\alpha = 1 +  \frac{n_1^2}{d^2}(\mathfrak{D}^2-d^4),\qquad 
\gamma = 1 +  \frac{n_2^2}{d^2}(\mathfrak{D}^2-d^4),\qquad 
\beta =  \frac{n_1n_2}{d^2}(\mathfrak{D}^2-d^4).
$$
Summing up the last two inequalities and exploiting the fact that $v_1^{-}=-v_2^{+}$, $v_2^{-}=v_1^{+}$, $(v_1^+)^2+(v_2^+)^2=1$ we get
$$
\alpha + \gamma \leq a^2+c^2+2b^2 = (a+c)^2-2\lambda =  1+\lambda^2
.$$
Here, as above, we used the fact that $ac-b^2=\lambda$ and $a+c=1+\lambda$. Now, we notice that \eqref{circle2345} implies
$$
\alpha+\gamma = 2 + (n_1^2+n_2^2)\frac{1}{d^2}(\mathfrak{D}^2-d^4)= 2 +\frac{1}{d^2}(\mathfrak{D}^2-d^2) = 1+\lambda^2. 
$$
Thus, \eqref{ineq1qcC}--\eqref{ineq2qcC} are actually equalities, that is
\begin{align*}
v_1^+(v_1^+\alpha+v_2^+\beta)+v_2^+(v_1^+\beta+v_2^+\gamma) = (av_1^++bv_2^+)^2+(bv_1^++cv_2^+)^2,\\
v_1^-(v_1^-\alpha+v_2^-\beta)+v_2^-(v_1^-\beta+v_2^-\gamma) = (av_1^-+bv_2^-)^2+(bv_1^-+cv_2^-)^2,
\end{align*}
Exploiting the values of $\alpha,\beta,\gamma$ we can rewrite these identities as
\begin{align*}
(v_1^+n_1 +v_2^+n_2)^2 = \frac{d^2}{\mathfrak{D}^2-d^4}\bigl((av_1^++bv_2^+)^2+(bv_1^++cv_2^+)^2-1\bigr):=r_1^2,\\
(v_2^+n_1 -v_1^+ n_2)^2 = \frac{d^2}{\mathfrak{D}^2-d^4}\bigl((av_1^+ - bv_2^+)^2+(bv_1^+ - cv_2^+)^2-1\bigr):=r_2^2.
\end{align*}
Therefore, $n_1,n_2$ must be at the same time on one of the lines $(v_1^+n_1 +v_2^+n_2)=\pm r_1$ and on one of the lines $(v_2^+n_1 -v_1^+n_2)=\pm r_2$. Therefore there exist a maximum of four couples $(n_1^l,n_2^l)$ such that $|\mt F \vc v^+|\leq |\mt U_1 \vc v^+|$ and $|\mt F \vc v^-|\leq |\mt U_1 \vc v^-|$. As $n_3$ can have both a positive and a negative sign, we hence found eight $\vc n$ and, by Lemma \ref{lemma mio} eight $\vc a$, such that \eqref{Prop eq} is satisfied. These can be expressed as
\[
\begin{split}
(\vc a^1,\vc n^1),\quad(\vc a^2,\vc n^2)&,\quad(\vc a^3,\vc n^3),\quad(\vc a^4,\vc n^4),\\(-\vc a^1,-\vc n^1),\quad(-\vc a^2,-\vc n^2)&,\quad(-\vc a^3,-\vc n^3),\quad(-\vc a^4,-\vc n^4).
\end{split}
\]
Therefore, there exist exactly four matrices $\vc a \otimes \vc n$ satisfying \eqref{Prop eq}, which are given by \eqref{R1 1}--\eqref{R1 2}.
\end{proof}

\begin{remark}
\label{Solod1ecof}
\rm
Consequence of the above result is that a compound twin in a cubic to monoclinic transformation (and hence also in its special cases as the cubic to orthorhombic or the cubic to tetragonal) can satisfy the cofactor conditions only if $d=1$. As shown in \cite[Remark 5.3]{FDP2}, if $d=1$ the set of matrices $\mt F\in (SO(3)\mt U_i\cup SO(3)\mt U_j)^{qc}$ such that $\mt F =\mt 1+\vc a\otimes\vc n$ for some $\vc a\in\R^3,\vc n\in \mathbb{S}^2$ can have dimension two if $\mt U_i,\mt U_j$ generate a compound twin. Indeed, by Lemma \ref{lemma Dolz}, all matrices in the quasiconvex hull of $K$ have $1$ as a singular value.
\end{remark}

\subsection{Type I/II twins}
Let now $\mt U,\mt V\in\R^{3\times3}_{Sym^+}$ satisfy \eqref{e hat}, and $(\mt U,\vc b,\vc m)$, $(\mt U,\hat{\vc b},\hat{\vc m})$ be the type I/type II solutions of the twinning equation \eqref{compatib condit}. Suppose further that either $(\mt U,\vc b,\vc m)$ or $(\mt U,\hat{\vc b},\hat{\vc m})$ satisfies the cofactor conditions. We are interested in which constant average deformation gradients obtained by finely mixing the two martensitic variants $\mt U$ and $\mt V$ are compatible with austenite. We can prove the following result.
\begin{proposition}
\label{1dilqc}
Suppose $(\mt U,\vc b,\vc m),(\mt U,\hat{\vc b},\hat{\vc m})$ are the type I/type II (non necessarily in order) solutions of the twinning equation \eqref{compatib condit} for $\mt U,\mt V\in\R^{3\times3}_{Sym^+}$, with $\mt U,\mt V$ satisfying \eqref{e hat}. Suppose further that $(\mt U,\vc b, \vc m)$ satisfies the cofactor conditions, while $(\mt U,\hat{\vc b},\hat{\vc m})$ does not satisfy (CC2). Then, $\mt F\in (SO(3)\mt U\cup SO(3)\mt V)^{qc}$ is such that $\mt F = \mt 1+\vc a\otimes \vc n$ for some $\vc a\in \R^3,\vc n\in \mathbb{S}^2$ if and only if $$\mt F^T\mt F = (\mt U+\mu\vc b\otimes\vc m)^T(\mt U+\mu\vc b\otimes\vc m),$$ for some $\mu\in[0,1]$. 
\end{proposition}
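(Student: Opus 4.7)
The plan is to establish the biconditional, treating the easy $(\Leftarrow)$ direction first and reserving the main work for the $(\Rightarrow)$ direction, which is the real content of the proposition.

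For the $(\Leftarrow)$ direction, since $(\mt U,\vc b,\vc m)$ satisfies the cofactor conditions, Theorem \ref{TypeI} or Theorem \ref{TypeII} (whichever applies) produces, for every $\mu\in[0,1]$, a simple laminate $\mt F_\mu = (1-\mu)\mt R_U\mt U + \mu\mt R_V\mt V$ that lies in $K^{qc}:=(SO(3)\mt U\cup SO(3)\mt V)^{qc}$ and has the form $\mt 1 + \vc a\otimes\vc n_\mu$ (with a direction $\vc a$ common to both wells). I will then compute $\mt F_\mu^T\mt F_\mu$ in two ways: directly from the laminate formula, and from the right-hand side $(\mt U + \mu\vc b\otimes\vc m)^T(\mt U + \mu\vc b\otimes\vc m)$, using the twinning identity $\mt R_I\mt U = \mt V + \vc b\otimes\vc m$ together with the explicit structure of $\mt R_I$ that is forced by the cofactor conditions. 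Matching the two expressions and invoking the left $SO(3)$-invariance of $K^{qc}$, any $\mt F\in K^{qc}$ sharing this Cauchy--Green tensor differs from $\mt F_\mu$ by a left rotation and can therefore be written as $\mt 1 + \vc a'\otimes \vc n'$ after absorbing that rotation.

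For the $(\Rightarrow)$ direction, I assume $\mt F=\mt 1+\vc a\otimes\vc n\in K^{qc}$ and must show that $\mt F^T\mt F$ takes the prescribed form for some $\mu$. The starting point is that the determinant is polyconvex and takes the constant value $\mathfrak D:=\det\mt U$ on $K$, hence $\det \mt F = \mathfrak D$, giving the scalar constraint $\vc a\cdot\vc n = \mathfrak D - 1$. I will then apply polyconvex Jensen inequalities for functions of the form $\mt G\mapsto|\cof \mt G\,\vc v|^2$ and $\mt G\mapsto|\mt G\vc v|^2$, specialised to directions privileged by the twinning geometry, most importantly $\vc v=\hat{\vc e}$. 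By (CC2) in the equivalent form \eqref{ccperI}/\eqref{ccperII}, namely $|\mt U^{-1}\hat{\vc e}|=1$ for type I (equivalently $|\mt V^{-1}\hat{\vc e}|=1$, since $\mt V^{-1}\hat{\vc e}=\mt Q\mt U^{-1}\hat{\vc e}$) or $|\mt U\hat{\vc e}|=1$ for type II, the supremum of the relevant polyconvex quantity over $K$ collapses to $1$, yielding a sharp constraint on $\mt F$ in that direction.

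Combining the $\det$-constraint with these sharp polyconvex bounds, the admissible pairs $(\vc a,\vc n)$ reduce to a smooth one-parameter family, and a direct computation identifies the associated Cauchy--Green tensor with $(\mt U + \mu\vc b\otimes\vc m)^T(\mt U + \mu\vc b\otimes\vc m)$ for a unique $\mu\in[0,1]$. The main difficulty will be to exclude additional branches of rank-one connections to $\mt 1$ inside $K^{qc}$: this is precisely where the hypothesis that $(\mt U,\hat{\vc b},\hat{\vc m})$ does \emph{not} satisfy (CC2) intervenes. If both twins satisfied (CC2), a second independent family of rank-one connections to $\mt 1$ would be generated by the other twinning mode (in the spirit of Theorems \ref{TypeI}--\ref{TypeII}), and the parameterisation would have to be enlarged accordingly; the failure of (CC2) for $(\mt U,\hat{\vc b},\hat{\vc m})$ translates, through the polyconvex bounds applied to the other privileged direction, into a strict inequality that rules out these spurious solutions.
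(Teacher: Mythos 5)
Your $(\Rightarrow)$ plan rests on "polyconvex Jensen inequalities" of the type $\mt G\mapsto|\mt G\vc v|^2$, which is the tool that characterises $K^{qc}$ when the two wells share a common eigenvector — this is precisely Lemma \ref{lemma Dolz}, and it is what the paper uses in the compound-twin case (Section \ref{Comp twin sec}). But for type I/II twins, $\mt U$ and $\mt V$ do not in general share an eigenvector, and these polyconvex quantities do not characterise the hull; the constraints they provide are inequalities, not the exact equalities needed to isolate a one-parameter family. There is no reason why specialising to $\vc v=\hat{\vc e}$, or to whatever "other privileged direction" corresponds to the second twin mode, would close this gap — and indeed the hypothesis that $(\mt U,\hat{\vc b},\hat{\vc m})$ fails (CC2) is not naturally a strict polyconvex inequality.

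The paper's argument is structurally different and this difference is essential. It invokes the Ball--James parametrisation of $K^{qc}$ for two rank-one connected wells (from \cite[Sec. 5]{BallJames2}): every $\mt F^T\mt F$ in the hull is $\mt L^{-T}\mt M(\alpha,\beta,\gamma)\mt L^{-1}$ with $\alpha\gamma-\beta^2=1$, reducing rank-one connection with $\mt 1$ to the zero set of $f_1(\alpha,\beta,\gamma)=\det(\mt L^{-T}\mt M\mt L^{-1}-\mt 1)$ on a two-dimensional surface. The cofactor condition for $(\mt U,\vc b,\vc m)$ enters via \eqref{ccperI}--\eqref{ccperII} to give $\mt L^T\mt L\vc u_j=\vc u_j$, which forces $g(\beta,\gamma)=g(-\beta,\gamma)$ (a hidden reflection symmetry), and the vanishing of $g$ on the line $\gamma=1$ then collapses $g$ to a function of $\gamma$ alone. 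The failure of (CC2) for $(\mt U,\hat{\vc b},\hat{\vc m})$ is used through \eqref{ildet}--\eqref{altrotwin} to show $c_0\neq 0$, i.e., $g$ is a nonzero multiple of $(1-\gamma)$, so the zero set is exactly $\gamma=1$. This is an algebraic computation within the explicit parametrisation, not an inequality argument. Your outline gestures at roughly the right ingredients (the two twin modes, the role of $\hat{\vc e}$, the $c_0\neq 0$ hypothesis) but the proposed mechanism does not supply the parametrisation on which the whole argument hinges, so the critical step — why the admissible locus is one-dimensional and coincides with $\{\gamma=1\}$ — is missing.

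A minor additional issue: in your $(\Leftarrow)$ direction, "absorbing" a left rotation $\mt Q$ into $\mt 1+\vc a\otimes\vc n$ does not preserve the form $\mt 1+\vc a'\otimes\vc n'$ unless $\mt Q$ is the specific rotation for which $\mt Q\sqrt{\mt F^T\mt F}-\mt 1$ has rank one. What one must check (and what the paper handles through the eigenvalue condition $\phi\leq 0$) is that $\mt F^T\mt F-\mt 1$ has one zero eigenvalue and two of opposite sign, guaranteeing the existence of such a rotation.
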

\begin{proof}
Let us first define
\begin{align*}
\vc u_1 := \mt U^{-1}\vc m |\mt U^{-1}\vc m|^{-1},\quad \vc u_3 := \vc b|\vc b|^{-1},\quad \vc u_2 = \vc u_1\times\vc u_3,\\
\delta=\frac12|\vc b||\vc U^{-1}\vc m|,\qquad \mt L:= \mt U^{-1}(\mt 1-\delta\vc u_3\otimes\vc u_1 ).
\end{align*} 
Thanks to \cite[Section 5]{BallJames2}, we know that, defined
$$
K_{\mt U\mt V}:= SO(3)\mt U\cup SO(3)\mt V,
$$ 
its quasiconvex hull is given by
$$
K_{\mt U\mt V}^{qc} = \bigl\{ \mt F \in \R^{3\times 3}\colon 
\mt F^T\mt F = \mt L^{-T} \mt M (\alpha,\beta,\gamma)\mt L^{-1},\, 0<\alpha\leq 1+\delta^2,\, 0<\gamma\leq 1,\, \alpha\gamma-\beta^2 = 1\bigr\},
$$ 
where 
$$
\mt M (\alpha,\beta,\gamma):= \alpha\vc u_1\otimes\vc u_1 + \vc u_2\otimes\vc u_2 + \gamma \vc u_3\otimes \vc u_3 + \beta \vc u_1\odot\vc u_3,
$$
and we denoted $\vc u_1\odot\vc u_3=\vc u_1\otimes\vc u_3+\vc u_3\otimes\vc u_1$. Let us first consider the scalar function
$$
f_1(\alpha,\beta,\gamma) = \det (\mt L^{-T} \mt M (\alpha,\beta,\gamma)\mt L^{-1}-\mt 1).
$$
We define $\phi(\alpha,\beta,\gamma) = \lambda_M(\alpha,\beta,\gamma)\lambda_m(\alpha,\beta,\gamma),$ where $\lambda_m,\lambda_M$ are respectively the largest and the smallest eigenvalue of $\mt L^{-T} \mt M (\alpha,\beta,\gamma)\mt L^{-1}-\mt 1.$ We denote by $\mathcal{A}$ the region
$$
\mathcal{A}:=\bigl\{(\alpha,\beta,\gamma)\in\R^3	\colon 0<\alpha\leq 1+\delta^2,\quad 0<\gamma\leq 1,\quad \alpha\gamma-\beta^2 = 1\bigr\}.
$$
We are interested in the set
$$
\mathcal{B}:=\bigl\{(\alpha,\beta,\gamma)\in \mathcal{A}\colon f_1(\alpha,\beta,\gamma)=0, \quad \phi(\alpha,\beta,\gamma)\leq 0\bigr\},
$$
which characterises the set of $\mt F\in (SO(3)\mt U\cup SO(3)\mt V)^{qc}$ which are rank-one connected to the identity (see e.g., \cite[Prop. 4]{BallJames1}). We first notice that
$$
f_1 = \frac1{\det\mt U^2} \det \bigl(\mt M - \mt L^T\mt L	\bigr),
$$
and that
$$
g(\beta,\gamma) := \det \mt U^2 f_1\Bigl( \frac{1+\beta^2}{\gamma},\beta,\gamma\Bigr) ,
$$
must be of the form
$$
g(\beta,\gamma) = \frac{c_1\beta^2+c_2}{\gamma} + c_3\beta + c_4\gamma +c_5 ,
$$
for some $c_1,c_2,c_3,c_4,c_5\in \R.$ We also point out that, by \eqref{10}--\eqref{11} together with \eqref{ccperI}--\eqref{ccperII}, we have
\beq
\label{e1LL}
\mt L^T\mt L\vc u_1 = \vc u_1,\quad\text{if $(\mt U,\vc b,\vc m)$ is a type I twin},\qquad\mt L^T\mt L\vc u_3 = \vc u_3,\quad\text{if $(\mt U,\vc b,\vc m)$ is a type II twin}.
\eeq
Let us define $\mt Q := 2\vc u_j\otimes\vc u_j-\mt 1$, with $j=1$ if $(\mt U,\vc b,\vc m)$ is a type I twin, and $j=3$ if $(\mt U,\vc b,\vc m)$ is a type II twin. By \eqref{e1LL} we have that
\[
\begin{split}
g(\beta,\gamma)= \det\bigl((\mt M( \gamma^{-1}(1+\beta^2),\beta,\gamma) - \mt L^T\mt L	)\bigr) = \det \bigl(\mt Q(\mt M( \gamma^{-1}(1+\beta^2),\beta,\gamma) - \mt L^T\mt L	)\mt Q\bigr) = 
g(-\beta,\gamma).
\end{split}
\]
For this reason, $g(\beta,\gamma)$ is even in $\beta$, and is of the form
\beq
\label{g2}
g(\beta,\gamma) = \frac{c_1\beta^2+c_2}{\gamma} + c_4\gamma +c_5 .
\eeq
Now, we notice that
\beq
\label{laminato mu}
(\mt U + \mu \vc b\otimes\vc m)\mt L = (\mt 1 + 2\mu \delta \vc u_3\otimes\vc u_1)\mt U\mt L =  (\mt 1 + \delta(2\mu-1)  \vc u_3\otimes\vc u_1).
\eeq
Therefore, the fact that $(\mt U,\vc b,\vc m)$ satisfies the cofactor conditions implies
$$
\frac1{\det \mt U^2}g(\beta,1) = \det \bigl((\mt U + \mu(\beta)\vc b\otimes\vc m)^T(\mt U + \mu(\beta)\vc b\otimes\vc m)-\mt 1\bigr) = 0,\qquad\text{for all $\beta\in[-\delta,\delta],$} 
$$
where $\mu(\beta)= \frac{\beta + \delta}{2\delta}.$ Thus, \eqref{g2} simplifies to
\beq
\label{g3}
g(\beta,\gamma) = c_2\Bigl(\frac{1}{\gamma}-1\Bigr) + c_4(\gamma -1),
\eeq
that means, $g$ is constant in $\beta.$ Now consider the solution of the twinning equation \eqref{compatib condit} other than $(\mt U,\vc b,\vc m)$, namely $(\mt U,\hat{\vc b},\hat{\vc m})$. By \cite[Prop. 5]{BallJames1} we have
\beq
\label{ildet}
\det\bigl( (\mt U + \mu \hat{\vc b}\otimes\hat{\vc m})^T(\mt U + \mu \hat{\vc b}\otimes\hat{\vc m}) -\mt 1\bigr) = c_0 \mu (1-\mu),
\eeq
for some $c_0\in\R$. We notice that
\beq
\label{altrotwin}
\mt L^T(\mt U + \mu \hat{\vc b}\otimes\hat{\vc m})^T(\mt U + \mu \hat{\vc b}\otimes\hat{\vc m})\mt L = \mt M\Bigl(1+\delta^2,\delta(2\mu-1), \frac{1 + \delta^2(2\mu-1)^2}{1+\delta^2}\Bigr),\qquad\mu\in[0,1].
\eeq
This can be shown by using \eqref{10}--\eqref{11} and the fact that, by \eqref{ccperI}--\eqref{ccperII}, $|\mt U^{-1}\hat{\vc e}|=1$ and $|\mt U\hat{\vc e}|=1$, respectively for type I and type II twins satisfying the cofactor conditions, with $\hat{\vc e}$ being as in \eqref{e hat}. Thus, setting 
$$
\hat\beta(\mu) := \delta(2\mu-1),\qquad \hat\gamma(\mu) := \frac{1 + \delta^2(2\mu-1)^2}{1+\delta^2},$$ 
\eqref{ildet} becomes
$$
\det\bigl( (\mt U + \mu \hat{\vc b}\otimes\hat{\vc m})^T(\mt U + \mu \hat{\vc b}\otimes\hat{\vc m}) -\mt 1\bigr) = \frac{c_0(1+\delta^2)}{4\delta^2}(1-\hat\gamma(\mu)).
$$
Therefore, by \eqref{altrotwin},
$$
g(\hat\beta(\mu),\hat\gamma(\mu)) = \det \mt U^2 \frac{c_0(1+\delta^2)}{4\delta^2}(1-\hat\gamma(\mu)).   
$$
But recalling that $g(\beta,\gamma)$ is independent of $\beta$, a comparison with \eqref{g3} yields
\beq
\label{c0neq0}
g(\beta,\gamma) = \det \mt U^2 \frac{c_0(1+\delta^2)}{4\delta^2}(1-\gamma).
\eeq
Here, $c_0\neq 0$ as we assumed that $(\mt U,\hat{\vc b},\hat{\vc m})$ does not satisfy (CC2). Therefore, by \eqref{laminato mu} and Theorem \ref{thm cof cond}, we get that $(\alpha,\beta,\gamma)\in\mathcal{B}$ if and only if $\gamma=1$, $\beta \in[-1,1]$ and $\alpha = 1+\beta^2$, that is (see \eqref{laminato mu}), if and only if
$$
\mt L^{-T}\mt M\mt L^{-1} = (\mt U + \mu {\vc b}\otimes{\vc m})^T(\mt U + \mu {\vc b}\otimes{\vc m}),\qquad\mu\in[0,1],
$$
which is the claimed result. 
\end{proof}
\begin{remark}
\rm
From \eqref{c0neq0} we notice that if both $(\mt U,{\vc b},{\vc m})$ and $(\mt U,\hat{\vc b},\hat{\vc m})$ satisfy the cofactor conditions $f_1=0$ everywhere at the interior of $\mathcal{A}$. Therefore, in this case, the set $\mathcal{B}$ is two-dimensional and not one-dimensional as in the case of Proposition \ref{1dilqc}. Indeed, $\phi$ is a smooth function of $\alpha,\beta,\gamma$, and is strictly negative on the boundary of $\mathcal{A},$ except for at most two points. Thus, by continuity, there exists an open two-dimensional region contained in $\mathcal{A}$ (and hence in $(SO(3)\mt U\cup SO(3)\mt V)^{qc}$) where $\phi$ is negative.
\end{remark}
\begin{remark}
\rm
Under the hypotheses of Proposition \ref{1dilqc}, the set of matrices in $K_{\mt U,\mt V}^{qc}$ which are rank-one connected to the identity coincides with two smooth and one-dimensional curves of finite length. The dimension of this set is hence the same as it is in the case of twins not satisfying the cofactor conditions. Indeed, for example, in \cite{BallCarstensen1} (see also Lemma \ref{lemma mio} above) it is shown that for 
$$ \mt U = \diag(a, b ,c),\qquad\mt V = \diag(b, a ,c),$$
for some $a,b,c>0$, $a,b,c\neq 1$, the set of matrices in $K_{\mt U,\mt V}^{qc}$ which are rank-one connected to the identity coincides with four smooth and one-dimensional curves of finite length. Nonetheless, the difference is in the fact that, when the cofactor conditions are satisfied, the microstructures which can form an interface with austenite are just simple laminates, and not laminates within laminates. 
\end{remark}

The following corollary is a straightforward consequence of \cite[Prop. 4]{BallJames1} and Proposition \ref{1dilqc}:
\begin{corollary}
\label{1dilqcij}
Suppose $(\mt U_i,\vc b,\vc m),(\mt U_i,\hat{\vc b},\hat{\vc m})$ are the type I/type II solutions of the twinning equation \eqref{compatib condit} for $\mt U_i,\mt U_j\in\R^{3\times3}_{Sym^+}$ (not necessarily in order), with $\mt U_i\neq\mt U_j$ as in \eqref{cubictomono}. 
Suppose further that $(\mt U_i,\vc b, \vc m)$ satisfies the cofactor conditions. 
Then, there exist $\mt R_1,\mt R_2\colon [0,1]\to SO(3)$ such that $\mt F\in K_{ij}^{qc}$ is of the form $\mt F = \mt 1+\vc a\otimes \vc n$ for some $\vc a\in \R^3,\vc n\in \mathbb{S}^2$ if and only if $$\mt F = \mt R_i(\mu)(\mt U+\mu\vc b\otimes\vc m),$$ for some $\mu\in[0,1], i=1,2$. 
\end{corollary}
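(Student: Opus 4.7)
The plan is to combine Proposition \ref{1dilqc} with \cite[Prop. 4]{BallJames1} and the polar decomposition theorem. Throughout I abbreviate $\mt C(\mu) := \mt U_i + \mu \vc b \otimes \vc m$.

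First I will apply Proposition \ref{1dilqc} to reduce the question to a condition on right Cauchy--Green tensors. Its hypothesis that the other twin solution $(\mt U_i, \hat{\vc b}, \hat{\vc m})$ fails \ref{CC2} is automatic by Proposition \ref{nonsipuo}; thus a matrix $\mt F = \mt 1 + \vc a \otimes \vc n$ lies in $K_{ij}^{qc}$ if and only if $\mt F^T \mt F = \mt C(\mu)^T \mt C(\mu)$ for some $\mu \in [0,1]$. Since $\det \mt C(\mu) = \det \mt U_i > 0$ and $\det \mt F$ therefore matches $\det \mt C(\mu)$ in sign, the polar decomposition theorem yields a rotation $\mt R \in SO(3)$ with $\mt F = \mt R \mt C(\mu)$.

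Next, for each fixed $\mu \in [0,1]$ I will count the rotations $\mt R$ such that $\mt R \mt C(\mu) - \mt 1$ is rank one: this is precisely the austenite--martensite compatibility equation \eqref{auste marte} at volume fraction $\mu$. Since $(\mt U_i, \vc b, \vc m)$ satisfies the cofactor conditions, Theorem \ref{thm cof cond} ensures \eqref{auste marte} is solvable for every $\mu \in [0,1]$, which forces the middle singular value of $\mt C(\mu)$ to be identically $1$ on $[0,1]$. Combined with the fact that the extremal singular values of $\mt C(\mu)$ remain strictly less than and strictly greater than $1$ (an adaptation of Remark \ref{StrictL} to the laminated matrix, verified from \ref{CC1}--\ref{CC3}), \cite[Prop. 4]{BallJames1} in this degenerate setting produces exactly two rotations per $\mu$. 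Labelling them $\mt R_1(\mu), \mt R_2(\mu)$ defines the required maps $\mt R_i \colon [0,1] \to SO(3)$ for $i=1,2$.

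The converse direction is immediate: each $\mt R_i(\mu) \mt C(\mu)$ lies in $K_{ij}^{qc}$ as a rotation of a simple laminate between $\mt U_i$ and $\mt U_j$, and is of the form $\mt 1 + \vc a \otimes \vc n$ by the very choice of $\mt R_i(\mu)$. The least routine ingredient is the step concerning the middle singular value of $\mt C(\mu)$: one must verify that it equals $1$ throughout $[0,1]$ so that \cite[Prop. 4]{BallJames1} delivers exactly two (rather than zero or four) rotations. This reduces to $\det(\mt C(\mu)^T \mt C(\mu) - \mt 1) \equiv 0$ on $[0,1]$, which is precisely the degenerate form of the crystallographic equation distilled by the cofactor conditions in Theorem \ref{thm cof cond}.
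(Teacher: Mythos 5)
Your proof is correct and follows essentially the same route as the paper, which states that Corollary \ref{1dilqcij} is a straightforward consequence of Proposition \ref{1dilqc} together with \cite[Prop. 4]{BallJames1}. The one observation you make explicit that the paper leaves implicit — and rightly so, since it is genuinely needed — is that the hypothesis of Proposition \ref{1dilqc} that $(\mt U_i,\hat{\vc b},\hat{\vc m})$ fails \ref{CC2} is guaranteed by Proposition \ref{nonsipuo} in the cubic-to-monoclinic setting (together with the fact that \ref{CC1} depends only on $\mt U_i$, so if it fails for the second twin it fails for the first, contrary to assumption). Your polar-decomposition step to pass from the Cauchy--Green characterization $\mt F^T\mt F = (\mt U_i+\mu\vc b\otimes\vc m)^T(\mt U_i+\mu\vc b\otimes\vc m)$ to $\mt F = \mt R\,(\mt U_i+\mu\vc b\otimes\vc m)$ is the right bridge, and the counting argument via \cite[Prop. 4]{BallJames1} delivering exactly two rotations per $\mu$ is how the two branches $\mt R_1,\mt R_2$ arise. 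The only slightly hand-waved point is the assertion that the extremal singular values of $\mt U_i+\mu\vc b\otimes\vc m$ strictly bracket $1$ uniformly in $\mu\in[0,1]$, which is needed for \cite[Prop. 4]{BallJames1} to produce exactly two (rather than zero) solutions; this is indeed true under \ref{CC1}--\ref{CC3} (it is tied to the role of \ref{CC3} in \cite[Thm.~2]{JamesHyst}), but would benefit from a sentence of justification rather than an appeal to "an adaptation of Remark \ref{StrictL}".
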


\section{Type I and II star twins}
\label{StarSec}
This section is devoted to introducing the definition of star twins, and to show some basic consequences of these special conditions of supercompatibility. 
Below, $\vc a, \vc a_{\mt U}, \vc a_{\mt V}, \vc n_{\mt U}, \vc n_{\mt V}$ are as in Theorem \ref{TypeI} and Theorem \ref{TypeII}. 
\begin{definition}
\label{Def star I}
Let $\mathcal M$ be a subset of $\R^{3\times3}_{Sym^+}.$ Let $\mt U,\mt V\in\mathcal M$, $\mt U\neq\mt V$, satisfying \eqref{e hat} and let $\mt R_{I}\in SO(3), \vc b_{I}\in\R^3,\vc m_{I}\in\mathbb{S}^2$ be a type I solution of the twinning equation \eqref{compatib condit} for $\mt U,\mt V$. Suppose further that $(\mt U,\vc b_{I}, \vc m_{I})$ satisfies the cofactor conditions. Then we say that $(\mt U,\vc b_{I}, \vc m_{I})$ is a type I star twin generated by $(\mt U,\mt V)$ if there exist three different rotations $\mt Q_1,\mt Q_2,\mt Q_3\in SO(3)$, $\mt Q_i \neq \mt1$ for each $i=1,\dots,3$, and $\mu^*\in(0,1)$ satisfying
\begin{enumerate}[label={(S\arabic*)}]
\item\label{S1s} $\mt Q_i\mt U\mt Q_i^T, \mt Q_i\mt V\mt Q_i^T\in\mathcal M$, for each $i=1,\dots,3$,
\item\label{S2s} $\mt Q_i\bigl(\mu^*\vc n_{\mt U}+(1-\mu^*)\vc n_{\mt V}\bigr) = \chi_i\bigl(\mu^*\vc n_{\mt U}+(1-\mu^*)\vc n_{\mt V}\bigr)$ for each $i=1,\dots,3$ and with $\chi_i=\pm 1$,
\item \label{S3s}$\bigl((\mt Q_i \vc a )\times (\mt Q_j \vc a)\bigr) \cdot\vc a \neq 0 $ for each $i,j=1,\dots,3$, $i\neq j$ and $\bigl((\mt Q_1 \vc a )\times (\mt Q_2 \vc a)\bigr) \cdot(\mt Q_3 \vc a ) \neq 0 $.
\end{enumerate}
If the number of $\mt Q_i$ satisfying \ref{S1s}--\ref{S3s} is two and not three we say that $(\mt U,\vc b_{I}, \vc m_{I})$ is a type I half-star twin.  
\end{definition}

\begin{definition}
\label{Def star}
Let $\mathcal M$ be a subset of $\R^{3\times3}_{Sym^+}.$ Let $\mt U,\mt V\in\mathcal M$, $\mt U\neq\mt V$, and let $\mt R_{II}\in SO(3), \vc b_{II}\in\R^3,\vc m_{II}\in\mathbb{S}^2$ be a type II solution of the twinning equation \eqref{compatib condit} for $\mt U,\mt V$. Suppose further that $(\mt U,\vc b_{II}, \vc m_{II})$ satisfies the cofactor conditions. Then we say that $(\mt U,\vc b_{II}, \vc m_{II})$ is a type II star twin generated by $(\mt U,\mt V)$ if there exist three different rotations $\mt Q_1,\mt Q_2,\mt Q_3\in SO(3)$, $\mt Q_i \neq \mt1$ for each $i=1,\dots,3$, and $\mu^*\in(0,1)$ satisfying
\begin{enumerate}[label={(T\arabic*)}]
\item\label{T1s} $\mt Q_i\mt U\mt Q_i^T, \mt Q_i\mt V\mt Q_i^T\in\mathcal M$, for each $i=1,\dots,3$,
\item\label{T2s} $\mt Q_i\bigl(\mu^*\vc a_{\mt U}+(1-\mu^*)\vc a_{\mt V}\bigr) = \chi_i\bigl(\mu^*\vc a_{\mt U}+(1-\mu^*)\vc a_{\mt V}\bigr)$ for each $i=1,\dots,3$ and with $\chi_i=\pm 1$,
\item \label{T3s}$\bigl((\mt Q_i \vc m_{II} )\times (\mt Q_j \vc m_{II})\bigr) \cdot\vc m_{II} \neq 0 $ for each $i,j=1,\dots,3$, $i\neq j$ and $\bigl((\mt Q_1 \vc m_{II} )\times (\mt Q_2 \vc m_{II})\bigr) \cdot(\mt Q_3 \vc m_{II} ) \neq 0 $.
\end{enumerate}
If the number of $\mt Q_i$ satisfying \ref{T1s}--\ref{T3s} is two and not three we say that $(\mt U,\vc b_{II}, \vc m_{II})$ is a type II half-star twin.  
\end{definition}
\begin{remark}
\label{remark sui mus}
\rm
It would be possible to generalise \ref{T2s} (or similarly \ref{S2s}) in the definition of star twins by requiring the existence of $\mu_1^*,\mu_2^*\in[0,1]$ such that
$$
\mt Q_i\bigl(\mu^*_1\vc a_{\mt U}+(1-\mu^*_1)\vc a_{\mt V}\bigr) = \chi_i\bigl(\mu^*_2\vc a_{\mt U}+(1-\mu^*_2)\vc a_{\mt V}\bigr) \text{ for each $i=1,\dots,3$ and with $\chi_i=\pm 1$.}
$$
However, a fine observation of the proof of Theorem \ref{main thm II} (resp. Theorem \ref{main thm I}) yields that, in the cubic to monoclinic case, the only interesting case is when $\mu_1^*=\mu_2^*$, making the generalisation not relevant to our context.
\end{remark}
\begin{remark}
\rm
Definition \ref{Def star I} and Definition \ref{Def star} both imply the existence of four exactly compatible laminates $\mt F_0,\mt F_1,\mt F_2,\mt F_3$ satisfying \eqref{startwinsDefIntro}. Conversely, in cubic to monoclinic transformations, given four exactly compatible type I (or of type II) laminates $\mt F_0,\mt F_1,\mt F_2,\mt F_3$ satisfying \eqref{startwinsDefIntro}, we have that Proposition \ref{tutti quanti}, Remark \ref{dueconuno} and Remark \ref{remark sui mus} imply the satisfaction of Definition \ref{Def star I} (resp. Definition \ref{Def star}).
\end{remark} 
The following two propositions state that, in the presence of star-twins, the set of average deformation gradients which can form stress free phase interfaces with austenite is unusually large:
\begin{proposition}
\label{LMhull}
Let $\mathcal M$ be a finite subset of $\R^{3\times3}_{Sym^+},$ $\mt U,\mt V\in\mathcal M$ satisfying \eqref{e hat}, and let $(\mt U,\vc b_{II}, \vc m_{II})$ be a type II half-star twin generated by $(\mt U,\mt V)$. Then,
$$
\mt 1+\bigl(\mu^*\vc a_{\mt U}+(1-\mu^*)\vc a_{\mt V}\bigr)\otimes \vc m_{II},\qquad\mt 1+\bigl(\mu^*\vc a_{\mt U}+(1-\mu^*)\vc a_{\mt V}\bigr)\otimes \mt Q_1\vc m_{II},\qquad\mt 1+\bigl(\mu^*\vc a_{\mt U}+(1-\mu^*)\vc a_{\mt V}\bigr)\otimes \mt Q_2\vc m_{II},
$$
and all their convex combinations are contained in the set $K^{qc}$, where
$$
K:=\bigcup_{\mt M\in\mathcal{M}}SO(3)\mt M .
$$
If $(\mt U,\vc b_{II}, \vc m_{II})$ is a type II star twin generated by $(\mt U,\mt V)$, then also
$$
\mt 1+\bigl(\mu^*\vc a_{\mt U}+(1-\mu^*)\vc a_{\mt V}\bigr)\otimes \mt Q_3\vc m_{II}\in K^{qc}.
$$
\end{proposition}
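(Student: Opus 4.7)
The strategy is to transport the exactly compatible austenite–martensite construction of Theorem \ref{TypeII} to each of the symmetry-related twins $(\mt Q_i\mt U\mt Q_i^T, \mt Q_i\vc b_{II}, \mt Q_i\vc m_{II})$ furnished by Proposition \ref{tutti quanti}, and then to close the resulting finite set of matrices under rank-one lamination.

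By Proposition \ref{tutti quanti}, for each $i$ the triple $(\mt Q_i\mt U\mt Q_i^T, \mt Q_i\vc b_{II}, \mt Q_i\vc m_{II})$ is a type II solution of the twinning equation for $\mt Q_i\mt U\mt Q_i^T$ and $\mt Q_i\mt V\mt Q_i^T$, satisfying the cofactor conditions. By \ref{T1s}, both rotated variants belong to $\mathcal M$, so their $SO(3)$-orbits are contained in $K$. Conjugating the two identities from Theorem \ref{TypeII}, $\mt R_{\mt U}\mt U = \mt 1 + \vc a_{\mt U}\otimes \vc m_{II}$ and $\mt R_{\mt V}\mt V = \mt 1 + \vc a_{\mt V}\otimes \vc m_{II}$, by $\mt Q_i$ yields
$$(\mt Q_i\mt R_{\mt U}\mt Q_i^T)(\mt Q_i\mt U\mt Q_i^T) = \mt 1 + (\mt Q_i\vc a_{\mt U})\otimes (\mt Q_i\vc m_{II}),$$
and analogously for $\mt V$. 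Hence the exactly compatible laminate at volume fraction $\mu^*$ for the $i$-th rotated twin, namely
$$\mt 1 + \mt Q_i\bigl(\mu^*\vc a_{\mt U}+(1-\mu^*)\vc a_{\mt V}\bigr)\otimes \mt Q_i\vc m_{II},$$
is a rank-one combination of two elements of $K$ (this is the content of Theorem \ref{TypeII} applied to the rotated twin, cf.\ \eqref{intror2}) and therefore lies in $K^{rc}\subset K^{qc}$. Applying \ref{T2s}, the left factor equals $\chi_i(\mu^*\vc a_{\mt U}+(1-\mu^*)\vc a_{\mt V})$, and absorbing the sign $\chi_i=\pm1$ into the plane normal $\mt Q_i\vc m_{II}$ (which in Theorem \ref{TypeII} is defined only up to orientation) produces the three matrices in the statement for the half-star case; the analogous argument with $i=3$ supplies the fourth matrix in the full star-twin case.

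Convex combinations are then handled by iterated rank-one lamination. Writing $\vc v^* := \mu^*\vc a_{\mt U}+(1-\mu^*)\vc a_{\mt V}$ and $\vc n_0 := \vc m_{II}$, $\vc n_j := \mt Q_j\vc m_{II}$, each of the matrices constructed above has the form $\mt 1 + \vc v^*\otimes \vc n_j$, so pairwise differences $\vc v^*\otimes(\vc n_j-\vc n_k)$ have rank at most one. Since $K^{qc}$ is closed under rank-one lamination (quasi-convexity implies rank-one convexity), any pairwise convex combination remains in $K^{qc}$. Iterating: given weights $\lambda_j\geq 0$ with $\sum_j\lambda_j=1$, one first laminates $\mt 1+\vc v^*\otimes \vc n_0$ and $\mt 1+\vc v^*\otimes \vc n_1$ with weights $\lambda_0/(\lambda_0+\lambda_1)$ and $\lambda_1/(\lambda_0+\lambda_1)$, then laminates the result against $\mt 1+\vc v^*\otimes \vc n_2$ with weights $\lambda_0+\lambda_1$ and $\lambda_2$ (and once more against $\mt 1+\vc v^*\otimes \vc n_3$ in the full star case), producing an arbitrary convex combination.

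The central step is the symmetry transfer via Proposition \ref{tutti quanti} composed with Theorem \ref{TypeII}; the only delicate point is reconciling the sign $\chi_i$ of \ref{T2s} with the chosen orientation of the rotated twin plane normal, an ambiguity that is harmless because Theorem \ref{TypeII} fixes $\mt Q_i\vc m_{II}$ only up to a sign. Note that \ref{T3s} is not required here — it enters only to guarantee that the three (resp.\ four) rank-one directions are genuinely distinct and linearly independent, ensuring the geometric richness that motivates the definition, but it plays no role in showing $K^{qc}$-membership.
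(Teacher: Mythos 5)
Your argument is essentially the paper's: conjugate the Theorem \ref{TypeII} identities by each $\mt Q_i$, use \ref{T1s} to place the rotated pure phases in $K$, form the $\mu^*$-weighted rank-one combinations to obtain the exactly compatible laminates, and close under further rank-one lamination; the paper packages the closure via the lamination convex hull $K^{lc}\subset K^{qc}$, while you invoke rank-one convexity of $K^{qc}$, which is an equivalent way of saying the same thing. Your observation that \ref{T3s} is not needed for $K^{qc}$-membership is correct (it only ensures the normals are genuinely distinct and spanning). You also, rightly, flag the sign $\chi_i$ from \ref{T2s} that the paper's proof silently drops: what one actually exhibits in $K^{qc}$ is $\mt 1+\chi_i\bigl(\mu^*\vc a_{\mt U}+(1-\mu^*)\vc a_{\mt V}\bigr)\otimes\mt Q_i\vc m_{II}$, which matches the matrix in the statement only after replacing $\mt Q_i\vc m_{II}$ by $\chi_i\mt Q_i\vc m_{II}$. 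This imprecision is shared with the paper and harmless for the intended applications (pairwise rank-one connectedness and linear independence are sign-insensitive), but note that your justification --- that $\vc m_{II}$ is \emph{defined only up to orientation} --- is informal, since $\vc m_{II}$ is fixed in the hypotheses; it would be cleaner simply to carry the $\chi_i$ explicitly.
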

\begin{proof}
We restrict to the case of half-star twins; the case of star twins follows similarly. From \cite{Muller} we know that given a set $C\in\R^{3\times3}$, the lamination convex hull of $C$, namely $C^{lc}$, is contained in $C^{qc}$. We recall for the benefit of the reader that
$$
C^{lc}:=\bigcup_{i=1}^\infty C^{(i)}, 
$$
where $C^{(1)}=C$ and
$$
C^{(i+1)}:=C^{(1)}\cup\bigl\{\mu\mt A+(1-\mu)\mt B\colon \mt A,\mt B\in C^{(i)},\quad\rank(\mt A-\mt B)=1,\quad\mu\in(0,1)	\bigr\}.
$$
Therefore, let $K:=\bigcup_{M\in\mathcal{M}}SO(3)M$. By Theorem \ref{TypeII} we know that
\[
\begin{split}
\mt R_{\mt U}\mt U=\mt 1+ \vc a_{\mt U}\otimes \vc m_{II},&\quad \mt R_{\mt V}\mt V=\mt 1+ \vc a_{\mt V}\otimes \vc m_{II},\\
\mt Q_1 \mt R_{\mt U}\mt Q_1^T \mt Q_1\mt U\mt Q_1^T=\mt 1+ \mt Q_1\vc a_{\mt U}\otimes \mt Q_1\vc m_{II},&\quad 
\mt Q_1\mt R_{\mt V}\mt Q_1^T\mt Q_1\mt V\mt Q_1^T=\mt 1+ \mt Q_1\vc a_{\mt V}\otimes \mt Q_1\vc m_{II},\\
\mt Q_2 \mt R_{\mt U}\mt Q_2^T \mt Q_2\mt U\mt Q_2^T=\mt 1+ \mt Q_2\vc a_{\mt U}\otimes \mt Q_2\vc m_{II},&\quad \mt Q_2\mt R_{\mt V}\mt Q_2^T\mt Q_2\mt V\mt Q_2^T=\mt 1+ \mt Q_2\vc a_{\mt V}\otimes \mt Q_2\vc m_{II},
\end{split}
\]
are all in $K$. Thus,
\[
\begin{split}
\mu^*\mt R_{\mt U}\mt U+(1-\mu^*)\mt R_{\mt V}\mt V &=\mt 1+ \bigl(\mu^*\vc a_{\mt U}+(1-\mu^*)\vc a_{\mt V}\bigr)\otimes \vc m_{II}, \\
 \mu^*\mt Q_1\mt R_{\mt U}\mt Q_1^T\mt Q_1\mt U\mt Q_1^T+(1-\mu^*)\mt Q_1\mt R_{\mt V}\mt Q_1^T\mt Q_1\mt V\mt Q_1^T &=\mt 1+ \bigl(\mu^*\vc a_{\mt U}+(1-\mu^*)\vc a_{\mt V}\bigr)\otimes \mt Q_1\vc m_{II},\\
 \mu^*\mt Q_2\mt R_{\mt U}\mt Q_2^T\mt Q_2\mt U\mt Q_2^T+(1-\mu^*)\mt Q_2\mt R_{\mt V}\mt Q_2^T\mt Q_2\mt V\mt Q_2^T &=\mt 1+ \bigl(\mu^*\vc a_{\mt U}+(1-\mu^*)\vc a_{\mt V}\bigr)\otimes \mt Q_2\vc m_{II},\end{split}
\]
are in $K^{(2)}$, and all their convex combinations are in $K^{(3)}$.
\end{proof}
\begin{remark}
\rm
\label{3dInterf}
Type II half-star/star twins are interesting also because, by combining laminates, we can easily construct macroscopic curved interfaces between austenite and martensite whose normal does not lie in a plane. Indeed, in the notation of Proposition \ref{LMhull}, let us define $${\vc a}^*:=\mu^*\vc a_{\mt U}+(1-\mu^*)\vc a_{\mt V}.$$ 
Then,
$$
\mt G(\mu_1,\mu_2):=\mt 1 + \vc a^*\otimes \bigl(\mu_1(\mu_2\vc m_{II}+(1-\mu_2)\mt Q_1\vc m_{II})+(1-\mu_1)\mt Q_2\vc m_{II} \bigr)
$$
is in $K^{qc}$ for every $\mu_1,\mu_2\in[0,1],$ and $\vc m_{II},\mt Q_1\vc m_{II},\mt Q_2\vc m_{II}$ span $\R^3.$ Therefore, given a smooth bounded domain $\Omega$, one can choose $\mu_1,\mu_2\colon\Omega\to[0,1]$ to be space dependent, and, provided
$$
\mu_1\bigl(\nabla\mu_2\times (\vc m_{II}-\mt Q_1\vc m_{II}) + \nabla\mu_1\times(\mu_2\vc m_{II}+(1-\mu_2)\mt Q_1\vc m_{II})
-\nabla\mu_1\times \mt Q_2\vc m_{II}=0,
$$
$\mt G(\mu_1(\vc x),\mu_2(\vc x))$ is an average deformation gradient. Choosing for example 
$$
\mu_2(\vc x) = f(\vc x\cdot(\vc m_{II}-\mt Q_1\vc m_{II})),\qquad
 \mu_1(\vc x) = c_1 \biggl(\int_0^{\vc x\cdot(\vc m_{II}-\mt Q_1\vc m_{II})}f(s)\,\mathrm{d}s + \vc x\cdot (\mt Q_1\vc m_{II}-\mt Q_2\vc m_{II})\biggr) + c_2,
$$
for some smooth $f\colon \R\to[0,1]$ and some $c_1,c_2\in\R$ such that $\mu_1(\vc x)\in[0,1]$ for every $\vc x\in \Omega$, will give an average deformation gradient of martensite, which is compatible with austenite across an interface whose normal does not lie in a plane.
\end{remark}
In the same way we can prove
\begin{proposition}
\label{LMhullI}
Let $\mathcal M$ be a finite subset of $\R^{3\times3}_{Sym^+},$ $\mt U,\mt V\in\mathcal M$ and let $(\mt U,\vc b_{I}, \vc m_{I})$ be a type I half-star twin generated by $(\mt U,\mt V)$. Then
$$
\mt 1+\vc a\otimes \bigl(\mu^*\vc n_{\mt U}+(1-\mu^*)\vc n_{\mt V}\bigr),\qquad\mt 1+\mt Q_1\vc a\otimes \bigl(\mu^*\vc n_{\mt U}+(1-\mu^*)\vc n_{\mt V}\bigr),\qquad\mt 1+\mt Q_2\vc a\otimes \bigl(\mu^*\vc n_{\mt U}+(1-\mu^*)\vc n_{\mt V}\bigr),
$$
and all their convex combinations are contained in the set $K^{qc}$, where
$$
K:=\bigcup_{M\in\mathcal{M}}SO(3)M.
$$
If $(\mt U,\vc b_{I}, \vc m_{I})$ is a type I star twin generated by $(\mt U,\mt V)$, then also
$$
\mt 1+\mt Q_3\vc a\otimes \bigl(\mu^*\vc n_{\mt U}+(1-\mu^*)\vc n_{\mt V}\bigr)\in K^{qc}.
$$
\end{proposition}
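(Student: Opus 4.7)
The plan is to mirror the proof of Proposition \ref{LMhull} almost verbatim, with the roles of $\vc a$ and $\vc m_{II}$ in the type II case replaced by those of $\vc n_{\mt U},\vc n_{\mt V}$ and $\vc a$ respectively. The key structural difference is that, by Theorem \ref{TypeI}, the shared vector between $\mt R_{\mt U}\mt U-\mt 1$ and $\mt R_{\mt V}\mt V-\mt 1$ sits in the left slot of the rank-one tensor (it is $\vc a$), while the normals $\vc n_{\mt U},\vc n_{\mt V}$ differ. Consequently, the convex combination with weight $\mu^*$ produces $\mt 1+\vc a\otimes(\mu^*\vc n_{\mt U}+(1-\mu^*)\vc n_{\mt V})$, which explains why condition \ref{S2s} is formulated on the normal average rather than on the shear direction.

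Concretely, I would first invoke Theorem \ref{TypeI} to write
$\mt R_{\mt U}\mt U=\mt 1+\vc a\otimes\vc n_{\mt U}$ and $\mt R_{\mt V}\mt V=\mt 1+\vc a\otimes\vc n_{\mt V}$,
both of which lie in $K$. Next, for each $i=1,2$ (and $i=3$ in the star case), condition \ref{S1s} combined with Proposition \ref{tutti quanti} ensures that $(\mt Q_i\mt U\mt Q_i^T,\mt Q_i\vc b_I,\mt Q_i\vc m_I)$ is again a type I twin satisfying the cofactor conditions; applying Theorem \ref{TypeI} to this rotated pair gives
$\mt Q_i\mt R_{\mt U}\mt Q_i^T(\mt Q_i\mt U\mt Q_i^T)=\mt 1+\mt Q_i\vc a\otimes\mt Q_i\vc n_{\mt U}$ and similarly for $\mt V$, both in $K$.

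The core computation is then to take the $\mu^*$--convex combinations of these two rank-one perturbations and use \ref{S2s} to collapse $\mt Q_i(\mu^*\vc n_{\mt U}+(1-\mu^*)\vc n_{\mt V})=\chi_i(\mu^*\vc n_{\mt U}+(1-\mu^*)\vc n_{\mt V})$, so that
$$
\mu^*\mt Q_i\mt R_{\mt U}\mt Q_i^T(\mt Q_i\mt U\mt Q_i^T)+(1-\mu^*)\mt Q_i\mt R_{\mt V}\mt Q_i^T(\mt Q_i\mt V\mt Q_i^T)=\mt 1+\chi_i\mt Q_i\vc a\otimes\bigl(\mu^*\vc n_{\mt U}+(1-\mu^*)\vc n_{\mt V}\bigr).
$$
Each such matrix is a rank-one average of elements of $K$, hence lies in $K^{(2)}\subset K^{lc}\subset K^{qc}$, by the same argument quoted from \cite{Muller} in the proof of Proposition \ref{LMhull}. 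The sign $\chi_i=\pm1$ can be absorbed into $\mt Q_i\vc a$ (equivalently, pulled out of the rank-one tensor), which matches the matrices listed in the statement up to replacement of $\mt Q_i\vc a$ by $\chi_i\mt Q_i\vc a$; this sign ambiguity is the same one tacitly handled in the proof of Proposition \ref{LMhull} and does not affect membership in $K^{qc}$.

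Finally, because all three (resp. four in the star case) matrices produced above share the common right factor $\mu^*\vc n_{\mt U}+(1-\mu^*)\vc n_{\mt V}$, the difference between any two of them has rank one, so any convex combination is in $K^{(3)}\subset K^{qc}$; iterating puts all convex combinations in $K^{lc}\subset K^{qc}$. I do not anticipate a genuine obstacle: the only delicate point is the bookkeeping of $\chi_i$ just mentioned, and the verification that condition \ref{S3s} is not needed for this particular statement (it is linear independence information used elsewhere, but the hull membership here only requires \ref{S1s} and \ref{S2s}).
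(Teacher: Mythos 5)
Your proof is correct and takes essentially the same approach as the paper: the paper's treatment of Proposition \ref{LMhullI} is just the remark \emph{``In the same way we can prove''}, and you have faithfully spelled out that analogy to the displayed proof of Proposition \ref{LMhull}, correctly identifying the role swap (the shared vector $\vc a$ now sits in the left slot while the normals $\vc n_{\mt U},\vc n_{\mt V}$ are averaged, which is why \ref{S2s} acts on the normal) and noting that \ref{S3s} is not needed here. The $\chi_i$ sign bookkeeping you flag is also tacitly present in the paper's own proof of Proposition \ref{LMhull} and has no bearing on membership in $K^{qc}$ or on the rank-one structure of the differences.
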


{
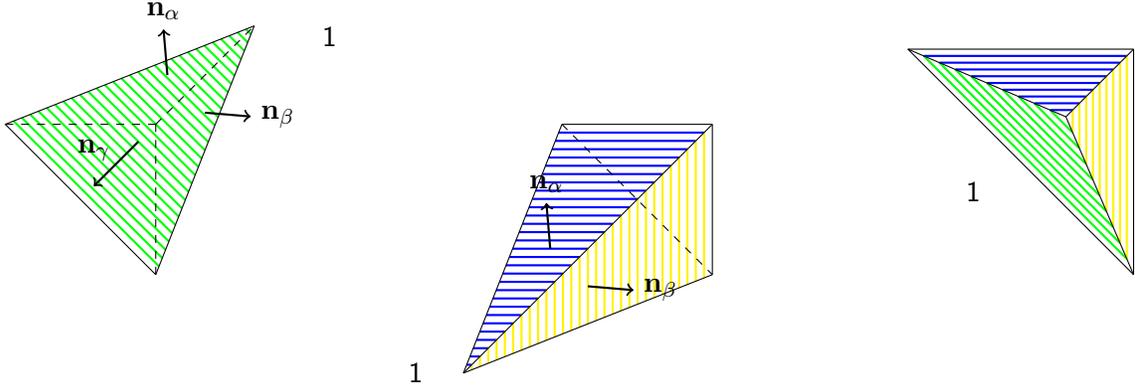
\begin{figure}
\centering
\begin{tikzpicture}[scale=2.0]

\foreach \x in {-0.1,-0.2,...,-2.9}{
\draw[green, thick] ({1+\x/6},{-\x/6},{\x}) -- ({-\x/6},{1+\x/6},{\x});
}

\draw[thin] (0,1,0) -- (1,0,0);
\draw[thin,dashed] (0,1,0) -- (1,1,0);
\draw[thin,dashed] (1,0,0) -- (1,1,0);
\draw[thin] (1,0,0) -- (0.5,0.5,-3);
\draw[thin] (0,1,0) -- (0.5,0.5,-3);
\draw[thin,dashed] (1,1,0) -- (0.5,0.5,-3);

\draw [->,thick] ({1/2},{3/4},{-3/2}) -- ({1/2},{3/4+0.99/3},{-3/2+0.2/3}); 
\draw [->,thick] ({3/4},{1/2},{-3/2}) -- ({3/4+0.99/3},{1/2},{-3/2+0.2/3}); 
\draw [->,thick] ({1/2},{1/2},{-3/3}) -- ({1/2-0.69/3},{1/2-0.69/3},{-3/4-0.23/3}); 
\filldraw [red] ({1/2},{3/4+0.99/3},{-3/2+0.2/3}) circle (0pt) node[anchor=south,black] {$\vc n_\alpha$};
\filldraw [red] ({3/4+0.99/3},{1/2},{-3/2+0.2/3}) circle (0pt) node[anchor=west,black] {$\vc n_\beta$};
\filldraw [red] ({1/2-0.69/3},{1/2-0.69/3+0.1},{-3/4-0.23/3}) circle (0pt) node[anchor=south,black] {$\vc n_\gamma$};
\filldraw [red] (0.5+0.5,0.5-0.2,-3) circle (0pt) node[anchor=south,black] {$\mt 1$};

\foreach \x in {0.1,0.2,...,2.9}{
\draw[blue, thick] ({\x/6+3.7},{1-\x/6},{\x}) -- ({1-\x/6+3.7},{1-\x/6},{\x});
}
\foreach \x in {0.1,0.2,...,2.9}{
\draw[yellow, thick] ({1-\x/6+3.7},{\x/6},{\x}) -- ({1-\x/6+3.7},{1-\x/6},{\x});
}

\draw[thin,dashed] ({0+3.7},1,0) -- ({1+3.7},0,0);
\draw[thin] ({0+3.7},1,0) -- ({1+3.7},1,0);
\draw[thin] ({1+3.7},0,0) -- ({1+3.7},1,0);
\draw[thin] ({1+3.7},0,0) -- ({0.5+3.7},0.5,3);
\draw[thin] ({0+3.7},1,0) -- ({0.5+3.7},0.5,3);
\draw[thin] ({1+3.7},1,0) -- ({0.5+3.7},0.5,3);

\filldraw [red] ({1/2+3.7},{3/4+0.99/3},-{-3/2+0.2/3}) circle (0pt) node[anchor=south,black] {$\vc n_\alpha$};
\draw [->,thick] ({1/2+3.7},{3/4},-{-3/2}) -- ({1/2+3.7},{3/4+0.99/3},-{-3/2+0.2/3}); 
\filldraw [red] ({3/4+0.99/3+3.7},{1/2},-{-3/2+0.2/3}) circle (0pt) node[anchor=west,black] {$\vc n_\beta$};
\draw [->,thick] ({3/4+3.7},{1/2},-{-3/2}) -- ({3/4+0.99/3+3.7},{1/2},-{-3/2+0.2/3}); 

\filldraw [red] (0.5+3.7-0.2,0.5,3) circle (0pt) node[anchor=east,black] {$\mt 1$};

\foreach \x in {0.1,0.2,...,1}{
\draw[green, thick] ({1.5*(0.7*\x)+6},{1.5*(1-0.3*\x)}) -- ({1.5*(1-0.3*\x)+6},{1.5*(0.7*\x)});
}
\foreach \x in {0.1,0.2,...,1}{
\draw[blue, thick] ({1.5*(1-0.3*\x)+6},{1.5*(1-0.3*\x)}) -- ({1.5*(0.7*\x)+6},{1.5*(1-0.3*\x)});
}
\foreach \x in {0.1,0.2,...,1}{
\draw[yellow, thick] ({1.5*(1-0.3*\x)+6},{1.5*(1-0.3*\x)}) -- ({1.5*(1-0.3*\x)+6},{1.5*(0.7*\x)});
}
\draw[thin] ({6},1.5) -- ({1.5+6},0);
\draw[thin] ({6},1.5) -- ({1.5+6},1.5);
\draw[thin] ({1.5+6},0) -- ({1.5+6},1.5);
\draw[thin] ({1.5+6},0) -- ({0.7*1.5+6},{0.7*1.5});
\draw[thin] ({6},1.5) -- ({0.7*1.5+6},{0.7*1.5});
\draw[thin] ({1.5+6},1.5) -- ({0.7*1.5+6},{0.7*1.5});

\filldraw [red] ({0.7*1.5+6-0.5},{0.7*1.5-0.5}) circle (0pt) node[anchor=east,black] {$\mt 1$};

\end{tikzpicture}
\caption{\label{starCrosslaminatesII} {Three different views of a pyramidal microstructure constructed with type II half-star twins (also possible with 
star twins). 
The polyhedron is divided into a blue, a yellow and a green region, in each of which we have a different exactly compatible laminate, namely $\mt F_1=\mt 1 + \vc a^*\otimes \vc n_\alpha,$ $\mt F_2=\mt 1 + \vc a^*\otimes \vc n_\beta$ and $\mt F_3=\mt 1 + \vc a^*\otimes \vc n_\gamma$, for some $\vc a^*\in\R^3\setminus\{\vc 0\}$. The three constant average deformation gradients $\mt F_1,\mt F_2,\mt F_3$ satisfy \eqref{startwinsDefIntro}, and $\vc n_\alpha,\vc n_\beta,\vc n_\gamma$ are linearly independent. 
At the three faces of the pyramid with normals $\vc n_\alpha,\vc n_\beta,\vc n_\gamma$ the martensitic laminates are compatible with austenite without an interface layer. Such pyramid can be used as a building block for three-dimensional curved interfaces. {In this picture, $\mt 1$ represents the identity matrix, the deformation gradient for the undistorted austenite phase, and denotes the austenite region.}
}
}
\end{figure}

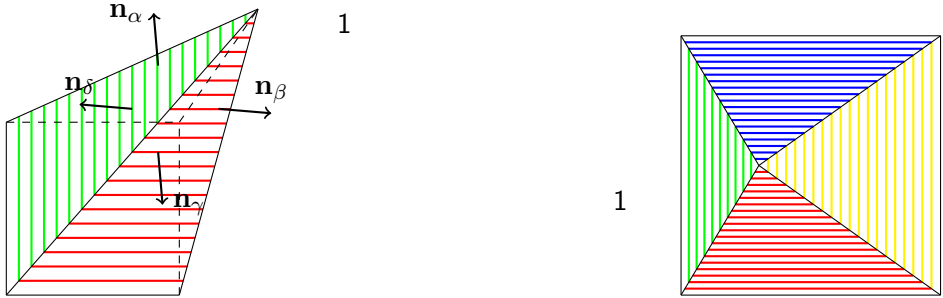
\begin{figure}
\centering
\begin{tikzpicture}[scale=2.30]

\foreach \x in {0.1,0.2,...,2}{
\draw[green, thick] ({0.3*\x/2},{1-\x*0.5/2},{-3*\x/2}) -- ({0.3*\x/2},{\x*0.5/2},{-3*\x/2});
}
\foreach \x in {0.1,0.2,...,2}{
\draw[red, thick] ({1-\x*0.7/2},{0.5*\x/2},{-3*\x/2}) -- ({0.3*\x/2},{\x*0.5/2},{-3*\x/2});
}

\draw[thin] (0,1,0) -- (0,0,0);
\draw[thin] (1,0,0) -- (0,0,0);
\draw[thin,dashed] (0,1,0) -- (1,1,0);
\draw[thin,dashed] (1,0,0) -- (1,1,0);
\draw[thin] (1,0,0) -- (0.3,0.5,-3);
\draw[thin] (0,1,0) -- (0.3,0.5,-3);
\draw[thin,dashed] (1,1,0) -- (0.3,0.5,-3);
\draw[thin] (0,0,0) -- (0.3,0.5,-3);

\draw [->,thick] ({0.15},{1/2},{-3/2}) -- ({0.15-0.99/3},{1/2},{-3/2-0.2/3}); 
\draw [->,thick] ({0.65},{1/2},{-3/2}) -- ({0.65+0.99/3},{1/2},{-3/2+0.2/3});  
\draw [->,thick] ({0.3},{1/4},{-3/2}) -- ({0.3},{1/4-0.99/3},{-3/2-0.2/3}); 
\draw [->,thick] ({0.3},{3/4},{-3/2}) -- ({0.3},{3/4+0.99/3},{-3/2+0.2/3});  

\filldraw [red] ({0.3},{3/4+0.99/3},{-3/2+0.2/3}) circle (0pt) node[anchor=east,black] {$\vc n_\alpha$};
\filldraw [red] ({0.65+0.99/3},{1/2},{-3/2+0.2/3}) circle (0pt) node[anchor=south,black] {$\vc n_\beta$};
\filldraw [red] ({0.3},{1/4-0.99/3},{-3/2-0.2/3}) circle (0pt) node[anchor=west,black] {$\vc n_\gamma$};
\filldraw [red] ({0.15-0.99/3},{1/2},{-3/2-0.2/3}) circle (0pt) node[anchor=south,black] {$\vc n_\delta$};

\filldraw [red] (0.5+0.3,0.5-0.2,-3) circle (0pt) node[anchor=south,black] {$\mt 1$};

\foreach \x in {0.1,0.2,...,1}{
\draw[green, thick] ({1.5*(0.3*\x)+3.9},{1.5*(0.5*\x)}) -- ({1.5*(0.3*\x)+3.9},{1.5*(1-0.5*\x)});
}
\foreach \x in {0.1,0.2,...,2}{
\draw[red, thick] ({1.5*(0.3*\x/2)+3.9},{1.5*(0.5*\x/2)}) -- ({1.5*(1-0.7*\x/2)+3.9},{1.5*(0.5*\x/2)});
}
\foreach \x in {0.1,0.2,...,2}{
\draw[yellow, thick] ({1.5*(1-0.7*\x/2)+3.9},{1.5*(1-0.5*\x/2)}) -- ({1.5*(1-0.7*\x/2)+3.9},{1.5*(0.5*\x/2)});
}
\foreach \x in {0.1,0.2,...,2}{
\draw[blue, thick] ({1.5*(1-0.7*\x/2)+3.9},{1.5*(1-0.5*\x/2)}) -- ({1.5*(0.3*\x/2)+3.9},{1.5*(1-0.5*\x/2)});
}

\draw[thin] ({3.9},1.5) -- ({0+3.9},0);
\draw[thin] ({3.9},0) -- ({1.5+3.9},0);
\draw[thin] ({3.9},1.5) -- ({1.5+3.9},1.5);
\draw[thin] ({1.5+3.9},0) -- ({1.5+3.9},1.5);
\draw[thin] ({1.5+3.9},0) -- ({0.3*1.5+3.9},{0.5*1.5});
\draw[thin] ({3.9},1.5) -- ({0.3*1.5+3.9},{0.5*1.5});
\draw[thin] ({1.5+3.9},1.5) -- ({0.3*1.5+3.9},{0.5*1.5});
\draw[thin] ({0+3.9},0) -- ({0.3*1.5+3.9},{0.5*1.5});
\filldraw [red] ({0.7*1.5+3.9-1.3},{0.7*1.5-0.5}) circle (0pt) node[anchor=east,black] {$\mt 1$};

\end{tikzpicture}
\caption{\label{starCrosslaminatesIIbis}  {Two different views of a pyramidal microstructure constructed with type II star twins. 
The polyhedron is divided into a blue, a yellow, a red and a green region, in each of which we have a different exactly compatible laminate, namely $\mt F_1=\mt 1 + \vc a^*\otimes \vc n_\alpha,$ $\mt F_2=\mt 1 + \vc a^*\otimes \vc n_\beta,$ $\mt F_3=\mt 1 + \vc a^*\otimes \vc n_\gamma$ and $\mt F_4=\mt 1 + \vc a^*\otimes \vc n_\delta$, for some $\vc a^*\in\R^3\setminus\{\vc 0\}$. The four constant average deformation gradients $\mt F_1,\mt F_2,\mt F_3,\mt F_4$ satisfy \eqref{startwinsDefIntro}, and $\vc n_\alpha,\vc n_\beta,\vc n_\gamma,\vc n_\delta$ are linearly independent. 
At the four faces of the pyramid with normals $\vc n_\alpha,\vc n_\beta,\vc n_\gamma,\vc n_\delta$ the martensitic laminates are compatible with austenite without an interface layer. Such pyramid can be used as a building block for three-dimensional curved interfaces.
{In this picture, $\mt 1$ represents the identity matrix, the deformation gradient for the undistorted austenite phase, and denotes the austenite region.}
}
}
\end{figure}
}

\section{Star twins in cubic to monoclinic transformations}
\label{starsecMono}
In this section we want characterise the matrices $\mt U_i$ as in \eqref{cubictomono} such that there exist type I and type II  star twins. We start with the following simple lemma
\begin{lemma}
\label{rotaz lemma}
Let $\vc v\in\R^3$, $\vc v\neq 0$, then $\mt Q\in SO(3)$ satisfies $\mt Q\vc v=-\vc v$ if and only if $\mt Q$ is a rotation of $\pi$ and axis $\vc u_{\mt R}\in\mathbb{S}^2$, that is, $\mt R=2\vc u_{\mt R}\otimes\vc u_{\mt R}-\mt 1$, and $\vc u_{\mt R}\cdot\vc v=0$. The matrix $\mt Q\in SO(3)$, $\mt Q\neq\mt1$ satisfies $\mt Q\vc v=\vc v$ if and only if the axis of $\mt Q$ is parallel to $\vc v$.
\end{lemma}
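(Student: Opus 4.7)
The plan is to rely on the axis--angle (Rodrigues) representation of rotations: every $\mt Q\in SO(3)$ admits a unit axis $\vc u\in\mathbb{S}^2$ and an angle $\theta\in[0,\pi]$ with
$$
\mt Q = \cos\theta\,\mt 1 + \sin\theta\,[\vc u]_\times + (1-\cos\theta)\,\vc u\otimes\vc u,
$$
and the eigenvalues of $\mt Q$ are $1,\,e^{i\theta},\,e^{-i\theta}$. Both claims reduce to identifying the eigenspaces of $\mt Q$ corresponding to the real eigenvalues $\pm 1$.

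For the first equivalence, I would argue as follows. If $\mt Q = 2\vc u_{\mt R}\otimes\vc u_{\mt R}-\mt 1$ with $\vc u_{\mt R}\cdot\vc v=0$, then a direct computation yields $\mt Q\vc v = 2(\vc u_{\mt R}\cdot\vc v)\vc u_{\mt R}-\vc v = -\vc v$. Conversely, if $\mt Q\vc v=-\vc v$ for some $\vc v\neq\vc 0$, then $-1$ is an eigenvalue of $\mt Q$. Looking at the spectrum $\{1,e^{\pm i\theta}\}$, this forces $\theta=\pi$; plugging $\theta=\pi$ into the Rodrigues formula collapses it to $\mt Q=2\vc u\otimes\vc u-\mt 1$. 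The condition $\mt Q\vc v=-\vc v$ then becomes $(\vc u\cdot\vc v)\vc u=\vc 0$, whence $\vc u\cdot\vc v=0$, so $\vc u=\vc u_{\mt R}$ is perpendicular to $\vc v$.

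For the second equivalence, the key point is that the $+1$--eigenspace of a non-trivial rotation is precisely the one-dimensional span of its axis. If $\mt Q\neq\mt 1$ and $\mt Q\vc v=\vc v$, then $\vc v$ lies in this one-dimensional fixed space, and must therefore be parallel to $\vc u$. Conversely, the Rodrigues formula gives $\mt Q\vc u=\vc u$ directly, so every vector parallel to the axis is fixed.

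There is no real obstacle here; the statement is a routine consequence of the axis--angle parametrisation of $SO(3)$. The only care needed is to separate the case $\theta=\pi$ (where $-1$ becomes an eigenvalue and the $-1$--eigenspace is two-dimensional, namely the plane $\vc u^\perp$) from the case $\theta\in(0,\pi)$ (where the only real eigenvalue is $1$, with eigenspace $\operatorname{span}(\vc u)$), and to exclude $\theta=0$ in the second claim by the hypothesis $\mt Q\neq\mt 1$.
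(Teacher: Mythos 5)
Your proof is correct. Both you and the paper start from the Rodrigues axis--angle parametrisation of $\mt Q$, but the route from there to $\theta=\pi$ is different. The paper decomposes $\vc v=\vc v_{\vc u_{\mt R}}+\vc v_{\vc u_{\mt R}^\perp}$, writes out $\mt Q\vc v=-\vc v$ componentwise, and dots with each component: dotting with $\vc v_{\vc u_{\mt R}}$ forces $\vc v_{\vc u_{\mt R}}=\vc 0$, dotting with $\vc v_{\vc u_{\mt R}^\perp}$ forces $(1+\cos\theta)|\vc v_{\vc u_{\mt R}^\perp}|^2=0$, and since $\vc v\neq\vc 0$ this gives $\theta=\pi$. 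You instead invoke the spectrum $\{1,e^{i\theta},e^{-i\theta}\}$ of a rotation and observe that $\mt Q\vc v=-\vc v$ means $-1$ is an eigenvalue, which immediately forces $\theta=\pi$; the subsequent collapse of Rodrigues to $2\vc u\otimes\vc u-\mt 1$ and the orthogonality condition $\vc u\cdot\vc v=0$ are identical in both. Your spectral argument is a little slicker and avoids the componentwise bookkeeping, at the cost of taking the eigenvalue structure of $SO(3)$ as known; the paper's argument is more elementary and self-contained. For the second statement the paper simply says it ``follows from the definition of rotation axis,'' whereas you spell out the (correct) one-dimensionality of the $+1$-eigenspace when $\mt Q\neq\mt 1$, which is a clearer justification. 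There is no gap in your proof.
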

\begin{proof}
We can write $\mt R$ in terms of its angle of rotation $\theta$ and its rotation axis $\vc u_{\mt R}$ as
$$
\mt R = \cos(\theta)\mt 1+\sin(\theta)[\vc u_{\mt R}]_\times+(1-\cos(\theta))\vc u_{\mt R}\otimes\vc u_{\mt R},
$$
where $[\vc u_{\mt R}]_\times$ is the cross product matrix of $\vc u_{\mt R}$. Therefore, after decomposing $\vc v$ into a parallel and an orthogonal part to $\vc u_{\mt R}$, $\vc v_{\vc u_{\mt R}}$ and $\vc v_{\vc u_{\mt R}^\perp}$, we get that $\mt R\vc v= - \vc v$ is equivalent to
$$
\cos(\theta)\bigl(\vc v_{\vc u_{\mt R}}+\vc v_{\vc u_{\mt R}^\perp}) + \sin(\theta)\bigl({\vc u_{\mt R}}\times\vc v_{\vc u_{\mt R}^\perp}	\bigr)+(1-\cos(\theta))\vc u_{\mt R}\bigl(\vc u_{\mt R}\cdot\vc v_{\vc u_{\mt R}}\bigr)=-\vc v_{\vc u_{\mt R}}-\vc v_{\vc u_{\mt R}^\perp}.
$$
Thus, multiplying the equation by $\vc v_{\vc u_{\mt R}^\perp}$ we get that either $\theta = \pi$, or $\vc v_{\vc u_{\mt R}^\perp}=\vc 0$. But multiplying the equation by $\vc v_{\vc u_{\mt R}}$ leads to $\vc v_{\vc u_{\mt R}}=\vc 0$, and therefore $\theta=\pi$. The second statement follows from the definition of rotation axis.
\end{proof}
The lemma below states that, if a type II twin generated by $\mt U_i,\mt U_j$, where $(i,j)$ are a pair in column (A) (resp. (B)) of Table \ref{Table 1}, is a type I (or a type II) star twin, then all the other type I (resp. type II) twins generated by $\mt U_k,\mt U_l$, where $(k,l)$ is a generic pair in column (A) (resp. (B)) of Table \ref{Table 1}, is a type II star twin. In what follows, we denote by $\mathcal{P}_{24}$ the symmetry group of cubic austenite.
\begin{lemma}
\label{lemma tanti}
Let $\mathcal{M}:=\cup_{i=1}^{12}\mt U_i$, where the $\mt U_i$ are the positive definite matrices, with $a,b,c,d>0$, given in \eqref{cubictomono}. Suppose also that $\mt U_i,\mt U_j$, $i\neq j =1,\dots,12$, generate a type I/II star twin (resp. a half-star twin). Then, for every $\mt Q\in \mathcal{P}_{24}$ 
the type I/II twins generated by $\mt Q\mt U_i\mt Q^T,\mt Q\mt U_j\mt Q^T$ are star twins (resp. a half-star twin).
\end{lemma}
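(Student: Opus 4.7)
The strategy is to show that conjugation by $\mt Q \in \mathcal{P}_{24}$ preserves the star-twin structure by explicitly constructing the three (or two) rotations required by Definition \ref{Def star I}/\ref{Def star} from those of the original star twin. Specifically, given $\mt Q_1, \mt Q_2, \mt Q_3 \in SO(3)$ satisfying \ref{S1s}--\ref{S3s} (resp. \ref{T1s}--\ref{T3s}) for $(\mt U_i, \mt U_j)$, the natural candidates for the conjugated problem are $\tilde{\mt Q}_k := \mt Q \mt Q_k \mt Q^T$, $k=1,\dots,3$. These are manifestly in $SO(3)$, are pairwise distinct, and are all different from $\mt 1$ (since conjugation is injective on $SO(3)$).

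The main verifications proceed as follows. First, I would invoke Proposition \ref{tutti quanti} to transport the cofactor conditions and the type I/II/compound classification from $(\mt U_i, \vc b, \vc m)$ to $(\mt Q\mt U_i\mt Q^T, \mt Q\vc b, \mt Q\vc m)$, so that the new twin is of the same type and still satisfies (CC1)--(CC3). For \ref{S1s} (resp. \ref{T1s}), the key observation is that $\mathcal{M} = \bigcup_{i=1}^{12}\mt U_i$ is stable under conjugation by the austenite symmetry group $\mathcal{P}_{24}$: this is exactly the fact that the twelve matrices in \eqref{cubictomono} are generated by the cubic action on any single one of them. Hence $\tilde{\mt Q}_k(\mt Q\mt U_i\mt Q^T)\tilde{\mt Q}_k^T = \mt Q(\mt Q_k \mt U_i \mt Q_k^T)\mt Q^T \in \mathcal{M}$, and similarly for $\mt U_j$.

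For \ref{T2s} (and analogously \ref{S2s}), I would use Theorem \ref{TypeII} to observe that, after conjugation, the admissible $\vc a'_{\mt U}, \vc a'_{\mt V}$ may be chosen as $\mt Q \vc a_{\mt U}, \mt Q\vc a_{\mt V}$ while $\vc m'_{II} = \mt Q\vc m_{II}$. Then
\begin{equation*}
\tilde{\mt Q}_k\bigl(\mu^*\vc a'_{\mt U}+(1-\mu^*)\vc a'_{\mt V}\bigr)
= \mt Q\mt Q_k\mt Q^T \mt Q\bigl(\mu^*\vc a_{\mt U}+(1-\mu^*)\vc a_{\mt V}\bigr)
= \mt Q \chi_k\bigl(\mu^*\vc a_{\mt U}+(1-\mu^*)\vc a_{\mt V}\bigr)
= \chi_k\bigl(\mu^*\vc a'_{\mt U}+(1-\mu^*)\vc a'_{\mt V}\bigr),
\end{equation*}
with the same $\mu^*$ and $\chi_k$ as in the original star twin. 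For \ref{T3s} (resp. \ref{S3s}), the proof reduces to the identities $\mt Q\vc x \times \mt Q\vc y = \mt Q(\vc x\times\vc y)$ and $\mt Q\vc x \cdot \mt Q\vc y = \vc x\cdot\vc y$ valid for $\mt Q\in SO(3)$; applying these to $\tilde{\mt Q}_k\vc m'_{II} = \mt Q\mt Q_k\vc m_{II}$ reduces each non-vanishing scalar or triple product for the conjugated system to the corresponding non-vanishing one for the original star twin.

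Finally, the half-star case uses exactly the same argument but with two rotations instead of three, so no additional work is needed. The potential obstacle one might worry about is ensuring that the conjugated twin is of the same category (type I vs.\ type II vs.\ compound) and that the triple $(\mt Q_k\vc b, \mt Q_k\vc m)$ indeed gives back a valid solution of the twinning equation for the conjugated pair, but this is precisely the content of Proposition \ref{tutti quanti}, so the entire lemma is essentially a naturality statement for star-twin data under the $SO(3)$-action by conjugation, with $\mathcal{P}_{24}$-invariance of $\mathcal{M}$ playing the only structural role.
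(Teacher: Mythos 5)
Your proposal is correct and takes essentially the same route as the paper: you conjugate the rotations $\mt Q_k$ by $\mt Q$ to produce $\tilde{\mt Q}_k = \mt Q\mt Q_k\mt Q^T$, invoke Proposition \ref{tutti quanti} to transport the cofactor conditions and twin type, use $\mathcal{P}_{24}$-invariance of $\mathcal{M}$ for \ref{T1s}/\ref{S1s}, and observe that \ref{T2s}/\ref{S2s} and \ref{T3s}/\ref{S3s} follow from the transformation rules $\vc a_{\mt U_k}=\mt Q\vc a_{\mt U_i}$, $\hat{\vc m}_{II}=\mt Q\vc m_{II}$ together with the $SO(3)$-invariance of the cross and triple products. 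This is exactly what the paper does.
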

\begin{proof}
We prove the result for type II star twins, the case of type I star twins follows a similar argument.\\
The fact that $\mt U_k,\mt U_l$, where $\mt U_k=\mt Q\mt U_i\mt Q^T,\mt U_l=\mt Q\mt U_j\mt Q^T$ for some $\mt Q\in \mathcal{P}_{24}$, generate a type II twin satisfying the cofactor conditions is a consequence of Proposition \ref{tutti quanti}. Let $\mu^*\in(0,1)$, $\mt Q_1,\mt Q_2,\mt Q_3\in SO(3)$ be such that $(\mt U_i,\vc a_{II},\vc n_{II})$, the type II twin generated by $\mt U_i,\mt U_j$, satisfies Definition \ref{Def star}. Let also $\hat{\mt Q}_m = \mt Q \mt Q_m\mt Q^T$ for every $m=1,2,3$, and $(\mt U_k,\hat{\vc b}_{II},\hat{\vc m}_{II})$ be the type II twin generated by $\mt U_k,\mt U_l$. We claim that $\mu^*$, $\hat{\mt Q}_1,\hat{\mt Q}_2,\hat{\mt Q}_3\in SO(3)$ are such that $(\mt U_k,\hat{\vc b}_{II},\hat{\vc m}_{II})$ satisfies \ref{T1s}--\ref{T3s} in Definition \ref{Def star}. Indeed, we recall that $\mt Q\mt V\mt Q^T\in \mathcal{M}$ for every $\mt V\in\mathcal M,\mt Q\in \mathcal P_{24}$ (see e.g., \cite{Batt}). Thus, as
$$
\hat{\mt Q}_m\mt U_k\hat{\mt Q}_m = \mt Q\mt Q_m\mt U_i \mt Q_m\mt Q,\qquad \hat{\mt Q}_m\mt U_l\hat{\mt Q}_m = \mt Q\mt Q_m\mt U_j \mt Q_m\mt Q,
$$
and given that $\mt U_i,\mt U_j$ satisfy \ref{T1s}, we can write
$$
\hat{\mt Q}_m\mt U_k\hat{\mt Q}_m ,\, \hat{\mt Q}_m\mt U_l\hat{\mt Q}_m \in \mathcal{M},\qquad\text{for all $m=1,2,3$,}
$$
that is, \ref{T1s}. As $\vc a_{\mt U_k}=\mt Q\vc a_{\mt U_i},\vc a_{\mt U_l}=\mt Q\vc a_{\mt U_j}$, we also have
$$
\hat{\mt Q}_m  (\mu^*\vc a_{\mt U_k}+(1-\mu^*)\vc a_{\mt U_l})=\mt Q (\mu^*\vc a_{\mt U_i}+(1-\mu^*)\vc a_{\mt U_j}) = \mu^*\vc a_{\mt U_k}+(1-\mu^*)\vc a_{\mt U_l}, \qquad\text{for all $m=1,2,3$,}
$$
which is \ref{T2s}. Finally, as $\hat{\vc m}_{II}=\mt Q\vc m_{II}$, we have
\[
\begin{split}
\Bigl(\bigl(\hat{\mt Q}_1\hat{\vc m}_{II}\bigl)\times\bigl(\hat{\mt Q}_2\hat{\vc m}_{II}\bigl)\Bigr)\cdot \bigl(\hat{\mt Q}_3\hat{\vc m}_{II}\bigl) = \Bigl(\bigl(\mt Q_1{\vc m}_{II}\bigl)\times\bigl(\mt Q_2{\vc m}_{II}\bigl)\Bigr)\cdot \bigl(\mt Q_3{\vc m}_{II}\bigl)\neq 0,\\
\Bigl(\bigl(\hat{\mt Q}_m\hat{\vc m}_{II}\bigl)\times\bigl(\hat{\mt Q}_n\hat{\vc m}_{II}\bigl)\Bigr)\cdot \hat{\vc m}_{II} = \Bigl(\bigl(\mt Q_m{\vc m}_{II}\bigl)\times\bigl(\mt Q_n{\vc m}_{II}\bigl)\Bigr)\cdot {\vc m}_{II}\neq 0,
\end{split}
\]
for every $m,n=1,2,3$, which is \ref{T3s}.
\end{proof}
We are now in the position to prove the following result:
\begin{theorem}
\label{main thm II}
Let $\mathcal{M}:=\cup_{i=1}^{12}\mt U_i$, where the $\mt U_i$ are the positive definite matrices, with $a,b,c,d>0$, given in \eqref{cubictomono}{\color{red}}. Then, a type II twin generated by $\mt U_i,\mt U_j$, $i\neq j =1,\dots,12$, satisfying the cofactor conditions is a half-star twin if and only if the eigenvalues of $\mt U_1$, namely $0<\lambda_1\leq \lambda_2=1 \leq \lambda_3$ satisfy 
one of the following conditions
\beq
\label{half star 2 twins cond}
\begin{split}
4d^2\bigl( d^2+\lambda^2_3-2\bigr) = (d-\lambda_3)^2(1-d^2),\qquad&\text{if $d\neq 1$ and $\lambda_1=d,$}\\
4d^2\bigl( d^2+\lambda^2_1-2\bigr) = (d-\lambda_1)^2(1-d^2),\qquad&\text{if $d\neq 1$ and $\lambda_3=d,$}\\
\lambda_1^2\bigl(5 \lambda_3^2-1\bigr) = 8\lambda_1\lambda_3-5+\lambda_3^2,\qquad&\text{if $d=1$ and $\lambda_3>1$.}
\end{split}
\eeq
A type II twin generated by $\mt U_i,\mt U_j$, $i\neq j =1,\dots,12$, satisfying the cofactor conditions is a star twin if and only if $a\neq c$ 
and one of the following holds
\beq
\label{star 2 twins cond}
\begin{split}
d^2\bigl( d^2+\lambda^2_3-2\bigr) = (d-\lambda_3)^2(1-d^2),\qquad&\text{if $d\neq 1$ and $\lambda_1=d,$ }\\
d^2\bigl( d^2+\lambda^2_1-2\bigr) = (d-\lambda_1)^2(1-d^2),\qquad&\text{if $d\neq 1$ and $\lambda_3=d$.}
\end{split}
\eeq
\end{theorem}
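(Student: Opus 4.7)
The plan is to reduce the problem, via Lemma \ref{lemma tanti}, to the analysis of a single representative type II twin and then enumerate the finitely many rotations $\mt Q_i\in\mathcal{P}_{24}$ (the cubic symmetry group) compatible with \ref{T1s}. In view of Proposition \ref{nonsipuo} and Remark \ref{dueconuno}, a cofactor-satisfying type II twin lies in one column of Table \ref{Table 1}, so I may fix, say, the pair $(\mt U_1,\mt U_{11})$. For this pair, Theorem \ref{TypeII} supplies explicit vectors $\vc a_{\mt U},\vc a_{\mt V},\vc m_{II}$, which, using $\lambda_2=1$ and $|\mt U_1\hat{\vc e}|=1$ from \eqref{ccperII}, can be rewritten purely in terms of $\lambda_1,\lambda_3,d$. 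Set $\vc a^*(\mu):=\mu\vc a_{\mt U}+(1-\mu)\vc a_{\mt V}$.

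The next step is to translate \ref{T2s} via Lemma \ref{rotaz lemma}: every non-identity $\mt Q_i$ is either a $\pi$-rotation whose axis is orthogonal to $\vc a^*(\mu^*)$, or a rotation whose axis is parallel to $\vc a^*(\mu^*)$. Condition \ref{T1s} then forces $\mt Q_i$ to permute the twelve variants, and since the stabiliser of $\mt U_1$ inside $\mathcal{P}_{24}$ is generated by the $\pi$-rotation about the monoclinic axis $\vc e_3$, one can take $\mt Q_i\in\mathcal{P}_{24}$ modulo this stabiliser. This yields a finite catalogue of candidate rotations (the nine $\pi$-rotations about $\langle100\rangle$ and $\langle110\rangle$ axes, the $\pm\pi/2$ rotations about $\langle100\rangle$ axes, and the $\pm2\pi/3$ rotations about $\langle111\rangle$ axes).

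For each candidate $\mt Q$, the condition $\mt Q\vc a^*(\mu^*)=\pm\vc a^*(\mu^*)$ becomes an affine equation in $\mu^*$ with coefficients that are rational functions of $a,b,c,d$. Enforcing two (resp.\ three) such equations to share a common root $\mu^*\in(0,1)$ eliminates $\mu^*$ and, after substituting $a+c=1+\lambda_1\lambda_3$, $ac-b^2=\lambda_1\lambda_3$, and $|\mt U_1\hat{\vc e}|=1$, produces a single identity in $\lambda_1,\lambda_3,d$. Splitting by which of $\lambda_1,\lambda_3,1$ coincides with $d$ (forced by the geometry of where $\vc e_3$ sits with respect to the twinning plane) yields the three alternatives in \eqref{half star 2 twins cond}. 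The extra multiplicative factor $4$ distinguishing \eqref{star 2 twins cond} from \eqref{half star 2 twins cond} appears precisely because a third independent rotation locks $\mu^*$ to a specific value (compatible with \ref{T3s}), whereas the half-star case leaves more freedom. Finally, one verifies \ref{T3s} from the explicit forms of $\mt Q_i\vc m_{II}$; the orthorhombic degeneration $a=c$ creates an additional $\pi$-rotation symmetry of $\mt U_1$ that makes two of the three candidate images of $\vc m_{II}$ coplanar with $\vc m_{II}$, so \ref{T3s} fails and $a\neq c$ is needed for star twins. The case $d=1$ is excluded from \eqref{star 2 twins cond} because there $\lambda_2=d=1$ has geometric multiplicity at least two, and the extra freedom it introduces is incompatible with fulfilling three independent rotation constraints simultaneously.

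The main obstacle will be the bookkeeping of this case analysis: one must track, for each candidate $\mt Q\in\mathcal{P}_{24}$, whether it fixes $\vc e_3$ (the monoclinic axis) or swaps it with $\pm\vc e_1,\pm\vc e_2$, since the relative position of the parameter $d$ in the spectrum of $\mt U_1$ changes the form of $\vc a^*(\mu)$ accordingly. Once the correct grouping of candidates is fixed, the apparent complexity collapses because the cofactor identities reduce the multi-variable polynomial conditions on $(a,b,c,d,\mu^*)$ to the single-variable identities stated in the theorem.
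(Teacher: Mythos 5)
Your plan mirrors the paper's proof in broad strokes: reduce via Lemma \ref{lemma tanti} to the pair $(\mt U_1,\mt U_{11})$, write out $\vc a_{\mt U},\vc a_{\mt V},\vc m_{II}$ explicitly using the cofactor relations, apply Lemma \ref{rotaz lemma} to turn \ref{T2s} into a parallel/orthogonal condition between $\vc a^*(\mu^*)$ and candidate rotation axes, enumerate the finitely many candidates, and extract polynomial relations in $\lambda_1,\lambda_3,d$. However, there are several genuine errors in your proposed mechanisms that would break the argument.

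First, the explanation of the factor $4$ is wrong and even points the wrong way. The factor $4$ sits in the \emph{half-star} conditions \eqref{half star 2 twins cond}, not the star ones, and it is not a consequence of the half-star case ``leaving more freedom'' in $\mu^*$. In the paper's computation both cases pin down $\mu^*$: the half-star solution forces $\mu^*=\tfrac12$ together with $z_2/z_1=\pm\tfrac{d-\lambda}{2d}$, whereas the star solution forces $\mu^*=\tfrac{d}{\lambda+d}$ or $\tfrac{\lambda}{\lambda+d}$ together with $z_2/z_1=\pm\tfrac{d-\lambda}{d}$. The factor $4$ arises purely from the extra $2$ in the denominator of $z_2/z_1$, which squares upon substitution into the cofactor identity $z_2/z_1=\sqrt{-d^4+d^2(3-\lambda^2)+(\lambda^2-2)}/(1-d^2)$. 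Two rotations versus three rotations is what distinguishes half-star from star; the arithmetic origin of the factor $4$ is the different value of $z_2/z_1$, not a looser constraint on $\mu^*$.

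Second, your explanation of why $d=1$ is excluded from \eqref{star 2 twins cond} does not hold up. When $d=1$ and $\lambda_2=1$, the eigenvalue $1$ of $\mt U_1$ does \emph{not} have multiplicity two: the case is $0<\lambda_1<d=1<\lambda_3$, so the spectrum is still simple. What actually happens (the paper's Case 3) is that the components of $\mu\vc a_{\mt U_1}+(1-\mu)\vc a_{\mt U_{11}}$ rearrange and the table of parallel/orthogonal conditions collapses: only the four $\langle 110\rangle$ $\pi$-axes not fixing $\vc e_3$ survive the check, and those pin $\mu^*$ to exactly $\tfrac12$, yielding at most two admissible rotations. A claim about eigenvalue multiplicity is not the right obstruction.

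Third, you include the $\pm 2\pi/3$ rotations about $\langle111\rangle$ axes in your candidate list but never rule them out, and the ruling out is genuinely nontrivial. Lemma \ref{rotaz lemma} shows these can only realize $\chi_i=+1$, which requires $\vc a^*(\mu^*)\parallel\langle111\rangle$; and this \emph{does happen} precisely when $\mu^*=\tfrac12$ and $z_2/z_1=\pm\tfrac{d-\lambda}{2d}$, i.e.\ exactly in the half-star regime. If these rotations were admissible, the half-star case would actually produce three rotations and your classification would contradict the theorem. The rescue is that the corresponding image $\mt Q\vc m_{II}$ turns out to be antiparallel to one of the images produced by the $\langle 110\rangle$ $\pi$-rotations already in play, so \ref{T3s} fails by collinearity. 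Without this check your candidate catalogue is too large and your half-star/star dichotomy does not close.

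Finally, a small but consequential error: the substitutions you propose, $a+c=1+\lambda_1\lambda_3$ and $ac-b^2=\lambda_1\lambda_3$, are only valid when $d=1$. For $d\neq 1$ the correct relations are $a+c=1+\lambda$ and $ac-b^2=\lambda$, where $\lambda$ is whichever of $\lambda_1,\lambda_3$ is not equal to $d$; equivalently $\lambda_1\lambda_3=\lambda d$. Using the $d=1$ relations in the $d\neq 1$ cases would produce the wrong polynomial identities.
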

\begin{remark}
\label{quante condizioni}
\rm 
It is possible to solve the equations in \eqref{half star 2 twins cond}--\eqref{star 2 twins cond} to get a direct relation between the eigenvalues of the $\mt U_i$'s. Indeed, \eqref{half star 2 twins cond} is equivalent to
\begin{align*}
\lambda_1 = \frac {d-d^3- d2\sqrt{2}\sqrt{6d^2-1-3d^4}}{1-5d^2},\qquad &\text{if }\lambda_3=d\in (1,\sqrt{1+2^{\frac12}3^{-\frac12}}),\\
\lambda_1 = \frac {d-d^3 + d2\sqrt{2}\sqrt{6d^2-1-3d^4}}{1-5d^2},\qquad &\text{if }\lambda_3=d\in (\sqrt{5^{-1}9},\sqrt{1+2^{\frac12}3^{-\frac12}}),\\
\lambda_3 = \frac {d-d^3- d2\sqrt{2}\sqrt{6d^2-1-3d^4}}{1-5d^2},\qquad &\text{if }\lambda_1=d\in (\sqrt{1-2^{\frac12}3^{-\frac12}},1),\\
\lambda_1= \frac{4\lambda_3\pm\sqrt5(\lambda_3^2-1) }{5\lambda_3^2-1},\qquad& \text{if $\lambda_3>1,\, d=1$}
,
\end{align*}
when $0<\lambda_1\leq 1 \leq \lambda_3.$ Similarly, the condition \eqref{star 2 twins cond} can be rewritten as
\begin{align*}
\lambda_1 = \frac {d-d^3- d\sqrt{6d^2-2-3d^4}}{1-2d^2},\qquad &\text{if }\lambda_3=d\in (1,\sqrt{1+3^{-\frac12}}),\\
\lambda_1 = \frac {d-d^3+ d\sqrt{6d^2-2-3d^4}}{1-2d^2},\qquad &\text{if }\lambda_3=d\in (\sqrt{2^{-1}3},\sqrt{1+3^{-\frac12}}),\\
\lambda_3 = \frac {d-d^3- d\sqrt{6d^2-2-3d^4}}{1-2d^2},\qquad &\text{if }\lambda_1=d\in (\sqrt{1-3^{-\frac12}},1),
\end{align*}
when $0<\lambda_1\leq 1 \leq \lambda_3.$
\end{remark}
\begin{proof}
We deal with the case where the twins generated by $\mt U_i,\mt U_j$, with $(i,j)$ in column (A) of Table \ref{Table 1}, satisfy the cofactor conditions. The case where $(i,j)$ are in column (B) of Table \ref{Table 1} can be treated similarly and leads to the same results. Furthermore, thanks to Lemma \ref{lemma tanti} and Proposition \ref{tutti quanti} we can carry on the proof by considering just $(\mt U_1,\vc b_{II},\vc m_{II})$ the type II twin generated by $\mt U_1,\mt U_{11}$. We divide the proof in three cases, based on the eigenvalues of $\mt U_1$ $\lambda_1\leq \lambda_2=1\leq \lambda_3$: the cases are $\lambda_2=1>d$, $\lambda_2 = 1<d$ and $0<
\lambda_1<d=1<\lambda_3$. We can exclude the case where both $d$ and another eigenvalue are equal to $1$ because, by \eqref{l1}--\eqref{tII} this would imply $b=0$ and $\mt U_1=\mt U_{11}$. \\

Case 1: $0<d < 1 \leq \lambda$, $\lambda=ac-b^2$. Here, the eigenvectors of $\mt U_1$ related to $\lambda_1 = d,\lambda_2 =1$, and $\lambda_3=\lambda$ are 
$$
\vc e_1=(0,0,1)^T,\qquad \vc e_2 = (-z_2,z_1,0)^T,\qquad \vc e_3 = (z_1,z_2,0)^T,
$$ 
while the ones for $\mt U_{11}$ are given by  
$$
\vc v_1=(1,0,0)^T,\qquad \vc v_2 = (0,z_1,-z_2)^T,\qquad \vc v_3 = (0,z_2,z_1)^T.
$$ 
Here, $z_1,z_2$ are such that $\frac{z_1}{z_2}=\frac{\lambda-c}{b}$, $z_1^2+z_2^2=1$, and we assume without loss of generality that $z_1>0.$ We can neglect the cases $z_1 = 0$ and $z_2=0$, as they would imply $b=0$ and $\mt U_1=\mt U_{11}$. Using \cite[Prop. 4]{BallJames1} we get that
$$
\mt R_1^\pm \mt U_1 = \mt 1+\vc a_1^\pm \otimes \vc n_1^\pm,\qquad \mt R_{11}^\pm \mt U_{11} = \mt 1+\vc a_{11}^\pm \otimes \vc n_{11}^\pm,
$$
where
\begin{align}
\label{bm1}\vc a_1^\pm = \beta_0(-\lambda\eta_1\vc e_1\pm d\eta_2\vc e_3),\quad \vc n_1^\pm = \eta_1\vc e_1\pm\eta_2\vc e_3,\\
\label{bm2}\vc a_{11}^\pm = \beta_0(-\lambda\eta_1\vc v_1\pm d\eta_2\vc v_3),\quad \vc n_{11}^\pm = \eta_1\vc v_1\pm\eta_2\vc v_3,
\end{align}
and
$$
\eta_1=-\frac{\sqrt{1-d^2}}{\sqrt{\lambda^2-d^2}},\qquad \eta_2=\frac{\sqrt{\lambda^2-1}}{\sqrt{\lambda^2-d^2}},\qquad \beta_0 = \lambda - d.
$$
Since the cofactor conditions are satisfied, by Theorem \ref{TypeII} we know that one of the following identities must hold
$$
\vc n_1^+ \times \vc n_{11}^+=0,\qquad \vc n_1^-\times \vc n_{11}^-=0,\qquad\vc n_1^- \times \vc n_{11}^+=0,\qquad \vc n_1^+\times \vc n_{11}^-=0.
$$
Keeping in mind that \eqref{l1}--\eqref{tII} imply $\eta_1=-z_1\eta_2$, we get that, in the notation of Theorem \ref{TypeII}, the only possibility up to a sign change is
\begin{align*}
\vc a_{\mt U_1}=\beta_0(d\eta_1, -z_2d\eta_2,&-\lambda\eta_1),\qquad \vc a_{\mt U_{11}}=\beta_0(-\lambda\eta_1, -z_2d\eta_2,d\eta_1),\\
&\vc m_{II}=(\eta_1, -z_2\eta_2,\eta_1),
\end{align*}
Therefore, 
\beq
\label{mix b 1}
\mu \vc a_{\mt U_1}+(1-\mu)\vc a_{\mt U_{11}} = \eta_1\beta_0 \Bigl(\mu d-(1-\mu)\lambda,\, d\frac{z_2}{z_1},\,-\mu \lambda+(1-\mu)d\Bigr).
\eeq
We are now interested in finding which rotations $\mt R\in SO(3)$, and which $\mu\in[0,1]$ are such that \ref{T1s}--\ref{T3s} are satisfied. First, we claim that $\mt R\in\mathcal{P}_{24}$ and $\mt R$ is a rotation whose angle and axis are given in the first column of Table \ref{Table 1}. Indeed, given $\mt R\in SO(3)$, by Proposition \ref{tutti quanti} $(\mt R\mt U_1\mt R^T,\mt R\vc b_{II},\mt R\vc m_{II})$ is a type II twin generated by $\mt R\mt U_1\mt R^T,\mt R\mt U_{11}\mt R^T$ and satisfies the cofactor conditions. Therefore, \ref{T1s} together with Remark \ref{dueconuno} imply that $\mt R\mt U_1\mt R^T,\mt R\mt U_{11}\mt R^T$ must be a pair in column (A) of Table \ref{Table 1} (or in the second column of Table \ref{Table 2} in the degenerate case where $a=c$). As a consequence, there exists $\mt Q\in \mathcal{P}_{24}$, with $\mt Q$ being a rotation of angle and axis given in the first column of Table \ref{Table 1}, such that $\mt Q\mt U_1\mt Q^T = \mt R\mt U_1\mt R^T$ and $\mt Q\mt U_{11}\mt Q^T=\mt R\mt U_{11}\mt R^T$. Thus, $\mt U_1 = \mt Q^T\mt R\mt U_1\mt R^T\mt Q$, $\mt U_{11}=\mt Q^T\mt R\mt U_{11}\mt R^T\mt Q$, and Remark \ref{StrictL} together with Lemma \ref{rotaz lemma}, imply that $\mt Q^T\mt R = \mt 1$ or $\mt Q^T\mt R = (2\vc v\otimes \vc v-\mt 1)$ where $\vc v\in\mathbb{S}^2$ must be at the same time an eigenvector for $\mt U_1$ and for $\mt U_{11}$. Under our hypotheses, there exists no $\vc v\in\mathbb{S}^2$ being at the same time an eigenvector for $\mt U_1$ and for $\mt U_{11}$. Thus $\mt Q = \mt R$ concluding the proof of the claim.

Now, by Lemma \ref{rotaz lemma} we have to check for which $\mu$, and which rotation axis $\vc u\in\mathbb{S}^2$ among the ones given in the first column of Table \ref{Table 1}, $\mu \vc a_{\mt U_1}+(1-\mu)\vc a_{\mt U_{11}}$ is parallel or perpendicular to $\vc u$. Table \ref{Table 0} below illustrates the possible cases.
\begin {table}[h]
\begin{center}
\begin{tabular}{ |l|l|l| }
\hline
Rotation axis $\vc e$ & Parallel & Orthogonal \\ \hline
 $(1,0,0)$ & never & $\mu = \frac{\lambda}{\lambda+d}$\\
 $(0,1,0)$ & $d=\lambda$ (never) & never\\
 $(0,0,1)$ & never & $\mu = \frac{d}{\lambda+d}$\\
 $(1,0,1)$ & never & $\lambda=d$ (never)\\
 $(1,0,-1)$& never & $\mu =\frac12$ (but $\mt R\vc n_{II}=-\vc n_{II})$\\
 $(1,1,0)$ & $\mu = \frac{d}{\lambda+d}$ and $\frac{z_2}{z_1}=\frac{d-\lambda}{d}$ & $d\frac{z_2}{z_1}=(1-\mu)\lambda-\mu d$\\
 $(1,-1,0)$  & $\mu = \frac{d}{\lambda+d}$ and $\frac{z_2}{z_1}=\frac{\lambda-d}{d}$ & $d\frac{z_2}{z_1}=-(1-\mu)\lambda+\mu d$\\
 $(0,1,1)$  & $\mu = \frac{\lambda}{\lambda+d}$ and $\frac{z_2}{z_1}=\frac{d-\lambda}{d}$ & $d\frac{z_2}{z_1}=-(1-\mu)d+\mu \lambda$\\
 $(0,-1,1)$  & $\mu = \frac{\lambda}{\lambda+d}$ and $\frac{z_2}{z_1}=\frac{\lambda-d}{d}$ & $d\frac{z_2}{z_1}=(1-\mu)d-\mu \lambda$\\
 \hline
\end{tabular}
\end{center}
\caption {
\label{Table 0} Conditions on $\mu$ to have $(2\vc e\otimes\vc e -\mt 1)(\mu \vc a_{\mt U_1}+(1-\mu)\vc a_{\mt U_{11}}) = \mu \vc a_{\mt U_1}+(1-\mu)\vc a_{\mt U_{11}},$ under the assumption that $d<1\leq \lambda.$} 
\end{table}
As a result, we have two different rotation axes $\vc u$ which are either parallel or perpendicular to $\mu\vc a_{\mt U_1}+(1-\mu)\vc a_{\mt U_{11}}$ if and only if
%
\begin{enumerate}
\item $\mu = \frac{1}{2}$ and $\frac{z_2}{z_1}=\pm\frac{d-\lambda}{2d}$, 
\item $\mu = \frac{d}{\lambda+d}$ and $\frac{z_2}{z_1}=\pm\frac{d-\lambda}{d}$, 
\item $\mu = \frac{\lambda}{\lambda+d}$ and $\frac{z_2}{z_1}=\pm\frac{d-\lambda}{d}$.
\end{enumerate}
We remark that, in the last two cases, the rotation axes which are either parallel or perpendicular to $\mu\vc a_{\mt U_1}+(1-\mu)\vc a_{\mt U_{11}}$ are actually three and not two. As $(\mt U_1,\vc b_{II}, \vc m_{II})$ satisfies the cofactor conditions as a type II twin, from \eqref{l1}--\eqref{tII} we deduce that
$$
b= \frac{\sqrt{-d^4+d^2(3-\lambda^2)+(\lambda^2-2)}}{1+\lambda},\qquad \lambda - c = \frac{1-d^2}{1+\lambda},
$$
which implies
$$
\frac{z_2}{z_1}=\frac{\sqrt{-d^4+d^2(3-\lambda^2)+(\lambda^2-2)}}{1-d^2}.
$$
After some computations we finally deduce that, under the hypothesis $d<1\leq \lambda$, there exist at least two different rotations such that \ref{T1s}--\ref{T2s} are satisfied if and only if
\begin{align*}
\frac{z_2}{z_1}=\pm\frac{d-\lambda}{2d} &\Longleftrightarrow 
d^24\bigl( d^2+\lambda^2-2\bigr) = (d-\lambda)^2(1-d^2),\\
\frac{z_2}{z_1}=\pm\frac{d-\lambda}{d} &\Longleftrightarrow d^2\bigl( d^2+\lambda^2-2\bigr) = (d-\lambda)^2(1-d^2).
\end{align*}
In order to be prove that these conditions are equivalent to a type II half-star or star twin, we need to check that \ref{T3s} is satisfied. This is always the case for half-star twins. For type II star twins this happens if and only if $\eta_1\neq - z_2\eta_2$. But, recalling that $\eta_1=-z_1\eta_2$, we need $z_1^2\neq z_2^2.$ As $\frac{z_1^2}{z_2^2}=\frac{a-1}{c-1}$, we hence deduce that \ref{T3s} is satisfied by type II star twins if and only if $a\neq c$. \\

Case 2: $0<\lambda \leq 1 <d$, $\lambda=ac-b^2$. Here, the eigenvectors related to $\lambda,1,d$ for $\mt U_1$ and $\mt U_{11}$ are respectively $\vc e_i$ and $\vc v_i$, given by 
\begin{align*}
\vc e_1=(z_1,z_2,0)^T,\qquad \vc e_2 = (-z_2,z_1,0)^T,\qquad \vc e_3 = (0,0,1)^T ,\\
\vc v_1=(0,z_2,z_1)^T,\qquad \vc v_2 = (0,z_1,-z_2)^T,\qquad \vc v_3 = (1,0,0)^T.
\end{align*} 
where, again, we assume without loss of generality that $z_1>0$. Here
$$
\mt R_1^\pm \mt U_1 = \mt 1+\vc a_1^\pm \otimes \vc n_1^\pm,\qquad \mt R_{11}^\pm \mt U_{11} = \mt 1+\vc a_{11}^\pm \otimes \vc n_{11}^\pm,
$$
where, $\vc n_1^\pm,\vc n_{11}^\pm$ are still given by \eqref{bm1}--\eqref{bm2}, and 
$$
\vc a_1^\pm = \beta_0(-d\eta_1\vc e_1\pm \lambda\eta_2\vc e_3),\qquad
\vc a_{11}^\pm = \beta_0(-d\eta_1\vc v_1\pm \lambda\eta_2\vc v_3),
$$
but this time
$$
\eta_1=-\frac{\sqrt{1-\lambda^2}}{\sqrt{d^2-\lambda^2}},\qquad \eta_2=\frac{\sqrt{d^2-1}}{\sqrt{d^2-\lambda^2}},\qquad \beta_0 = d-\lambda.
$$
From these, we can again find
$$
\vc a_{\mt U_1}=\beta_0(d\eta_2, -z_2d\eta_1,-\lambda\eta_2),\qquad \vc a_{\mt U_{11}}=\beta_0(-\lambda\eta_2, -z_2d\eta_1,d\eta_2),\qquad\vc m_{II}=(-\eta_2, z_2\eta_1,-\eta_2),
$$
and $z_1\eta_1=-\eta_2$. 
Thus,
$$
\mu \vc a_{\mt U_1}+(1-\mu)\vc a_{\mt U_{11}} = \eta_2\beta_0 \Bigl(\mu d-(1-\mu)\lambda,\, d\frac{z_2}{z_1},\,-\mu \lambda+(1-\mu)d\Bigr),
$$
which is the same as \eqref{mix b 1}. We can hence argue as in the first case, and deduce that, if $d>1$, $(\mt U_1,\mt U_{11})$ generate a half-star twin if and only if 
\begin{align*}
d^24\bigl( d^2+\lambda^2-2\bigr) = (d-\lambda)^2(1-d^2).
\end{align*}
In the same way, $(\mt U_1,\mt U_{11})$ generates a star twin if and only if $a\neq c$, and
\begin{align*}
d^2\bigl( d^2+\lambda^2-2\bigr) = (d-\lambda)^2(1-d^2).\end{align*}
\\
Case 3: $0<\lambda_1 < d=1 <\lambda_3$, where $\lambda_1\lambda_3=ac-b^2$ and $\lambda_1+\lambda_3=a+c$. Repeating the above arguments, this time we deduce that
\begin{align*}
\vc a_{\mt U_1}=\beta_0\eta_1(-z_1(\lambda_1 +\lambda_3), (\lambda_1z_1^2z_2^{-1}-\lambda_3z_2),0),&\qquad 
\vc a_{\mt U_{11}}=\beta_0\eta_1(0,(\lambda_1z_1^2z_2^{-1}-\lambda_3z_2),-z_1(\lambda_1 +\lambda_3)),\\
\vc m_{II}&=(0, \eta_1z_2^{-1},0),
\end{align*}
where
$$
\eta_1=-\frac{\sqrt{1-\lambda_1^2}}{\sqrt{\lambda_3^2-\lambda_1^2}},\qquad \eta_2=\frac{\sqrt{\lambda_3^2-1}}{\sqrt{\lambda_3^2-\lambda_1^2}},\qquad \beta_0 = \lambda_3-\lambda_1,
$$
and $z_1\eta_1=z_2\eta_2$. Thus,
$$
\mu \vc a_{\mt U_1}+(1-\mu)\vc a_{\mt U_{11}} = -\eta_1\beta_0 \Bigl(\mu z_1(\lambda_1 +\lambda_3),\, (\lambda_3z_2-\lambda_1z_1^2z_2^{-1}),\,(1-\mu) z_1(\lambda_1 +\lambda_3)\Bigr).
$$
Define
$$
\beta_1=z_1(\lambda_1 +\lambda_3),\qquad \beta_2 = (\lambda_3z_2-\lambda_1z_1^2z_2^{-1}).
$$
In this case, Table \ref{Table 0} becomes Table \ref{Table 01}.
\begin {table}[h]
\begin{center}
\begin{tabular}{ |l|l|l| }
\hline
Rotation axis $\vc e$ & Parallel & Orthogonal \\ \hline
 $(1,1,0)$ & $\mu = 1$ and $\beta_1=\beta_2$ & $\mu\beta_1=-\beta_2$\\
 $(1,-1,0)$ & $\mu = 1$ and $\beta_1=-\beta_2$ & $\mu\beta_1=\beta_2$\\
 $(0,1,1)$ & $\mu = 0$ and $\beta_1=\beta_2$ & $(1-\mu)\beta_1=-\beta_2$\\
 $(0,-1,1)$ & $\mu = 0$ and $\beta_1=-\beta_2$ & $(1-\mu)\beta_1=\beta_2$\\
 \hline
\end{tabular}
\end{center}
\caption {\label{Table 01}Conditions on $\mu$ to have $(2\vc e\otimes\vc e -\mt 1)(\mu \vc a_{\mt U_1}+(1-\mu)\vc a_{\mt U_{11}}) = \mu \vc a_{\mt U_1}+(1-\mu)\vc a_{\mt U_{11}},$ under the assumption that $\lambda_1<1< \lambda_3.$} 
\end{table}
Here we neglected all the rotations whose axes, due to Lemma \ref{rotaz lemma}, do not satisfy \ref{T3s}.
Thus, the only option to form a type II half-star twin is
$$\mu = \frac{1}{2}\quad\text{ and }\quad\frac12\beta_1=\pm\beta_2.$$
As $0<\lambda_1,\lambda_3$ and $\lambda_1,\lambda_3\neq 1$, $\frac12\beta_1=\pm\beta_2$ simplifies to
$$
\lambda_1^2(5\lambda_3^2-1)-8\lambda_1\lambda_3-\lambda_3^2+5=0,
$$
and the solutions satisfying $\lambda_3>1>\lambda_1$ are 
\[
\lambda_1= \frac{4\lambda_3\pm\sqrt5(\lambda_3^2-1) }{5\lambda_3^2-1},\qquad \text{$\lambda_3>1$}.
\]
\end{proof}

In a similar way, we can prove
\begin{theorem}
\label{main thm I}
Let $\mathcal{M}:=\cup_{i=1}^{12}\mt U_i$, where the $\mt U_i$ are the positive definite matrices, with $a,b,c,d>0$, given in \eqref{cubictomono}. Then, a type I twin generated by $\mt U_i,\mt U_j$, $i\neq j =1,\dots,12$, satisfying the cofactor conditions is a I half-star twin if and only if the eigenvalues of $\mt U_1$, namely $\lambda_1 \leq \lambda_2=1 \leq \lambda_3$ satisfy one of the following conditions
\beq
\label{half star 1 twins cond}
\begin{split}
4\bigl( 2d^2\lambda^2_3-\lambda^2_3-d^2\bigr) = (d-\lambda_3)^2(1-d^2),\qquad&\text{if $d\neq 1$ and $d=\lambda_1,$}\\
4\bigl( 2d^2\lambda^2_1-\lambda^2_1-d^2\bigr) = (d-\lambda_1)^2(1-d^2),\qquad&\text{if $d\neq 1$ and $d=\lambda_3,$}\\
\lambda_1^2\bigl(5 \lambda_3^2-1\bigr) = 8\lambda_1\lambda_3-5+\lambda_3^2,\qquad&\text{if $d=1$ and $\lambda_3>1$.}
\end{split}
\eeq
A type I twin generated by $\mt U_i,\mt U_j$, $i\neq j =1,\dots,12$, satisfying the cofactor conditions is a I star twin if and only if $a\neq c$ and one of the following holds
\beq
\label{star 1 twins cond}
\begin{split}
\bigl( 2d^2\lambda^2_3-\lambda^2_3-d^2\bigr) = (d-\lambda_3)^2(1-d^2),\qquad&\text{if $d\neq 1$ and $d=\lambda_1,$ }\\
\bigl( 2d^2\lambda^2_1-\lambda^2_1-d^2\bigr) = (d-\lambda_1)^2(1-d^2),\qquad&\text{if $d\neq 1$ and $d=\lambda_3$.}
\end{split}
\eeq
\end{theorem}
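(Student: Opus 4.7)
The plan is to mirror the proof of Theorem \ref{main thm II}, applied now to type I twins instead of type II. By Lemma \ref{lemma tanti} and Proposition \ref{tutti quanti}, it suffices to verify the claim for a single representative twin, which I take to be the type I twin $(\mt U_1,\vc b_I,\vc m_I)$ generated by $\mt U_1$ and $\mt U_{11}$ from \eqref{cubictomono}. I then split into the three cases determined by the middle-eigenvalue constraint $\lambda_2=1$: namely $d<1\leq\lambda_3$ (with $\lambda_1=d$), $\lambda_1\leq 1<d$ (with $\lambda_3=d$), and $0<\lambda_1<d=1<\lambda_3$; the degenerate case where a second eigenvalue of $\mt U_1$ equals one is excluded as in the proof of Theorem \ref{main thm II}, because it forces $\mt U_1=\mt U_{11}$.

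In each case I will compute explicitly the eigenvectors of $\mt U_1$ and $\mt U_{11}$ as in Case 1 of the proof of Theorem \ref{main thm II}, and then use \cite[Prop. 4]{BallJames1} to write out the four rank-one-to-identity decompositions $\mt R_1^\pm\mt U_1=\mt 1+\vc a_1^\pm\otimes\vc n_1^\pm$ and $\mt R_{11}^\pm\mt U_{11}=\mt 1+\vc a_{11}^\pm\otimes\vc n_{11}^\pm$. Since $(\mt U_1,\vc b_I,\vc m_I)$ satisfies the cofactor conditions, Theorem \ref{TypeI} now selects (up to an overall sign) the pair of sign choices for which the leading vector $\vc a$ is common to both decompositions, leaving explicit formulas for $\vc n_{\mt U_1},\vc n_{\mt U_{11}}$ and in particular for the averaged vector
\[\mu\,\vc n_{\mt U_1}+(1-\mu)\,\vc n_{\mt U_{11}}.\]
The type I form of \ref{CC2}, namely $|\mt U_1^{-1}\hat{\vc e}|=1$ in \eqref{ccperI}, fixes the ratio $z_2/z_1$ of the in-plane eigenvector components in terms of $d$ and the non-unit eigenvalue of $\mt U_1$; this ratio is precisely the reciprocal of the one that appeared in the type II case, and the replacement $(d,\lambda)\mapsto (d^{-1},\lambda^{-1})$ it induces is what converts \eqref{half star 2 twins cond}--\eqref{star 2 twins cond} into \eqref{half star 1 twins cond}--\eqref{star 1 twins cond} after clearing denominators.

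Analysing \ref{S2s} then proceeds exactly as the analysis of \ref{T2s}: combining Proposition \ref{tutti quanti}, Remark \ref{StrictL} and Lemma \ref{rotaz lemma} reduces the admissible rotations $\mt Q\in SO(3)$ to elements of $\mathcal{P}_{24}$ whose angle and axis are listed in the first column of Table \ref{Table 1}. Constructing the analogues of Table \ref{Table 0} and Table \ref{Table 01} for the vector $\mu\,\vc n_{\mt U_1}+(1-\mu)\,\vc n_{\mt U_{11}}$ yields the full list of pairs $(\vc u,\mu)$ that enforce parallelism or orthogonality, and requiring at least two (respectively three) of them to be realised simultaneously produces, after simplification using the type I cofactor relation, the half-star (respectively star) conditions in the statement. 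The third case $d=1$ is special: the computation collapses, as in Case 3 of Theorem \ref{main thm II}, to the identity $\tfrac12\beta_1=\pm\beta_2$ and so delivers the same condition on $\lambda_1,\lambda_3$ for both types, which explains the common third line in \eqref{half star 1 twins cond} and \eqref{half star 2 twins cond}.

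The main obstacle will be the sign bookkeeping required to fix the correct choice of $(\vc a_1^\pm,\vc a_{11}^\pm)$ forced by Theorem \ref{TypeI}, together with verifying that condition \ref{S3s} (which now constrains the shared vector $\vc a$ rather than $\vc m_{II}$) is automatic in the half-star case and reduces to $a\neq c$ in the star case; the latter is a linear-independence check wholly analogous to the computation $z_1^2\neq z_2^2 \Leftrightarrow a\neq c$ at the end of Case 1 of the proof of Theorem \ref{main thm II}.
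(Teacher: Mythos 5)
Your proposal takes the same route the paper has in mind: the paper offers no separate proof of Theorem~\ref{main thm I} beyond the remark that it follows ``in a similar way'' to Theorem~\ref{main thm II}, and your outline correctly transposes every step of that proof to the type~I setting (reduce to the representative pair $(\mt U_1,\mt U_{11})$ via Lemma~\ref{lemma tanti} and Proposition~\ref{tutti quanti}; split on the position of $d$ among the eigenvalues; use \cite[Prop.\ 4]{BallJames1} to write the four rank-one decompositions; use Theorem~\ref{TypeI} to identify the \emph{shared} vector $\vc a$ so that condition \ref{S2s} constrains $\mu\vc n_{\mt U_1}+(1-\mu)\vc n_{\mt U_{11}}$; rebuild the analogues of Table~\ref{Table 0} and Table~\ref{Table 01}; reduce \ref{S3s} to $a\neq c$; treat $d=1$ as the degenerate common case). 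This is the correct method and the conclusions you describe are the ones in the statement.

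One heuristic you offer along the way is, however, false as stated and worth flagging so you do not come to rely on it. You claim that the ratio $z_2/z_1$ produced by the type~I form of \ref{CC2}, namely $|\mt U_1^{-1}\hat{\vc e}|=1$, is ``precisely the reciprocal'' of the type~II ratio. A direct computation with \eqref{l1} and \eqref{tI} gives, in Case~1,
\[
\Bigl(\frac{z_2}{z_1}\Bigr)_{\mathrm{I}}
  = \frac{\sqrt{d^4(1-2\lambda^2)+d^2(3\lambda^2-1)-\lambda^2}}{\lambda\,(1-d^2)},
\qquad
\Bigl(\frac{z_2}{z_1}\Bigr)_{\mathrm{II}}
  = \frac{\sqrt{-d^4+d^2(3-\lambda^2)+\lambda^2-2}}{1-d^2},
\]
and these are not reciprocal functions of one another (for example at $d=0.95$, $\lambda=1.1$ one finds $(z_2/z_1)_{\mathrm I}\approx 0.78$ while $\bigl[(z_2/z_1)_{\mathrm{II}}\bigr]^{-1}\approx 0.93$). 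What \emph{is} true, and can be checked directly, is that the final equations \eqref{half star 2 twins cond}--\eqref{star 2 twins cond} transform into \eqref{half star 1 twins cond}--\eqref{star 1 twins cond} under $(d,\lambda)\mapsto(d^{-1},\lambda^{-1})$ with the corresponding swap of which eigenvalue is called $\lambda_1$ and which $\lambda_3$; this duality comes from the classical exchange of type~I and type~II twins under $\mt U\mapsto\mt U^{-1}$, not from a termwise reciprocal of $z_2/z_1$. Since your actual proof mechanism is the re-derivation of the parallelism/orthogonality tables for $\mu\vc n_{\mt U_1}+(1-\mu)\vc n_{\mt U_{11}}$ with the type~I relation $|\mt U_1^{-1}\hat{\vc e}|=1$ substituted in, the incorrect heuristic does not break the argument --- just do not use it as a shortcut in place of the tables.
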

\begin{remark}
\label{quante condizioni 1}
\rm 
It is possible to solve the equations in \eqref{half star 1 twins cond}--\eqref{star 1 twins cond} to get a direct relation between the eigenvalues of the $\mt U_i$'s. Indeed, \eqref{half star 1 twins cond} is equivalent to
\begin{align*}
\lambda_1 = \frac {d^3-d + d2\sqrt{2}\sqrt{6d^2-3-d^4}}{9d^2-5},\qquad &\text{if }\lambda_3=d\in (1,\sqrt{3+\sqrt6}),\\
\lambda_1 = \frac {d^3-d - d2\sqrt{2}\sqrt{6d^2-3-d^4}}{9d^2-5},\qquad &\text{if }\lambda_3=d\in (\sqrt{5},\sqrt{3+\sqrt6}),\\
\lambda_3 = \frac {d^3-d + d2\sqrt{2}\sqrt{6d^2-3-d^4}}{9d^2-5},\qquad &\text{if }\lambda_1=d\in (\sqrt{3-\sqrt6},1),\\
\lambda_3 = \frac {d^3-d - d2\sqrt{2}\sqrt{6d^2-3-d^4}}{9d^2-5},\qquad &\text{if }\lambda_1=d\in (\sqrt{3-\sqrt6},\frac{\sqrt5}{3}),\\
{
\lambda_1= \frac{4\lambda_3\pm\sqrt5(\lambda_3^2-1) }{5\lambda_3^2-1},}\qquad& \text{if $\lambda_3>1,\, d=1$}
,
\end{align*}
when $0<\lambda_1\leq 1 \leq \lambda_3.$ Similarly, the condition \eqref{star 2 twins cond} can be rewritten as
\begin{align*}
\lambda_1 = \frac {d^3-d+ d\sqrt{6d^2-3-2d^4}}{3d^2-2},\qquad &\text{if }\lambda_3=d\in (1,\sqrt{\frac{(3+\sqrt3)}2}),\\
\lambda_1 = \frac {d^3-d- d\sqrt{6d^2-3-2d^4}}{3d^2-2},\qquad &\text{if }\lambda_3=d\in (\sqrt 2,\sqrt{\frac{(3+\sqrt3)}2}),\\
\lambda_3 = \frac {d^3-d+ d\sqrt{6d^2-3-2d^4}}{3d^2-2},\qquad &\text{if }\lambda_1=d\in (\sqrt{\frac{(3{
-}\sqrt3)}2},1),\\
\lambda_3 = \frac {d^3-d - d\sqrt{6d^2-3-2d^4}}{3d^2-2},\qquad &\text{if }\lambda_1=d\in (\sqrt{\frac{(3{
-}\sqrt3)}2},\sqrt\frac23)
,
\end{align*}
$0<\lambda_1\leq 1 \leq \lambda_3.$
\end{remark}

{\begin{remark}
\rm
Lemma \ref{lemma tanti} together with \eqref{mix b 1} imply that, for cubic to monoclinic transformations, there are six different vectors $\vc a^*:=\mu^* \vc a_{\mt U}+(1-\mu^*) \vc a_{\mt U}$ which are as in Definition \ref{Def star} and Proposition \ref{LMhull}. Similarly, there are six different vectors $\vc n^*:=\mu^* \vc n_{\mt U}+(1-\mu^*) \vc n_{\mt U}$ which are as in Definition \ref{Def star I} and Proposition \ref{LMhullI}
\end{remark}
}

\begin{figure}[!htbp]
      \centering%
		{\label{sep dots}\includegraphics[scale=0.8]{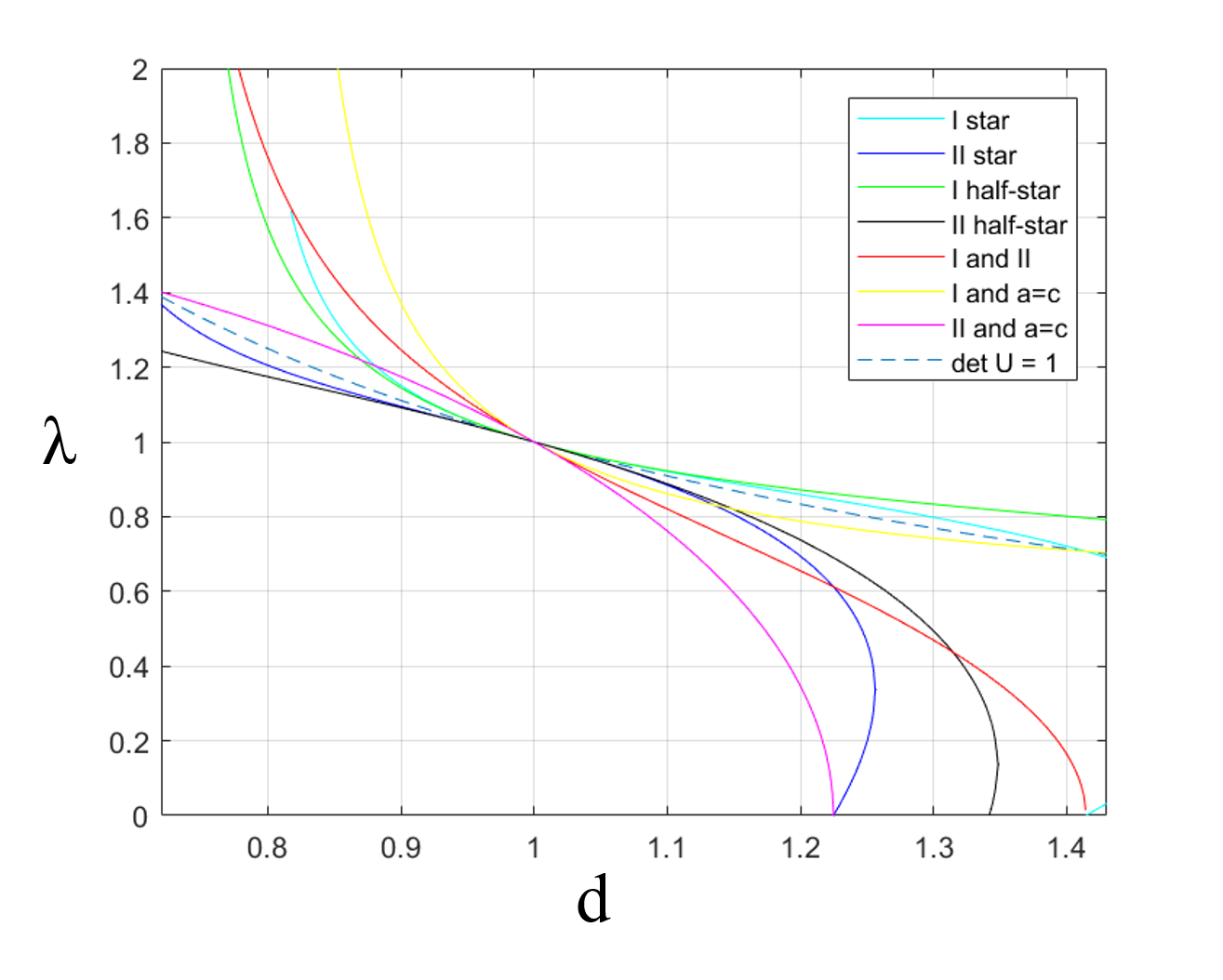}}
\caption{\label{punti irr}
{Let $\mt U_i\in\R^{3\times 3}_{Sym^+}$ be as in \eqref{cubictomono}, with eigenvalues $1,\lambda,d$ (where $d$ is as in \eqref{cubictomono}).
We plot here the relation between the eigenvalues $\lambda$ and $d$ when the $\mt U_i$'s satisfy further compatibility conditions: 
in cyan and in blue the curves of $\lambda$ as a function of $d$ when respectively type I star and type II star twins exist (cf. Remark \ref{quante condizioni} and Remark \ref{quante condizioni 1}). In green and in black the curves of $\lambda$ as a function of $d$ when respectively type I half-star and type II half-star twins exist. In red the curve of $\lambda$ as a function of $d$ when the cofactor conditions are satisfied by both, some type I and some type II twins (see Remark \ref{dueconuno}). In magenta and in yellow we plot the dependence of $\lambda$ in terms of $d$ when $a=c$ (cubic to orthorhombic transformation), and the cofactor conditions are satisfied by type II and type I twin respectively (same as \cite[Fig. 3]{ChlubaJames}). Finally, the dashed line is the curve where $\det \mt U_i =1$ (that is $\lambda=\frac1d$). 
All the above mentioned cases, except for the last one,  are satisfied only by a one-parameter family of matrices as in \eqref{cubictomono}, and we chose as independent parameter $d$. The four deformation parameters determining the transformation strain (cf. $a,b,c,d$ in \eqref{cubictomono}) can be deduced in all the above mentioned cases, except for the last one, in an unique way from $d$ under the physically reasonable assumption that $\lambda>0.5$ (without this further assumption, in some cases we have two possible $\lambda$ given a fixed $d$ (cf. Figure \ref{punti irr}), and therefore two possible ways to determine the transformation strain).
}
}
\end{figure}

\section{How closely does a material satisfy the cofactor conditions?}
\label{metricclose}
In practice the cofactor conditions are never satisfied exactly. For this reason, an interesting question for application is how closely must a material satisfy the cofactor conditions, in order to behave as if these were satisfied exactly. The problem is complex not only from the point of view of establishing a threshold, but especially in terms of choosing the right metric. There are various ways to measure how closely a material satisfies the cofactor conditions, but, up to our knowledge, just two have been used in the literature up till now. Provided (CC3) holds, the first, more intuitive, way is to check (CC1)-(CC2) directly, that is, to see how close the numbers $|\lambda_2-1|, |\vc b\cdot\mt U\cof(\mt U^2-\mt 1)\vc m|$ are to zero. This is, for example, the way the cofactor conditions are measured in \cite{Chluba,ChlubaJames}. However, there is no physical motivation behind these quantities. On the other hand, provided (CC3) holds, a second way to measure how closely the cofactor conditions are satisfied is to measure how small are the quantities $|\lambda_2-1|$, and $\bigl||\mt U\hat{\vc e}|-1\bigr|, \bigl||\mt U^{-1}\hat{\vc e}|-1\bigr|$, respectively in the case of type II and type I twins. Here $\hat{\vc e}$ is as in Proposition \ref{e hat}. This approach has been adopted for example in \cite{JamesNew} and relies on \eqref{ccperI}--\eqref{ccperII}. However, if we apply these two different metrics to Zn\textsubscript{45}Au\textsubscript{30}Cu\textsubscript{25}, we obtain contradictory results. {Indeed, while following the first approach we get that the lowest value of $|\vc b\cdot\mt U\cof(\mt U^2-\mt 1)\vc m|$ among the possible twin systems is approximately equal to $4.1\cdot10^{-5}$ for type I twins, and to $3.8\cdot10^{-5}$ for type II twins, while we have that the second approach leads to lowest values of $\bigl||\mt U^{-1}\hat{\vc e}|-1\bigr|=8.1\cdot10^{-3}$ and $\bigl||\mt U\hat{\vc e}|-1\bigr|=4.2\cdot10^{-4}$.} Therefore, while the first approach entails that the cofactor conditions are closely satisfied by both type I and type II twins, the second states that Zn\textsubscript{45}Au\textsubscript{30}Cu\textsubscript{25} satisfies the cofactor conditions with type II twins  much better than with type I twins. Furthermore, for the completely different alloy Ti\textsubscript{74}Nb\textsubscript{23}Al\textsubscript{3} (see \cite{Inamura}), where the experimentally observed microstructures look very different from those in Zn\textsubscript{45}Au\textsubscript{30}Cu\textsubscript{25}, we have $\lambda_2=0.999996$ and approximate values for $|\vc b\cdot\mt U\cof(\mt U^2-\mt 1)\vc m|$ of $4.4\cdot10^{-5}$ and $3.8\cdot10^{-5}$ respectively for type I and type II twins. These values would lead to the conclusion that Ti\textsubscript{74}Nb\textsubscript{23}Al\textsubscript{3} satisfies the cofactor conditions as closely as Zn\textsubscript{45}Au\textsubscript{30}Cu\textsubscript{25}, even if their behaviours are quite different. However, in Ti\textsubscript{74}Nb\textsubscript{23}Al\textsubscript{3} we get $\bigl||\mt U^{-1}\hat{\vc e}|-1\bigr|=9.9\cdot10^{-3}$ and $\bigl||\mt U\hat{\vc e}|-1\bigr|=8.3\cdot10^{-3}$, which seem to confirm that the cofactor conditions are not so closely satisfied in this material (see Table \ref{Table Fin} below).\\

Following the results of Section \ref{Comp twin sec} (see Remark \ref{Solod1ecof}), it seems reasonable to measure how closely a compound twin satisfies (CC1)--(CC2) by computing the quantity $|d-1|$, and by checking that $d$ is the middle eigenvalue of the $\mt U_i$'s. {For type I and type II, we want to use a different strategy, which is based on critical shear stresses}, and can be related to reversibility. Theorem \ref{TypeI} and Theorem \ref{TypeII} tell us that if a material satisfies the cofactor conditions, then there exist triple junctions. These allow a lot of flexibility in the microstructures, as one can create a laminate with an arbitrary volume fraction, which is compatible with austenite without an interface layer. For these reasons, the presence of triple junctions might play an important role in the reversibility of the transformations. We want thus to measure how close a certain twin is to create triple junctions. {Under some simplifying assumptions, below we quantify the shear stress necessary to deform austenite in such away that triple junctions are possible. If this shear stress is small, then the material behaves as if the cofactor conditions where satisfied; if this shear stress is large, then triple junctions are energetically too expensive to be observed. We then apply the new metric to \Zn\, and to Ti\textsubscript{74}Nb\textsubscript{23}Al\textsubscript{3}. We deduce that, in the former material, the cofactor conditions are better satisfied by type II twins than by type I twins; in the latter material our metric seems to confirm that triple junctions are energetically not convenient without incurring in plastic effects.} {Finally, in Table \ref{Table Fin} we show the results obtained with the different metrics, while in Figure \ref{Schema Fin} we provide an easy algorithm to use our metric. It is worth noticing that, with our metric, it is easier to compare how closely two different materials satisfy the cofactor conditions. Indeed, our algorithm produces one number for each twin, and not three (one for (CC1), one for (CC2) and one for (CC3)) as the metrics currently in use.}
\\

Below we consider $\mt U,\mt V\in \R^{3\times 3}_{Sym^+}$ satisfying \eqref{e hat}, $\mt U \neq\mt V$, and assume that $(\mt U,\vc b_I,\vc m_I)$, $(\mt U,\vc b_{II},\vc m_{II})$ are respectively the type I and type II twins generated by $\mt U,\mt V$. We now want to look for $\epsilon \in\R^{3\times 3},$ such that
\beq
\label{supercomp3}
\tilde{\mt R}_a \mt U = \mt 1 + \epsilon + \vc c_a\otimes\vc o_a,\qquad
\tilde{\mt R}_b \mt V = \mt 1 + \epsilon + \vc c_b\otimes\vc o_b,\qquad
\tilde{\mt R}_b \mt V - \tilde{\mt R}_a \mt U  = \tilde{\mt R}_a \vc b\otimes\vc m,
\eeq
for some $\tilde{\mt R}_a, \tilde{\mt R}_b\in SO(3),$ $\vc c_a,\vc c_b\in\R^3,$ $\vc o_a,\vc o_b\in \mathbb S^2$ and where either $\vc b  = \vc b_{I}$ 
and $\vc m=\vc m_{I}$ or $\vc b =\vc b_{II}$ 
and $\vc m = \vc m_{II}.$ Physically, we want to find an elastic deformation of austenite, which allows for triple junctions, and hence is compatible with a laminate without an interface layer (see Figure \ref{tripleJ}). This is an approximation of what happens in reality, where also martensite is deformed, {and where $\epsilon$ is not constant in the austenite, but might decrease away from the triple junctions in type I twins, and away from the habit plane in type II twins. Nonetheless, the above assumptions allow us both to stick to simple computations, and to get qualitative results for the regions {of higher stress.}}
\begin{figure}
\centering
\begin{tikzpicture}[scale=0.8]
\draw[thin] (-1,4) -- (-4,1);
\draw[thin] (-1,4) -- (-1,1);
\draw[thin] (-1,4) -- (-5,3);
\draw[thin] (-7,1) -- (-5,3);
\draw[thin] (-5,3) -- (-5,6);
\draw[thin] (-8,5.25) -- (-5,6);
\draw[thin] (-8,3) -- (-5,6);

\draw [->,thick] (-6.5,6-0.375) -- (-6.5-0.4472/2,6-0.375+0.4472*4/2); 
\filldraw [red] (-6.5-0.4472/2,6-0.375+0.4472*4/2) circle (0pt) node[anchor=west,black] {$\vc o_a$};

\draw [->,thick] (-5,4.5) -- (-5+0.922,4.5); 
\filldraw [red] (-5+0.922,4.5) circle (0pt) node[anchor=west,black] {$\vc o_b$};

\draw [->,thick] (-6,2) -- (-6-0.922/1.4142,2 +0.922/1.4142 ); 
\filldraw [red] (-6-0.922/1.4142,2 +0.922/1.4142) circle (0pt) node[anchor=west,black] {$\vc m$};

\filldraw [red] (-7.5,4.75) circle (0pt) node[anchor=north,black] {$\tilde{\mt R}_a\mt U$};
\filldraw [red] (-6.,3.85) circle (0pt) node[anchor=north,black] {$\tilde{\mt R}_b\mt V$};
\filldraw [red] (-4.,2.65) circle (0pt) node[anchor=north,black] {$\tilde{\mt R}_a\mt U$};
\filldraw [red] (-2.5,1.75) circle (0pt) node[anchor=north,black] {$\tilde{\mt R}_b\mt V$};

\draw[thin] (2,5) -- (8,7);
\draw[thin] (2+1.4/3,5-1.4) -- (8+1.4/3,7-1.4);
\draw[thin] (2+2.5/3,5-2.5) -- (8+2.5/3,7-2.5);
\draw[thin] (2+3.9/3,5-3.9) -- (8+3.9/3,7-3.9);

\filldraw [red] (5-0.5/3,6-0.5) circle (0pt) node[anchor=north,black] {$\tilde{\mt R}_a\mt U$};
\filldraw [red] (5+0.7/3,6-1.7) circle (0pt) node[anchor=north,black] {$\tilde{\mt R}_b\mt V$};
\filldraw [red] (5+2/3,6-3) circle (0pt) node[anchor=north,black] {$\tilde{\mt R}_a\mt U$};
\filldraw [red] (-2.5,1.75) circle (0pt) node[anchor=north,black] {$\tilde{\mt R}_b\mt V$};

\draw [->,thick] (6,5+4/3) -- (6-0.922/3.1623,5+4/3+3*0.922/3.1623); 
\filldraw [red] (6-0.922/3.1623,5+4/3+3*0.922/3.1623) circle (0pt) node[anchor=west,black] {$\vc m$};

\filldraw [red] (-3,5.2) circle (0pt) node[anchor=south,black] {$\mt 1+\epsilon$};
\filldraw [red] (4,6.2) circle (0pt) node[anchor=south,black] {$\mt 1+\epsilon$};

\filldraw [red] (-4,7.) circle (0pt) node[anchor=south east,black] {Austenite};
\filldraw [red] (5.5,8.) circle (0pt) node[anchor=south east,black] {Austenite};
\filldraw [red] (-9,2.) circle (0pt) node[anchor=south east,black,rotate=270] {Martensite};
\filldraw [red] (9.0,3.5) circle (0pt) node[anchor=south east,black,rotate=270] {Martensite};
\end{tikzpicture}
\caption{\label{tripleJ} Example of triple junctions for type I and type II twins with perturbed austenite.}
\end{figure}
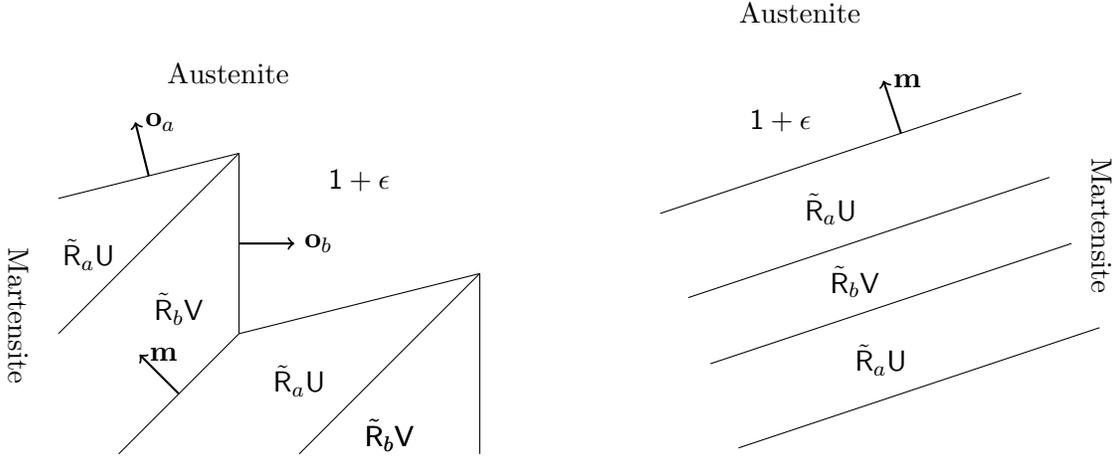
Since we are interested in the case where we are close to satisfying the cofactor conditions, by Theorem \ref{TypeI} and Theorem \ref{TypeII} we expect this deformation to be small, and we hence stick to the context of linear elasticity, where we can express the deformation gradient of the elastically perturbed austenite as $\mt 1+\epsilon.$ Furthermore, as $\epsilon$ is expected to be small, we adopt the following approximation
\beq
\label{approx def lin}
\epsilon+\epsilon^T \approx \epsilon+\epsilon^T + \epsilon^T\epsilon =  (\tilde{\mt R}_a \mt U - \vc c_a\otimes\vc o_a)^T(\tilde{\mt R}_a \mt U - \vc c_a\otimes\vc o_a) - \mt 1. 
\eeq
We remark that the conditions in \eqref{supercomp3} imply 
$$
\vc c_b\otimes\vc o_b-\vc c_a\otimes\vc o_a = \tilde{\mt R}_a\vc b\otimes\vc m
$$
that is, either $\vc o_a = \pm\vc o_b = \pm\vc m,$ or $\vc c_a \parallel \vc c_b \parallel \tilde{\mt R}_a\vc b.$
Therefore, the only possibilities are:
\begin{align}
\vc c_a\otimes\vc o_a = -\tilde{\mt R}_a\vc b\otimes \vc o,\\
\vc c_a\otimes\vc o_a = -\tilde{\mt R}_a\vc c\otimes \vc m,
\end{align}
for some $\vc c, \vc o\in\R^3.$\\

We now want to minimise the energy of $\epsilon+\epsilon^T$ with respect to $\vc c,\vc o$ depending on the case. For simplicity, following the approach in \cite{JamesMuller}, we consider just the shear component of the energy, that is we consider the energy of $\epsilon + \epsilon^T$ to be given by
\beq
\label{energia linear}
{\frac{G}{2}\Bigl\|	\frac{\epsilon+\epsilon^T}2\Bigr\|^2,}
\eeq
where $G$ is the shear modulus for the material. {This energy is not keeping in account the anisotropies of austenite around the transformation temperature, but provides a good lower bound to the anisotropic energy describing the system.} The following two lemmas are useful in what follows, as they allow to find the energy minimisers, under the assumption that our elastic deformation is small enough. 
\begin{lemma}
\label{ILII}
Let $\mt U,\mt V\in\R^{3\times3}_{Sym^+}$ satisfy \eqref{e hat}, and let $(\mt U,\vc b,\vc m)$ be a type I or a type II solution of the twinning equation \eqref{compatib condit} for $\mt U,\mt V$. Suppose further that the minimum and the maximum eigenvalues of $\mt U$, denoted respectively by $\lambda_1$ and $\lambda_3$, satisfy $2\lambda_1^2-\lambda_3^2>0$, and define $$\mt C(\vc c) = (\mt U + \vc c\otimes\vc m)^T(\mt U + \vc c\otimes\vc m)-\mt1.$$ 
Assume also 
$$
\min_{\vc c\in\R^3} \|\mt C(\vc c)\|^2 < \min\{(2\lambda_1^2-\lambda_3^2)^2,1\}.
$$
Then,
$$
\|\mt C(\hat{\vc c})\|^2 = \min_{\vc c\in\R^3} \|\mt C(\vc c)\|^2,
$$ 
if and only if $\hat{\vc c}=\hat{\vc c}^\pm =\pm \frac{\mt U^{-1}\vc m}{|\mt U^{-1}\vc m|} - \mt U\vc m$. The eigenvalues of $\mt C_*:=\mt C(\hat{\vc c}^+)=\mt C(\hat{\vc c}^-)$ are 
$$
\frac12\Bigl(\tr (\mt C_*)-\sqrt{2\tr(\mt C^2_*)-(\tr \mt C_*)^2} \Bigr),\quad 0,\quad \frac12\Bigl(\tr (\mt C_*)+\sqrt{2\tr(\mt C^2_*)-(\tr \mt C_*)^2} \Bigr).
$$
Finally, $\|\mt C_*\|^2 = 0$ if and only if $(\mt U,\vc b,\vc m)$ is a type II twin satisfying the cofactor conditions.
\end{lemma}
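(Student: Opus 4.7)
The plan is to derive the first-order optimality condition for $f(\vc c):=\|\mt C(\vc c)\|^2$, reduce its minimization to an algebraic equation on the plane spanned by $\mt U\vc m$ and $\mt U^{-1}\vc m$, and then read off the spectrum of $\mt C_*$ from the fact that $\vc m$ sits in its kernel. Throughout I would assume $|\vc m|=1$, which can be arranged by absorbing $|\vc m|$ into $\vc c$ without changing $\mt F = \mt U+\vc c\otimes\vc m$. Expanding gives
\begin{equation*}
\mt C(\vc c) = \mt U^2-\mt 1 + (\mt U\vc c)\otimes\vc m + \vc m\otimes(\mt U\vc c) + |\vc c|^2\,\vc m\otimes\vc m,
\end{equation*}
so $f=\tr(\mt C^2)$. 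Differentiating and using $\delta\mt C=(\mt U\delta\vc c)\otimes\vc m+\vc m\otimes(\mt U\delta\vc c)+2(\vc c\cdot\delta\vc c)\,\vc m\otimes\vc m$, the stationarity condition reduces to
\begin{equation*}
\mt U\,\mt C(\vc c)\vc m + \bigl(\vc m\cdot\mt C(\vc c)\vc m\bigr)\,\vc c = \vc 0.
\end{equation*}

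I would then split the analysis by the value of $\vc m\cdot\mt C(\vc c)\vc m$. If this scalar vanishes, invertibility of $\mt U$ forces $\mt C(\vc c)\vc m=\vc 0$; reading off $\mt C(\vc c)\vc m = (\mt U^2-\mt 1)\vc m + \mt U\vc c + \bigl[(\mt U\vc c)\cdot\vc m+|\vc c|^2\bigr]\vc m$ shows that $\vc c$ must lie on the affine line $\vc c = \alpha\,\mt U^{-1}\vc m - \mt U\vc m$, and substituting back while using $\mt U\vc m\cdot\mt U^{-1}\vc m=1$ reduces the leftover scalar constraint to $\alpha^2|\mt U^{-1}\vc m|^2=1$, yielding exactly $\vc c=\hat{\vc c}^\pm$. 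If instead $\vc m\cdot\mt C(\vc c)\vc m\neq 0$, stationarity pins $\vc c$ to $-\mt U\mt C(\vc c)\vc m/(\vc m\cdot\mt C(\vc c)\vc m)$, a transcendental fixed-point equation. I expect the main obstacle to lie in this regime: the plan is to combine the identity $\vc m\cdot\mt C(\vc c)\vc m=|\mt U\vc m+\vc c|^2-1$ with a Weyl-type bound on the spectrum of $\mt C$ in terms of that of $\mt U^2$, and to show that any critical value realized with $\vc m\cdot\mt C\vc m\neq0$ is at least $\min\{(2\lambda_1^2-\lambda_3^2)^2,\,1\}$; the standing bound on $\min_{\vc c}f$ then rules out this regime and leaves $\hat{\vc c}^\pm$ as the only minimizers.

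Having isolated $\hat{\vc c}^\pm$, I would compute $\mt F_\pm^T\mt F_\pm$ by direct expansion, setting $\vc p:=\mt U\vc m$, $\vc q:=\mt U^{-1}\vc m$, $r:=|\vc q|^{-1}$ and using $\vc p\cdot\vc q=1$; the nine resulting terms pair up so that all contributions carrying a factor of $\pm r$ telescope to zero, whence $\mt C(\hat{\vc c}^+)=\mt C(\hat{\vc c}^-)=:\mt C_*$, and a short direct check yields $\mt C_*\vc m=\vc 0$. Thus $0$ is an eigenvalue of the symmetric matrix $\mt C_*$, and its two remaining eigenvalues are the roots of $\mu^2-(\tr\mt C_*)\mu+\tfrac12\bigl((\tr\mt C_*)^2-\tr(\mt C_*^2)\bigr)=0$, giving exactly the closed-form expressions in the statement. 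Finally, $\|\mt C_*\|^2=0$ is equivalent to $\mt F_*:=\mt U+\hat{\vc c}^+\otimes\vc m\in O(3)$; since $\mt F_*\vc w=\mt U\vc w$ for every $\vc w\perp\vc m$, while $\mt F_*\vc m=\pm\vc q/|\vc q|$ is automatically of unit length and orthogonal to each such $\mt F_*\vc w$, orthogonality of $\mt F_*$ reduces to $|\mt U\vc w|=|\vc w|$ on $\vc m^\perp$, i.e., to the vanishing of $\mt U^2-\mt 1$ as a quadratic form on $\vc m^\perp$. This signature condition forces $\lambda_2=1$ together with $\vc v_2\in\vc m^\perp$ and a scalar ratio on the components of $\vc m$ in the plane $\vc v_2^\perp$, which after a one-line rearrangement match the characterization \eqref{ccperII} of the cofactor conditions for a type II twin; the incompatibility of \eqref{ccperI} (which demands $\vc v_2\cdot\vc m_I\neq0$) with $\vc v_2\in\vc m^\perp$ then produces the asymmetry between type I and type II twins asserted in the last sentence of the lemma.
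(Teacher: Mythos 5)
Your Case 1 analysis (the branch $\vc m\cdot\mt C(\vc c)\vc m=0$) is correct and, after untangling the bookkeeping, arrives at the same candidate minimizers $\hat{\vc c}^\pm$ and the same eigenvalue formula for $\mt C_*$ as the paper. But the key technical content of the lemma is precisely the exclusion of all other critical points, and there your proposal stops at a plan: you call the stationarity relation in the branch $\vc m\cdot\mt C\vc m\neq 0$ a ``transcendental fixed-point equation'' and say you ``expect the main obstacle to lie in this regime,'' offering only the intention to find a Weyl-type bound. This is a genuine gap, and the fix is actually a one-line linearization you already have implicitly: set $\vc v:=\vc c+\mt U\vc m$; your Euler--Lagrange equation then becomes $\bigl(\mt U^2-\mt 1-\mt U\vc m\otimes\mt U\vc m\bigr)\vc v=-|\vc v|^2\vc v$, so every critical $\vc v$ is either $\vc 0$ or an eigenvector of the fixed symmetric matrix $\mt D:=\mt U^2-\mt 1-\mt U\vc m\otimes\mt U\vc m$, with eigenvalue $-|\vc v|^2$. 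One of $\mt D$'s eigenvectors is $\vc w_1=\mt U^{-1}\vc m/|\mt U^{-1}\vc m|$ with eigenvalue $-1$, recovering $\hat{\vc c}^\pm$. For any other eigenvector one combines the elementary bound $\|\mt C(\vc v-\mt U\vc m)\|^2\geq |\vc m\cdot\mt C\vc m|^2=(1-|\vc v|^2)^2$ with a spectral estimate $\bar\lambda_2\geq 2\lambda_1^2-1-\lambda_3^2$ on the remaining eigenvalues of $\mt D$ (obtained from $\lambda_1/\lambda_3\leq |\vc w_1\cdot\mt U\vc m|/|\mt U\vc m|$ and a max/min argument on $\vc w_1^\perp$) to get $\|\mt C\|^2\geq\min\{(2\lambda_1^2-\lambda_3^2)^2,1\}$, which contradicts the standing assumption. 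Without this linearization, your plan is not obviously executable because $\vc c$ appears implicitly on both sides.

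On the final equivalence your route is genuinely different from the paper's and is a nice geometric shortcut: you reduce $\mt C_*=\mt 0$ to the orthogonality of $\mt F_*=\mt U+\hat{\vc c}^+\otimes\vc m$, then to the vanishing of $\mt U^2-\mt 1$ as a quadratic form on $\vc m^\perp$, and read off (CC1), (CC2) for a type II twin via \eqref{ccperII}, ruling out type I via \eqref{ccperI}. This is cleaner than the paper's converse, which instead builds the family $\tilde{\mt R}_a(\mt U+\mu\vc b\otimes\vc m)=\mt 1+\tilde{\mt R}_a(\hat{\vc c}^++\mu\vc b)\otimes\vc m$ and invokes Theorem \ref{thm cof cond}. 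However, the paper's detour through the theorem is what delivers (CC3); your algebraic characterization establishes only (CC1)--(CC2), and the lemma asserts all of (CC1)--(CC3). As it happens (CC3) does follow automatically here: writing $a=1-\lambda_1^2$, $b=\lambda_3^2-1$, one has $\tr\mt U^2-\det\mt U^2-2=ab$ and, using $|\mt U\hat{\vc e}|=1$ together with $|\hat{\vc e}|=1$, $\tfrac14|\vc b|^2|\vc m|^2=(a+b)s$ with $s=a\hat e_1^2=b\hat e_3^2\leq ab/(a+b)$. But you need to state and check this; as written it is an omission. You should also explicitly verify the forward direction that a type II twin satisfying the cofactor conditions forces the isotropy of $\mt U^2-\mt 1$ on $\vc m^\perp$, rather than leaving it implicit in the chain of ``equivalences.''
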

\begin{proof}
We first notice that $\|\mt C(\vc c)\|^2 $ is a smooth function of $\vc c$, so its minimizer $\vc c$ must satisfy the equation
\beq
\label{dazero}
\bigl(\mt U + \vc c\otimes \vc m\bigr) \bigl(\mt U^2-  \mt 1 + |\vc c|^2\vc m\otimes\vc m + \mt U\vc c\odot\vc m\bigr)\vc m = \vc 0,
\eeq
where we denoted by $\odot$ the symmetrised tensor product $\vc u\odot \vc w:=\vc u\otimes \vc w+\vc w\otimes \vc u $. We introduce a change of variable $\vc v = \vc c + \mt U\vc m$, so that, after rearranging the terms, \eqref{dazero} becomes
\beq
\label{eigenvectorSD}
\bigl(\mt U^2 -  \mt 1  - \mt U\vc m\otimes\mt U\vc m\bigr)\vc v = -|\vc v|^2\vc v.
\eeq
Therefore, $\vc v$ is either zero or is an eigenvector of $$\mt D := \bigl(\mt U^2 -  \mt 1  - \mt U\vc m\otimes\mt U\vc m\bigr),$$ and the related eigenvalue must me be equal to $-|\vc v|^2$. We first notice that $-1$ is an eigenvalue for $\mt D$ related to the eigenvector $\vc w_1 := \frac{\mt U^{-1}\vc m}{|\mt U^{-1}\vc m|}.$ Let $\vc w_2,\vc w_3$ be the other two eigenvectors of $\mt D$. We have
$$
\mt D = -\vc w_1\otimes\vc w_1 +\bar{\lambda}_2\vc w_2\otimes\vc w_2+\bar{\lambda}_3\vc w_3\otimes\vc w_3,
$$ 
for some $\bar{\lambda}_2,\bar{\lambda}_3\in\R, \bar{\lambda}_2\leq\bar{\lambda}_3.$ Let now $\alpha_1,\alpha_2,\alpha_3\in[0,1]$ be such that
$$
\frac{\mt U\vc m}{|\mt U\vc m|} = \alpha_1\vc w_1+\alpha_2\vc w_2 +\alpha_3\vc w_3.
$$
The following identity must hold 
\beq
\label{stimalpha1}
\alpha^2_1+\alpha^2_2+\alpha^2_3=1.
\eeq
Furthermore,
\beq
\label{stimalpha2}
\alpha_1 = \biggl|\vc w_1\cdot \frac{\mt U\vc m}{|\mt U\vc m|}\biggr| = \frac{1}{|\mt U\vc m||\mt U^{-1}\vc m|} \geq \frac{\lambda_1}{\lambda_3},
\eeq
where $\lambda_1,\lambda_3$ are respectively the minimum and the maximum eigenvalue of $\mt U$. Thus,
\beq
\label{passo1}
(\lambda_3^2-1) \geq \max_{\vc w\in\mathbb{S}^2, \vc w\bot\vc w_1} \mt D\vc w\cdot\vc w = \bar{\lambda}_3 \geq \bar{\lambda}_2 = \min_{\vc w\in\mathbb{S}^2, \vc w\bot\vc w_1} \mt D\vc w\cdot\vc w \geq (\lambda_1^2-1) - \max_{\vc w\in\mathbb{S}^2, \vc w\bot\vc w_1}|\mt U\vc m\cdot \vc w|^2. 
\eeq
The last term on the right hand side can be estimated by,
\beq
\label{passo1emezz}
- |\mt U\vc m\cdot \vc w|^2\geq- |\mt U\vc m|\bigl((\alpha_2\vc w_2+\alpha_3\vc w_3)\cdot\vc w\bigr)^2\geq -\lambda_3^2(\alpha_2^2+\alpha_3^3)\geq \lambda_1^2-\lambda_3^2,
\eeq
where we made use of \eqref{stimalpha1}--\eqref{stimalpha2} in the last inequality. Thus, collecting the inequalities in \eqref{passo1}--\eqref{passo1emezz}, we prove
\beq
\label{passo2}
\bar{\lambda}_2 \geq 2\lambda_1^2-1-\lambda_3^2.
\eeq
In this way, we have
$$
\|\mt C(\vc v -\mt U\vc m)\|^2\geq \max_{\vc w \in \mathbb{S}^2}\big| \mt C(\vc v -\mt U\vc m)\vc w\cdot\vc w \big|^2 \geq \big|\mt C(\vc v -\mt U\vc m)\vc m\cdot\vc m \big|^2 = (1-|\vc v|^2)^2. 
$$
Now, \eqref{eigenvectorSD} implies that, if $\vc v\neq \vc w_1,$ $|\vc v|^2$ is equal to $-\bar{\lambda}_2,-\bar{\lambda}_3,0$. Hence, by \eqref{passo1}--\eqref{passo2}, together with the assumption $2\lambda_1^2-\lambda_3^2>0$, we can conclude
$$
\|\mt C(\vc v -\mt U\vc m)\|^2\geq \min\{(2\lambda_1^2-\lambda_3^2)^2,1\}. 
$$
We have hence proved that, if $\min_{\vc c\in\R^3}\|\mt C(\vc c)\|^2 < \min\{(2\lambda_1^2-\lambda_3^2)^2,1\}$, the only minimizers are $\hat{\vc c}^\pm = \pm \vc w_1-\mt U\vc m.$ In this case, we obtain
$$
\mt C_* : = \mt C(\pm\vc w_1 -\mt U\vc m) = \mt U^2-\mt 1 + (1+|\mt U\vc m|^2)\vc m\otimes\vc m-\mt U^2\vc m\odot\vc m.
$$
Clearly, $\vc m$ is an eigenvector of $\mt C_*$ related to the null eigenvalue. Thus, the two remaining eigenvalues are given by the solutions $\rho$ of
$$
-\rho^2 + \tr(\mt C_*)\rho - \frac12\bigl((\tr(\mt C_*))^2-\tr(\mt C_*^2)\bigl) = 0.
$$
We now claim that, if $(\mt U,\vc b,\vc m)$ satisfies the cofactor conditions as a type II twin, then $\mt C_*=\mt 0$. Indeed, by \eqref{ccperII} we have that $\vc m\cdot \vc v_2 = 0$, where $\vc v_2$ is such that $\mt U\vc v_2 = \vc v_2$. Then, clearly $\vc w_1\cdot \vc v_2 = 0,$ and $\vc v_2$ is an eigenvector for $\mt C_*$ related to a null eigenvalue. Therefore, as $\mt C_*$ is symmetric, we just need to show that $\tr\mt C_*= \tr \mt U^2 -2 - |\mt U\vc m|^2 = \mt 0$. But if the cofactor conditions hold as a type II twin, then, thanks to Theorem \ref{TypeII} and \cite[Prop. 4]{BallJames1} we deduce that
$$
|\mt U\vc m|^2 =  (\lambda_1^2+\lambda_3^2-1),
$$  
that is $\tr \mt C_* = 0$, and thus $\mt C_*=\mt 0$. Conversely, let $\mt C_* = \mt 0$, and define $\tilde{\mt R}_a\in \R^{3\times 3}$ by
$$
\tilde{\mt R}_a^T = \mt U + \hat{\vc c}^\pm \otimes \vc m.
$$
First, by the fact that $\mt C_* =\mt 0$, we have $\tilde{\mt R}_a\tilde{\mt R}_a^T=\mt 1$, thus $\tilde{\mt R}_a^T$ is an orthogonal matrix, and $\det \tilde{\mt R}_a^T =\pm 1$. Furthermore, 
$$
\det \tilde{\mt R}_a = \det\mt U (1+ \mt U^{-1}\hat{\vc c}^\pm\cdot \vc m) = \pm |\mt U^{-1}\vc m|\det\mt U .
$$
Therefore, if we choose $\vc c_a = \hat{\vc c}^+$, $\tilde{\mt R}_a \in SO(3)$, we have
$$
\tilde{\mt R}_a \mt U = \mt 1 - \tilde{\mt R}_a{\vc c}_a\otimes\vc m,\qquad
\tilde{\mt R}_b \mt V - \tilde{\mt R}_a \mt U  = \tilde{\mt R}_a \vc b\otimes\vc m,\qquad
\tilde{\mt R}_b \mt V = \mt 1 - \tilde{\mt R}_a({\vc c}_a +\vc b)\otimes\vc m,
$$
for some $\tilde{\mt R}_b\in SO(3)$. We can hence construct laminates
$$
\tilde{\mt R}_a (\mt U +\mu \vc b\otimes \vc m) = \mu \tilde{\mt R}_v \mt V + (1-\mu)\tilde{\mt R}_a \mt U = \mt 1 + \tilde{\mt R}_a({\vc c}_a +\mu\vc b)\otimes\vc m,
$$ 
which are rank-one connected to $\mt 1$ for arbitrary volume fractions $\mu\in[0,1]$. As a consequence of Theorem \ref{thm cof cond} $(\mt U,\vc b,\vc m)$ satisfies the cofactor conditions. Furthermore, as $\mt C_*=\mt 0,$ $\vc v_2 := \vc m\times \mt U^2\vc m$ satisfies $\mt U\vc v_2 =\vc v_2$, and is thus the eigenvector corresponding to the eigenvalue of $\mt U$ equal to one. Therefore, \eqref{ccperI} finally implies that $(\mt U,\vc b,\vc m)$ cannot be a type I twin, and, by assumption, must hence be a type II twin.
\end{proof}

In a similar way, we can prove the following result
\begin{lemma}
\label{ILI}
Let $\mt U,\mt V\in\R^{3\times3}_{Sym^+}$ satisfy \eqref{e hat}, and let $(\mt U,\vc b,\vc m)$ be a type I or a type II solution of the twinning equation \eqref{compatib condit} for $\mt U,\mt V$. Define $$\mt E(\vc o) = (\mt U + \vc b\otimes\vc o)^T(\mt U + \vc b\otimes\vc o)-\mt1.$$ 
Assume also 
$$
\min_{\vc o\in\R^3} \|\mt E(\vc o)\|^2 < 1. 
$$
Then,
$$
\|\mt E(\hat{\vc o})\|^2 = \min_{\vc o\in\R^3} \|\mt E(\vc o)\|^2,
$$ 
if and only if $\hat{\vc o}=\hat{\vc o}^\pm =\pm \frac{\mt U^{-1}\vc b}{|\mt U^{-1}\vc b| |\vc b|} - \frac{\mt U\vc b}{|\vc b|^2}$. The eigenvalues of $\mt E_*=\mt E(\hat{\vc o}^+)=\mt E(\hat{\vc o}^-)$ are 
$$
\frac12\Bigl(\tr (\mt E_*)-\sqrt{2\tr(\mt E^2_*)-(\tr \mt E_*)^2} \Bigr),\quad 0,\quad \frac12\Bigl(\tr(\mt E_*)+\sqrt{2\tr(\mt E^2_*)-(\tr \mt E_*)^2} \Bigr).
$$
Finally, $\|\mt E(\hat{\vc o})\|^2 = 0$ if and only if $(\mt U,\vc b,\vc m)$ is a type I twin satisfying the cofactor conditions.
\end{lemma}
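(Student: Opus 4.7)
The plan is to reduce Lemma \ref{ILI} to Lemma \ref{ILII} via a spectral duality. Setting $\mt F(\vc o):=\mt U+\vc b\otimes\vc o$, we have $\mt E(\vc o)=\mt F(\vc o)^T\mt F(\vc o)-\mt 1$; because $\mt F^T\mt F$ and $\mt F\mt F^T$ share the same squared singular values, the symmetric matrices $\mt F^T\mt F-\mt 1$ and $\mt F\mt F^T-\mt 1$ have the same spectrum, and in particular the same Frobenius norm. Expanding $\mt F\mt F^T=(\mt U+\vc o\otimes\vc b)^T(\mt U+\vc o\otimes\vc b)$, and introducing the unit vector $\hat{\vc b}:=\vc b/|\vc b|\in\mathbb S^2$ together with the rescaled variable $\tilde{\vc c}:=|\vc b|\vc o$, so that $\vc o\otimes\vc b=\tilde{\vc c}\otimes\hat{\vc b}$, one obtains
$$
\|\mt E(\vc o)\|^2=\bigl\|(\mt U+\tilde{\vc c}\otimes\hat{\vc b})^T(\mt U+\tilde{\vc c}\otimes\hat{\vc b})-\mt 1\bigr\|^2,
$$
which is precisely the functional treated in Lemma \ref{ILII} with $\vc m$ replaced by $\hat{\vc b}$.

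Invoking Lemma \ref{ILII} with this substitution, the minimisers $\hat{\tilde{\vc c}}^\pm=\pm\mt U^{-1}\hat{\vc b}/|\mt U^{-1}\hat{\vc b}|-\mt U\hat{\vc b}$, after undoing the rescaling $\vc o=\tilde{\vc c}/|\vc b|$, yield exactly the claimed $\hat{\vc o}^\pm=\pm\mt U^{-1}\vc b/(|\mt U^{-1}\vc b||\vc b|)-\mt U\vc b/|\vc b|^2$. The eigenvalue formulas for $\mt E_*$ transfer without change, because $\mt E_*$ and the analogous $\mt C'_*$ from Lemma \ref{ILII} are $\mt F_*^T\mt F_*-\mt 1$ and $\mt F_*\mt F_*^T-\mt 1$ respectively, hence similar matrices sharing $\tr\mt E_*$ and $\tr(\mt E_*^2)$; the vanishing eigenvalue is inherited from the null direction of $\mt C'_*$.

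For the final characterisation, the last sentence of Lemma \ref{ILII} applied with $\vc m=\hat{\vc b}$ gives $\|\mt E_*\|^2=0$ if and only if (CC1) holds and $\vc v_2\cdot\hat{\vc b}=0$, equivalently $\vc v_2\cdot\vc b=0$. By the dichotomy \eqref{ccperI}--\eqref{ccperII} this latter condition singles out the type I case; together with (CC3), automatic once a twin exists, this is exactly $(\mt U,\vc b,\vc m)$ being a type I twin satisfying the cofactor conditions. As a sanity check I would also run the construction directly (mirroring the tail of the proof of Lemma \ref{ILII}): $\mt E_*=\mt 0$ yields $\tilde{\mt R}:=\mt U+\vc b\otimes\hat{\vc o}^+\in O(3)$, and after selecting the sign so that $\tilde{\mt R}\in SO(3)$ the laminate $\tilde{\mt R}^T(\mt U+\mu\vc b\otimes\vc m)=\mt 1+\tilde{\mt R}^T\vc b\otimes(\mu\vc m-\hat{\vc o}^+)$ is rank-one compatible with austenite for every $\mu\in[0,1]$; Theorem \ref{thm cof cond} then delivers the cofactor conditions, while the parallelism of the austenite-martensite rank-one vector with $\tilde{\mt R}^T\vc b$ identifies the twin as type I via Theorem \ref{TypeI}, ruling out the type II structure of Theorem \ref{TypeII}.

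The main obstacle I anticipate is that the stated hypothesis $\min\|\mt E\|^2<1$ is apparently weaker than the joint hypothesis $\min<\min\{(2\lambda_1^2-\lambda_3^2)^2,1\}$ with $2\lambda_1^2>\lambda_3^2$ used in Lemma \ref{ILII} to rule out spurious critical points through the estimate $\bar\lambda_2\ge 2\lambda_1^2-1-\lambda_3^2$. Under the spectral reduction these side conditions are inherited verbatim (the spectrum of $\mt U$ is unchanged), so the statement of Lemma \ref{ILI} should be read as tacitly carrying them; if only $\min\|\mt E\|^2<1$ is available, a refined estimate on the non-minimising critical values of $\|\mt E\|^2$ is required, which I would derive by direct inspection of $\mt D_{\hat{\vc b}}:=\mt U^2-\mt 1-\mt U\hat{\vc b}\otimes\mt U\hat{\vc b}$ exactly as in the proof of Lemma \ref{ILII}.
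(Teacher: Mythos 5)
Your spectral-duality reduction --- writing $\mt E(\vc o)=\mt F^T\mt F-\mt 1$ with $\mt F=\mt U+\vc b\otimes\vc o$ and identifying $\mt F\mt F^T-\mt 1$ with the functional $\mt C(|\vc b|\vc o)$ of Lemma~\ref{ILII} taken with $\vc m=\hat{\vc b}$ --- is a genuinely different route from the paper, which proves Lemmas~\ref{ILI} and~\ref{ILII} in parallel by analogous direct computations, and it does correctly deliver the minimisers $\hat{\vc o}^\pm$ and the eigenvalue formula. But the gap you flag at the end is real, and your proposed fix would not close it. You would ``derive a refined estimate \ldots\ by direct inspection of $\mt D_{\hat{\vc b}}$ exactly as in the proof of Lemma~\ref{ILII},'' yet the Lemma~\ref{ILII} estimate is precisely what is too weak: the bound $\bar\lambda_2\geq 2\lambda_1^2-1-\lambda_3^2$ only gives $\geq\min\{(2\lambda_1^2-\lambda_3^2)^2,1\}$ at spurious critical points, and only under the extra constraint $2\lambda_1^2>\lambda_3^2$. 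The paper proves Lemma~\ref{ILI} under the stated hypothesis $\min\|\mt E\|^2<1$ alone by a \emph{sharper argument specific to the $\mt E$ structure}: writing $\vc v=|\vc b|^2\vc o+\mt U\vc b$ one finds $\mt E=\mt U^2-\mt 1-|\vc b|^{-2}\mt U\vc b\otimes\mt U\vc b+|\vc b|^{-2}\vc v\otimes\vc v$, and for any stationary $\vc v$ orthogonal to $\vc w_1:=\mt U^{-1}\vc b/|\mt U^{-1}\vc b|$ the last rank-one term annihilates $\vc w_1$, so $\mt E\vc w_1=-\vc w_1$ and $\|\mt E\|^2\geq1$ unconditionally. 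That trick does not transfer to the general $\mt C(\vc c)$ of Lemma~\ref{ILII}, whose quadratic term $|\vc c|^2\vc m\otimes\vc m$ has a fixed direction rather than the moving direction $|\vc b|^2\vc o\otimes\vc o$, so $\vc w_1$ fails to be an eigenvector of $\mt C$ at the spurious critical points. Thus ``exactly as in Lemma~\ref{ILII}'' cannot produce the needed bound, and your suggestion that Lemma~\ref{ILI} should be ``read as tacitly carrying'' the Lemma~\ref{ILII} side conditions is incorrect: the lemma is correctly stated and proven with the weaker hypothesis.

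A second, smaller issue: ``applying the last sentence of Lemma~\ref{ILII} with $\vc m=\hat{\vc b}$'' is not a licit invocation of that lemma, since $(\mt U,\cdot,\hat{\vc b})$ is not a twinning triple, and the forward direction of the Lemma~\ref{ILII} characterisation uses $|\mt U\vc m|^2=\lambda_1^2+\lambda_3^2-1$ obtained from Theorem~\ref{TypeII}; one has to redo that trace computation via Theorem~\ref{TypeI}, which is exactly what the paper does to show $|\mt U\vc b|^2/|\vc b|^2=\lambda_1^2+\lambda_3^2-1$. Your ``sanity check'' construction is essentially the paper's converse and is sound; the type identification is cleaner in the paper's form, via $\vc v_2\cdot\vc b=\vc v_2\cdot\mt U\vc b=0$ and \eqref{ccperII} to exclude type II directly, than by appealing to Theorem~\ref{TypeI}.
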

\begin{proof}
Again, $\|\mt E(\vc o)\|^2 $ is a smooth function of $\vc o$, so its minimizer $\vc o$ must be a stationary point, that is
\[
\bigl(\mt U^2-  \mt 1 + |\vc b|^2\vc o\otimes\vc o + \mt U\vc b\odot\vc o\bigr)\bigl(\mt U\vc b + |\vc b|^2  \vc o\bigr)  = \vc 0.
\]
The change of variable $\vc v = |\vc b|^2 \vc o + \mt U\vc b$ thus entails
$$
\biggl(\mt U^2 -  \mt 1  -\frac1{|\vc b|^2}\mt U\vc b\otimes \mt U\vc b \biggr)\vc v = -\frac{|\vc v|^2}{|\vc b|^2}\vc v.
$$
As in the case of Lemma \ref{ILII}, $\vc v$ is either zero or is an eigenvector of $$\mt F := \mt U^2 -  \mt 1  -\frac1{|\vc b|^2}\mt U\vc b\otimes \mt U\vc b ,$$ and the related eigenvalue must me be equal to $-\frac{|\vc v|^2}{|\vc b|^2}$. As in the previous case, $-1$ is an eigenvalue for $\mt F$ related to the eigenvector $\vc w_1 := \frac{\mt U^{-1}\vc b}{|\mt U^{-1}\vc b|}.$ If we choose $\vc v\neq \pm\vc  w_1,$ then it must hold $\vc v\cdot \vc w_1 = 0,$ and therefore
$$
\mt E\Bigl(	\pm 
\frac{\vc v}{|\vc b|^2}-\frac {\mt U\vc b}{|\vc b|^2}\Bigr) = \mt U^2 -  \mt 1  -\frac1{|\vc b|^2}\mt U\vc b\otimes \mt U\vc b +\frac{1}{|\vc b|^2}\vc v\otimes\vc v,
$$
has $-1$ as eigenvalue related to the eigenvector $\vc w_1$. In this case, hence,
$$
\Bigr \| \mt E\Bigl(	\pm \frac{\vc v}{|\vc b|^2}-\frac {\mt U\vc b}{|\vc b|^2}\Bigr) \Bigl\|^2 \geq \max_{\vc w\in\mathbb{S}^2}\Bigr | \mt E\Bigl(	\pm \frac{\vc v}{|\vc b|^2}-\frac {\mt U\vc b}{|\vc b|^2}\Bigr)\vc w\cdot\vc w \Bigl|^2\geq \Bigr | \mt E\Bigl(	\pm \frac{\vc v}{|\vc b|^2}-\frac {\mt U\vc b}{|\vc b|^2}\Bigr)\vc w_1\cdot\vc w_1 \Bigl|^2 \geq 1. 
$$
Therefore, $\|\mt E(\vc n)\| <1$ if and only if $\vc v=\pm|\vc b|^{-1}\vc w_1$. On the other hand,
$$
 \mt E_* : = \mt E\Bigl(	\pm 
\frac{\vc w_1}{|\vc b|}-\frac {\mt U\vc b}{|\vc b|^2}\Bigr) = \mt U^2 -  \mt 1  -\frac1{|\vc b|^2}\mt U\vc b\otimes \mt U\vc b + \vc w_1\otimes\vc w_1,
$$
has always $0$ as an eigenvalue related to the eigenvector $\vc w_1$. Therefore the other two eigenvalues are given by the solutions $\rho$ of
$$
-\rho^2 + \tr(\mt E_*)\rho - \frac12\bigl((\tr(\mt E_*))^2-\tr(\mt E_*^2)\bigl) = 0.
$$
We now want to prove that, if $(\mt U,\vc b,\vc m)$ satisfies the cofactor conditions as a type I twin, then $\mt E_*=0.$ By \eqref{ccperI} we have that $\vc b\cdot \vc v_2 = 0$, where $\vc v_2$ is such that $\mt U\vc v_2 = \vc v_2$. So that $\vc v_2$ is also an eigenvector of $\mt E_*$, and $\mt E_*\vc v_2 =\vc 0$, as much as $\vc w_1\cdot\vc v_2 = 0.$ Therefore, we just need to check that $\tr \mt E_* = 0,$ that is, $|\vc b|^{-2}|\mt U\vc b|^2 = (\lambda_1^2+\lambda_3^2-1)$. But again, 
by Theorem \ref{TypeI}, we know that $|\vc b|^{-1}\mt R_{\mt U}\vc b = |\vc a|^{-1}\vc a$, and 
$$
|\vc b|^{-1}\mt U\vc b = |\vc b|^{-1}\mt U \mt R_{\mt U}^T\mt R_{\mt U}\vc b = |\vc a|^{-1}(\mt 1+\vc n_{\mt U}\otimes\vc a)\vc a .
$$
Therefore, exploiting \cite[Prop. 4]{BallJames1} we thus deduce
$$
\frac{|\mt U\vc b|^2}{|\vc b|^2}= \bigg\|\frac{\vc a}{\|\vc a\|} + \|\vc a\|\vc n_{\mt U}\bigg\|^2 = (\lambda_1^2+\lambda_3^2-1),
$$  
as required. Conversely, let $\mt E_* = \mt 0$, and define $\tilde{\mt R}_a\in \R^{3\times 3}$ by 
$$
\tilde{\mt R}_a^T = \mt U + \vc b\otimes \hat{\vc o}^\pm .
$$
As $\mt E_* =\mt 0$, we have $\tilde{\mt R}_a\tilde{\mt R}_a^T=\mt 1$, thus $\tilde{\mt R}_a^T$ is an orthogonal matrix, and $\det \tilde{\mt R}_a^T =\pm 1$. Furthermore, 
$$
\det \tilde{\mt R}_a = \det\mt U (1+ \mt U^{-1}\vc b\cdot \hat{\vc o}^\pm) = \pm |\mt U^{-1}\vc b|\det\mt U.
$$
Therefore, if we choose $\vc o_a = \hat{\vc o}^+$, $\tilde{\mt R}_a \in SO(3)$, we have
$$
\tilde{\mt R}_a \mt U = \mt 1 - \tilde{\mt R}_a{\vc b}\otimes\vc o_a,\qquad
\tilde{\mt R}_b \mt V - \tilde{\mt R}_a \mt U  = \tilde{\mt R}_a \vc b\otimes\vc m,\qquad
\tilde{\mt R}_b \mt V = \mt 1 - \tilde{\mt R}_a\vc b\otimes(\vc o_a+\vc m),
$$
for some $\tilde{\mt R}_b\in SO(3)$. We can hence construct laminates compatible with $\mt 1$ for arbitrary volume fractions, and hence, by Theorem \ref{thm cof cond} the cofactor conditions must be satisfied. Furthermore, as $\mt E_*=\mt 0,$ $\vc v_2 := (\mt U\vc b)\times (\mt U^{-1}\vc b)$ satisfies $\mt U\vc v_2 =\vc v_2$, and hence $\vc v_2$ is the eigenvector related to the eigenvalue of $\mt U$ equal to one. Therefore, $\vc v_2\cdot \vc b = \vc v_2 \cdot \mt U\vc b =0$, and \eqref{ccperII} finally implies that $(\mt U,\vc b,\vc m)$ cannot be a type II twin. As a consequence, by hypotheses, $(\mt U,\vc b,\vc m)$ must be a type I twin.
\end{proof}
\begin{remark}
\rm
{Lemma \ref{ILII} and Lemma \ref{ILI} can be also used to measure how far a compound twin is to form triple junctions. Indeed, triple junctions can arise if and only if either $\mt C_*=\mt 0$ or $\mt E_*=\mt 0$. After some computations we obtain that, for a twin generated by $\mt U_1,\mt U_2\in\R^{3\times3}$, with $\mt U_1,\mt U_2$ as in \eqref{cubictomono} (and hence all symmetry related twins in Table \ref{Table 1}) 
\beq
\label{Nolayerabove1}
\begin{split}
\mt C_* &= \mt 0 \quad\Longleftrightarrow\quad
d=1 \text{ and either } a^2+b^2 = \det \mt U^2 \text{ or } c^2+b^2 = \det \mt U^2,\\
\mt E_* &= \mt 0\quad\Longleftrightarrow\quad
 d=1 \text{ and either } a^2+b^2 = 1 \text{ or } c^2+b^2 = 1.
\end{split}
\eeq
For a twin generated by $\mt U_1,\mt U_3\in\R^{3\times3}$, with $\mt U_1,\mt U_3$ as in \eqref{cubictomono} (and hence all symmetry related twins in Table \ref{Table 1}) 
\beq
\label{Nolayerabove2}
\begin{split}
\mt C_* &= \mt 0\quad\Longleftrightarrow\quad
 d=1 \text{ and either } (a+b)^2+(b+c)^2 = 2\det \mt U^2 \text{ or } (a-b)^2+(b-c)^2 = 2\det \mt U^2,\\
\mt E_* &= \mt 0\quad\Longleftrightarrow\quad
 d=1 \text{ and either } (a+b)^2+(b+c)^2 = 2 \text{ or } (a-b)^2+(b-c)^2 = 2.
\end{split}
\eeq
It would be interesting to construct a new alloy satisfying the cofactor condition with a compoud twin ($d=1$ in \eqref{cubictomono} or \eqref{cubictoortho} for cubic to monoclinic and cubic to orthorhombic transformations), but not satisfying any of the conditions in \eqref{Nolayerabove1}--\eqref{Nolayerabove2} above (or their equivalent if the transformation is not from cubic to monoclinic or from cubic to orthorhombic), allowing the twin to form triple junctions. This would help to better understand the influence of the lack of transition layer between phases on the great reversibility of the transformation observed in materials satisfying the cofactor conditions. 
}
\end{remark}

By putting together \eqref{approx def lin} and \eqref{energia linear}, together with Lemma \ref{ILII} and Lemma \ref{ILI} we have that, if we are close to satisfy the cofactor conditions, the stress induced by $\epsilon$ is given by
{\begin{align*}
\sigma(\epsilon) = {\frac G2}\mt E_* 
, &\qquad\text{for type I},\\
\sigma(\epsilon) = {\frac G2}\mt C_* 
, &\qquad\text{for type II},
\end{align*}
and where
$$
\mt E_* : =  \mt U^2 -  \mt 1  -\frac1{|\vc b|^2}\mt U\vc b\otimes \mt U\vc b + \frac{\mt U^{-1}\vc b}{|\mt U^{-1}\vc b|}\otimes\frac{\mt U^{-1}\vc b}{|\mt U^{-1}\vc b|}, \qquad \mt C_* : =  \mt U^2 -  \mt 1  + (1+|\mt U\vc m|^2) \vc m\otimes \vc m - \mt U^2\vc m\odot\vc m .
$$
If we think now at reversibility as the lack of plastic effects during the phase transition, we can say that we are close to satisfy the cofactor conditions if our principal stresses satisfy some yield criterion. Adopting for simplicity Tresca's yield criterion, we can say that we are closely satisfying the cofactor conditions if
\begin{align*}
{\frac G4} \max\{|\lambda_1^{\mt E_*}-\lambda_2^{\mt E_*}|,|\lambda_1^{\mt E_*}-\lambda_3^{\mt E_*}|,|\lambda_2^{\mt E_*}-\lambda_3^{\mt E_*}|\} \leq \sigma_C, &\qquad\text{for type I},\\
\frac G4 \max\{|\lambda_1^{\mt C_*}-\lambda_2^{\mt C_*}|,|\lambda_1^{\mt C_*}-\lambda_3^{\mt C_*}|,|\lambda_2^{\mt C_*}-\lambda_3^{\mt C_*}|\} \leq \sigma_C, &\qquad\text{for type II},
\end{align*}
where $\sigma_C$ is the shear yield stress and $\lambda_i^{\mt E_*},\lambda_i^{\mt C_*}$, $i=1,2,3$ are respectively the eigenvalues of $\mt E_*$ and $\mt C_*$. We recall that, as proved in Lemma \ref{ILI} and in Lemma \ref{ILII}, at least one of the $\lambda_i^{\mt E_*}$ and one of the $\lambda_i^{\mt C_*}$ should be equal to zero. In the lack of experimental values for $G$ and $\sigma_C$, we can measure how closely the cofactor conditions are satisfied in a certain alloy by computing
\begin{align}
\label{NewMetric1}
\max\{|\lambda_1^{\mt E_*}-\lambda_2^{\mt E_*}|,|\lambda_1^{\mt E_*}-\lambda_3^{\mt E_*}|,|\lambda_2^{\mt E_*}-\lambda_3^{\mt E_*}|\}, &\qquad\text{for type I twins},\\
\label{NewMetric2}
\max\{|\lambda_1^{\mt C_*}-\lambda_2^{\mt C_*}|,|\lambda_1^{\mt C_*}-\lambda_3^{\mt C_*}|,|\lambda_2^{\mt C_*}-\lambda_3^{\mt C_*}|\}, &\qquad\text{for type II twins}. 
\end{align}
We report in Table \ref{Table Fin} the results obtained by computing \eqref{NewMetric1}--\eqref{NewMetric2} for Zn\textsubscript{45}Au\textsubscript{30}Cu\textsubscript{25} and Ti\textsubscript{74}Nb\textsubscript{23}Al\textsubscript{3}. Values for Zn\textsubscript{45}Au\textsubscript{30}Cu\textsubscript{25} are $0.0173$ and $0.0021$ respectively for type I and type II twins. These values seem to confirm that the satisfaction of the cofactor conditions in this material is much closer with some type II twins than with type I twins, in agreement with Proposition \ref{nonsipuo}. Values for Ti\textsubscript{74}Nb\textsubscript{23}Al\textsubscript{3} are  $0.0267$ and $0.0225$ respectively for type I and type II twins, confirming that triple junctions require high elastic energy in this material. 

In Figure \ref{Schema Fin} we summarise our algorithm to verify how closely the cofactor conditions are satisfied.
\begin {table}[h]
\begin{center}
\begin{tabular}{ |l|l|l|l|l| }
\hline
Twin\& Metric / Material & \Zn  & Ti\textsubscript{74}Nb\textsubscript{23}Al\textsubscript{3} \\ \hline
$\big|\lambda_2-1\big|$& $6.1\times 10^{-4}$ & $3.7\times 10^{-6}$
\\
$|\vc b\cdot\mt U\cof(\mt U^2-\mt 1)\vc m|$ for type I twins & $4.1\times 10^{-5}$ & $4.4\times 10^{-5}$\\
$|\vc b\cdot\mt U\cof(\mt U^2-\mt 1)\vc m|$ for type II twins & $3.8\times 10^{-5}$ & $3.8\times 10^{-5}$\\
$\big||\mt U^{-1}\hat{\vc e}|-1\big|$ for type I twins & $8.1\times 10^{-3}$ & $9.9\times 10^{-3}$\\
$\big||\mt U\hat{\vc e}|-1\big|$ for type II twins & $4.2\times 10^{-4}$ & $8.3\times 10^{-3}$\\
\eqref{NewMetric1} for type I twins & $1.7\times 10^{-2}$ & $2.7\times 10^{-2}$\\
\eqref{NewMetric2} for type II twins & $2.1\times 10^{-3}$ & $2.3\times 10^{-2}$ \\
\hline
\end{tabular}
\end{center}
\caption {
\label{Table Fin} {Comparison between different metrics to measure the cofactor conditions for type I and type II twins in \Zn\, and Ti\textsubscript{74}Nb\textsubscript{23}Al\textsubscript{3}. In the last two lines, we used the new metric introduced in \eqref{NewMetric1}--\eqref{NewMetric2}.
}}
\end{table}

\begin{figure}

\tikzstyle{abstract}=[rectangle, draw=black, 
text centered, anchor=north, text=black, text width=7.5cm]

\tikzstyle{myarrow}=[<-, >=open triangle 90, thick]
\tikzstyle{line}=[-, thick]
        
\begin{center}
\begin{tikzpicture}[node distance=2cm]

    \node (ItemA) [abstract]
        {Find the unique $\hat{\vc e}$ such that \\$\mt U = (2\hat{\vc e}\otimes\hat{\vc e} -\mt 1)\mt V(2\hat{\vc e}\otimes\hat{\vc e} -\mt 1)$};
    \node (BN) [text width=4cm, below=of ItemA] {};
\begin{scope}[node distance=2.5 and 4.5cm,on grid]   
 \node (BA) [abstract, left=of BN]
        {Construct $$\vc m = \vc e,\qquad\vc b=2\Big(\frac{\mt U^{-1}\hat{\vc e}}{|\mt U^{-1}\hat{\vc e}|^2}-\mt U\hat{\vc e}\Big)$$ };
    \node (BB) [abstract, right=of BN]
        {
        {Construct $$\vc m = 2\Big(\hat{\vc e}-\frac{\mt U^{2}\hat{\vc e}}{|\mt U\hat{\vc e}|^2}\Big),\qquad\vc b=\mt U\hat{\vc e}$$ }
                };
 \node (CA) [abstract, below=of BA]
        {Construct $$\mt E_* : =  \mt U^2 -  \mt 1  -\frac1{|\vc b|^2}\mt U\vc b\otimes \mt U\vc b + \frac{\mt U^{-1}\vc b}{|\mt U^{-1}\vc b|}\otimes\frac{\mt U^{-1}\vc b}{|\mt U^{-1}\vc b|}$$ 
 };

 \node (CB) [abstract, below=of BB]
        {Construct $$\mt C_* : =  \mt U^2 -  \mt 1  + (1+|\mt U\vc m|^2) \vc m\otimes \vc m - \mt U^2\vc m\odot\vc m $$
        where $ \mt U^2\vc m\odot\vc m = \mt U^2\vc m\otimes\vc m+ \vc m \otimes \mt U^2\vc m$
 };
 
  \node (DA) [abstract, below=of CA]
        {Compute the eigenvalues of $\mt E_*$, namely $\lambda_1^{\mt E_*},\lambda_2^{\mt E_*},\lambda_3^{\mt E_*}$ (one should be zero)
 };

  \node (DB) [abstract, below=of CB]
        {Compute the eigenvalues of $\mt C_*$, $\lambda_1^{\mt C_*},\lambda_2^{\mt C_*},\lambda_3^{\mt C_*}$ (one should be zero) 
 };
 
  \node (EA) [abstract, below=of DA]
        {Compute $$\max\{|\lambda_1^{\mt E_*}-\lambda_2^{\mt E_*}|,|\lambda_1^{\mt E_*}-\lambda_3^{\mt E_*}|,|\lambda_2^{\mt E_*}-\lambda_3^{\mt E_*}|\}	$$ a measure of how closely the cofactor conditions are satisfied by type I twins
        };

  \node (EB) [abstract, below=of DB]
        {Compute $$\max\{|\lambda_1^{\mt C_*}-\lambda_2^{\mt C_*}|,|\lambda_1^{\mt C_*}-\lambda_3^{\mt C_*}|,|\lambda_2^{\mt C_*}-\lambda_3^{\mt C_*}|\}	$$ a measure of how closely the cofactor conditions are satisfied by type II twins
 };

    \draw[myarrow] (BA.north) -- ++(0,0.8) -| (ItemA.south);
    \draw[myarrow] (BB.north) -- ++(0,0.8) -| (ItemA.south);
        \draw[myarrow] (CA.north) -- ++(0,0.4) -| (BA.south);
        \draw[myarrow] (CB.north) -- ++(0,0.2) -| (BB.south);
                \draw[myarrow] (DA.north) -- ++(0,0.8) -| (CA.south);
        \draw[myarrow] (DB.north) -- ++(0,0.7) -| (CB.south);
                \draw[myarrow] (EA.north) -- ++(0,0.3) -| (DA.south);
        \draw[myarrow] (EB.north) -- ++(0,0.3) -| (DB.south);
        \end{scope}
\end{tikzpicture}
\end{center}
\caption {\label{Schema Fin} {Algorithm to compute how closely a type I/II twin generated by the martensite variants $\mt U,\mt V$ satisfies the cofactor conditions. The proposed metric replaces the measurement of (CC1)--(CC3). A possible way to compute $\hat{\textbf{e}}$
satisfying \eqref{e hat} can be found in \cite[Appendix A]{JamesHyst} (see also Table \ref{Table 1} and Table \ref{Table 2} respectively for cubic to monoclinic and cubic to orthorhombic transformations). Our new algorithm can be applied also to measure how closely a compound twin can form triple junctions with austenite. In this case, however, there exists two different $\hat{\vc e}$ such that $\mt U = (2\hat{\vc e}\otimes\hat{\vc e} -\mt 1)\mt V(2\hat{\vc e}\otimes\hat{\vc e} -\mt 1)$. One thus has to run the algorithm twice, once with each $\hat{\vc e}$. For compound twins in cubic to monoclinic II transformations (see \eqref{cubictomono}), a simple way to check how closely (CC1)--(CC2) are satisfied is to compute $|d-1|$ (cf. Remark \ref{Solod1ecof}).
}}
\end{figure}
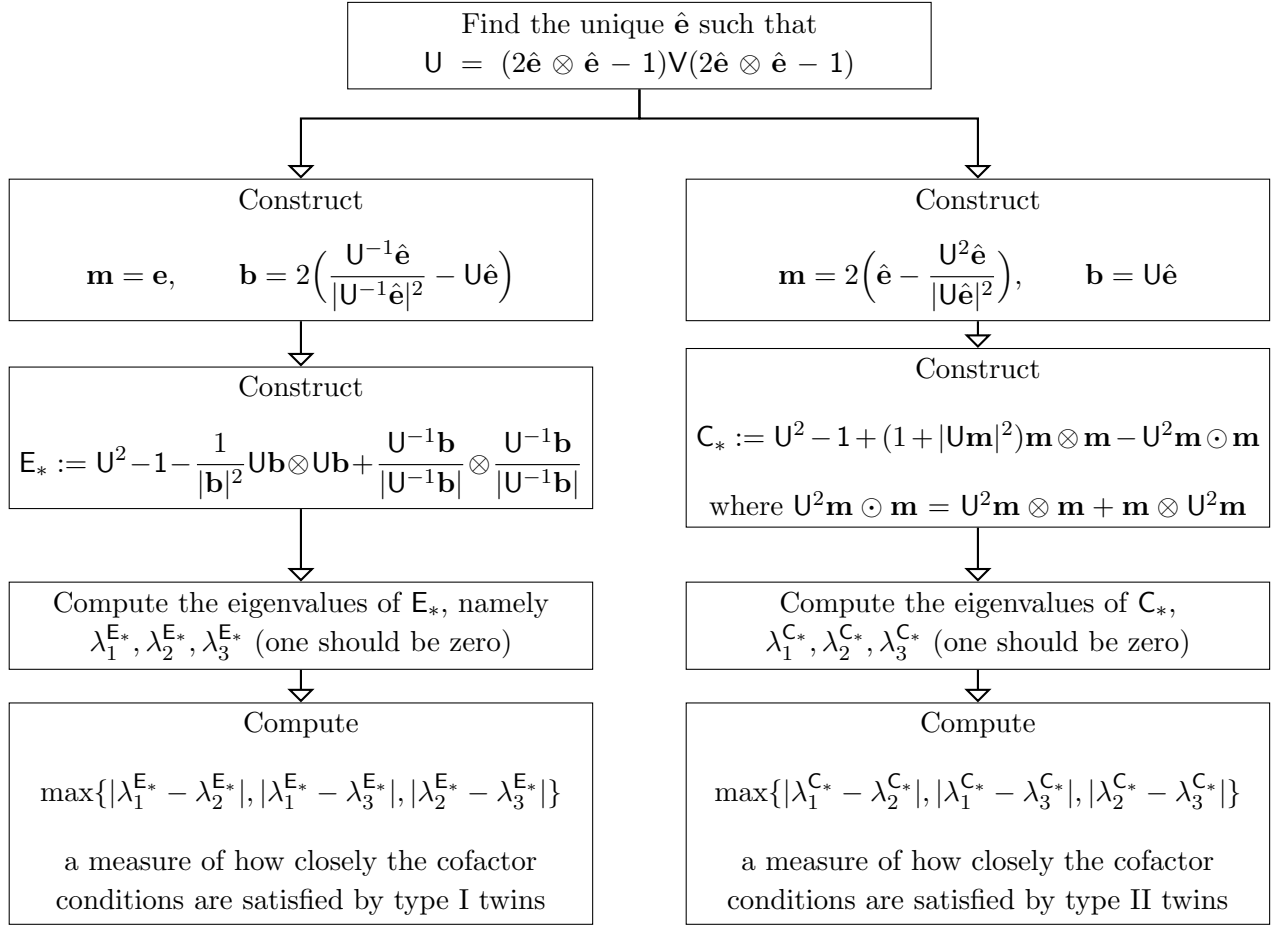

\footnotesize
\bibliographystyle{plain}
\bibliography{biblio}

\end{document}